\newtheorem{thm}{Theorem}[section] %(If you want theorem numbered
\newtheorem{lemma}{Lemma}[section] %%    with section number.
\newtheorem{cor}{Corollary}[section]
\newtheorem{definition}{Definition}[section]
\newcommand{\hard}{\mathrm{hard}}
\newcommand{\supp}{\mathrm{supp}}
\newcommand{\argmin}{\mathop{\mathrm{arg \ min}}}
\newcommand{\real}{\mathbb{R}}
\newcommand{\complex}{\mathbb{C}}
\newcommand{\integers}{\mathbb{Z}}
\newcommand{\cF}{\mathcal{F}}
\newcommand{\cM}{\mathcal{M}}
\newcommand{\cB}{\mathcal{B}}
\newcommand{\cT}{\mathcal{T}}
\newcommand{\cE}{\mathcal{E}}
\newcommand{\cI}{\mathcal{I}}
\newcommand{\cX}{\mathcal{X}}
\newcommand{\cS}{\mathcal{S}}
\newcommand{\cP}{\mathcal{P}}
\newcommand{\cR}{\mathcal{R}}
\newcommand{\cN}{\mathcal{N}}
\newcommand{\cPhi}{\Phi}
\newcommand{\bB}{{\boldsymbol{B}}}
\newcommand{\bQ}{{\boldsymbol{Q}}}
\newcommand{\bE}{\boldsymbol{E}}
\newcommand{\bU}{\boldsymbol{U}}
\newcommand{\bP}{\boldsymbol{P}}
\newcommand{\br}{\boldsymbol{r}}
\newcommand{\bx}{\boldsymbol{x}}
\newcommand{\bPsi}{\boldsymbol{\Psi}}
\newcommand{\balpha}{\boldsymbol{\alpha}}
\newcommand{\bA}{\boldsymbol{A}}
\newcommand{\by}{\boldsymbol{y}}
\newcommand{\bR}{\boldsymbol{R}}
\newcommand{\bS}{\boldsymbol{S}}
\newcommand{\be}{\boldsymbol{e}}
\newcommand{\bi}{\boldsymbol{i}}
\newcommand{\bh}{\boldsymbol{h}}
\newcommand{\bq}{\boldsymbol{q}}
\newcommand{\bs}{\boldsymbol{s}}
\newcommand{\bz}{\boldsymbol{z}}
\newcommand{\bw}{\boldsymbol{w}}
\newcommand{\bLambda}{\boldsymbol{\Lambda}}
\newcommand{\blambda}{\boldsymbol{\lambda}}
\newcommand{\bI}{\boldsymbol{I}}
\newcommand{\bzero}{\boldsymbol{0}}
\newcommand{\bnyq}{B_{\mathrm{nyq}}} % total analog bandwidth of our signal (from -B_nyq/2 to B_nyq/2)
\newcommand{\bband}{B_{\mathrm{band}}} % analog bandwidth of the "bands" in our multiband model (from the left edge of the band to the right edge)
\newcommand{\tsamp}{T_{\mathrm{s}}} % Nyquist sampling interval (=1/B_nyq)
\newcommand{\twind}{T_{\mathrm{w}}} % Length of time window
\newcommand{\slepw}{W} % digital half-bandwidth of the "bands" in our multiband model (=B_band/(2*B_nyq))
\newcommand{\norm}[2][2]{\left\| #2 \right\|_{#1}}
\newcommand{\iprod}[2]{\left\langle {#1},{#2} \right\rangle}
\newcommand{\expval}[1]{\mathbb{E} \left[ #1 \right]}
\newcommand{\sinc}[1]{\mathop{\mathrm{sinc}}\left( #1 \right)}
\newcommand{\prob}[1]{\mathbb{P} \left[ #1 \right]}
\newcommand{\sampe}[1]{\be_{#1}}
\newlength{\imgwidth}
\title{{\bf Compressive Sensing of Analog Signals Using \\ Discrete Prolate Spheroidal Sequences}}
\author{Mark A.\ Davenport$^s$ and Michael B.\ Wakin$^c$\thanks{
Corresponding author. Phone: (303) 273-3607. Fax: (303) 273-3602.\newline
\indent ~ This work was partially supported by NSF grants DMS-1004718 and CCF-0830320, DARPA grant FA8650-08-C-7853, and AFOSR grant FA9550-09-1-0465.}\\[2mm]
\small $^s$Department of Statistics, Stanford University, \\[-1mm]
\small 390 Serra Mall, Sequoia Hall, Stanford, CA 94305, USA. Email: markad@stanford.edu\\[1mm]
\small $^c$Department of Electrical Engineering and Computer Science, Colorado School of Mines, \\[-1mm]
\small 1500 Illinois St., Golden, CO 80401, USA. Email: mwakin@mines.edu
}
\date{September 2011; revised March 2012}
\begin{document}
\maketitle
\vspace{-0.2in}

\begin{abstract}
Compressive sensing (CS) has recently emerged as a framework for efficiently capturing signals that are sparse or compressible in an appropriate basis.  %
While often motivated as an alternative to Nyquist-rate sampling, there remains a gap between the discrete, finite-dimensional CS framework and the problem of acquiring a continuous-time signal.
In this paper, we attempt to bridge this gap by exploiting the Discrete Prolate Spheroidal Sequences (DPSS's), a collection of functions that trace back to the seminal work by Slepian, Landau, and Pollack on the effects of time-limiting and bandlimiting operations.
DPSS's form a highly efficient basis for sampled bandlimited functions; by modulating and merging DPSS bases, we obtain a dictionary that offers high-quality sparse approximations for most sampled multiband signals.
This multiband modulated DPSS dictionary can be readily incorporated into the CS framework.
We provide theoretical guarantees and practical insight into the use of this dictionary for recovery of sampled multiband signals from compressive measurements.
\end{abstract}

\begin{center}
\small {\bf Keywords:} Compressive sensing, multiband signals, Discrete Prolate Spheroidal Sequences, Fourier analysis, sampling, block-sparsity, approximation, signal recovery, greedy algorithms
\end{center}

\section{Introduction}

\subsection{Compressive sensing of analog signals}

In recent decades the digital signal processing community has enjoyed enormous success in developing hardware and algorithms for capturing and extracting information from signals. Capitalizing on the early work of Whitaker, Nyquist, Kotelnikov, and Shannon on the sampling and representation of continuous signals, signal processing has moved from the analog to the digital domain and ridden the wave of Moore's law.  Digitization has enabled the creation of sensing and processing systems that are more robust, flexible, cheaper and, therefore, more ubiquitous than their analog counterparts.

The foundation of this progress has been the Nyquist sampling theorem, which states that in order to perfectly capture the information in an arbitrary continuous-time signal $x(t)$ with bandlimit $\frac{\bnyq}{2}$ Hz, we must sample the signal at its Nyquist rate of $\bnyq$ samples/sec. This requirement has placed a growing burden on analog-to-digital converters as applications that require processing signals of ever-higher bandwidth lead to ever-higher sampling rates. This pushes these devices toward a physical barrier, beyond which their design becomes increasingly difficult and costly~\cite{Walde_Analog}.

In recent years, {\em compressive sensing} (CS) has emerged as a framework that can significantly reduce the acquisition cost at a sensor~\cite{Cande_Compressive,Donoh_Compressed,Baran_Compressive}. CS builds on the work of Cand\`{e}s, Romberg, and Tao~\cite{CandeRT_Robust} and Donoho~\cite{Donoh_Compressed}, who showed that a signal that can be compressed using classical methods such as transform coding can also be efficiently acquired via a small set of nonadaptive, linear, and usually randomized measurements.

There remains, however, a prominent gap between the theoretical framework of CS, which deals with acquiring finite-length, discrete signals that are sparse or compressible in a basis or dictionary, and the problem of acquiring a continuous-time signal.
Previous work has attempted to bridge this gap by employing two very different strategies.
First, works such as~\cite{TroppLDRB_Beyond} have employed a simple {\em multitone} analog signal model that maps naturally into a finite-dimensional sparse model.
Although this assumption allows the reconstruction problem to be formulated directly within the CS framework, the multitone model can be unrealistic for many analog signals of practical interest.
Alternatively, other authors have considered a more plausible {\em multiband} analog signal model that is also amenable to sub-Nyquist sampling~\cite{FengB_Spectrum,BreslF_Spectrum,Feng_Universal,VenkaB_Further,MishaE_From,MishaE_Blind}.
These works, however, have involved customized sampling protocols and reconstruction formulations that fall largely outside of the standard CS framework.
Indeed, some of this body of literature and many of its underlying ideas actually pre-date the very existence of CS.

\subsection{Contributions and paper organization}

In this paper, we bridge this gap in a different manner.
Namely, we show that when dealing with finite-length windows of samples, it is possible to map the multiband analog signal model---in a very natural way---into a finite-dimensional sparse model.
One can then apply many of the standard theoretical tools of CS to develop algorithms for both recovery as well as compressive domain processing of multiband signals.

Our work actually rests on ideas that trace back to the classical signal processing literature and the study of time-frequency localization.
The Weyl-Heisenberg uncertainty principle states that a signal cannot be simultaneously localized on a finite interval in both time and frequency.
A natural question is to what extent it is possible to concentrate a signal $x(t)$ and its continuous-time Fourier transform (CTFT) $X(F)$ {\em near} finite intervals.
In an extraordinary series of papers from the 1960s and 1970s, Slepian, Landau, and Pollack provide an in-depth investigation into this question~\cite{SlepiP_ProlateI,LandaP_ProlateII,LandaP_ProlateIII,Slepi_ProlateIV,Slepi_ProlateV}.
The implications of this body of work have had a tremendous impact across a number of disciplines within mathematics and engineering, particularly in the field of spectral estimation and harmonic analysis (e.g.,~\cite{Thoms_Spectrum}).
Very few of these ideas have appeared in the CS literature, however, and so one goal of this paper is to carefully explain---from a CS perspective---the natural role that these ideas can indeed play in CS.

We begin this paper in Sections~\ref{sec:csthy} and~\ref{sec:background} with a description of our problem setup and a survey of the necessary CS background material.
In Section~\ref{sec:MBDPSS}, we introduce the {\em multitone} and {\em multiband} analog signal models.  We then discuss how sparse representations for multiband signals can be incorporated into the CS framework through the use of {\em Discrete Prolate Spheroidal Sequences} (DPSS's)~\cite{Slepi_ProlateV}.
First described by Slepian in 1978, DPSS's form a highly efficient basis for sampled bandlimited functions.
For the sake of clarity and completeness, we provide a self-contained review of the key results from Slepian's work that are most relevant to the problem of modeling sampled multiband signals.
We then explain how, by modulating and merging DPSS bases, one obtains a dictionary that---to a very high degree of approximation---provides a sparse representation for most finite-length, Nyquist-rate sample vectors arising from multiband analog signals.
We also explain why the qualifiers ``approximation'' and ``most'' in the preceding sentence are necessary; however, we characterize them formally and justify the use of the multiband modulated DPSS dictionary in practical settings.

In Section~\ref{sec:recovery}, we discuss the use of the multiband modulated DPSS dictionary for recovery of sampled multiband signals from compressive measurements.
We discuss the implications (in terms of formulating reconstruction procedures and guaranteeing their performance) of the fact that our dictionary is not quite orthogonal; in fact, it may be undercomplete or overcomplete, depending on the setting of a user-defined parameter.
We also provide theoretical guarantees for recovery algorithms that exploit the {\em block-sparse} nature of signal expansions in our dictionary.
Ultimately, this allows us to guarantee that most finite-length sample vectors arising from multiband analog signals can---to a high degree of approximation---be recovered from a number of measurements that is proportional to the underlying information level (also known as the Landau rate~\cite{Landa_Necessary}).

In Section~\ref{sec:sims}, we present the results of a detailed suite of simulations for signal recovery from compressive measurements, illustrating the effectiveness of our proposed approaches on realistic signals.
We show that the reconstruction quality achieved using the multiband modulated DPSS dictionary is far better than what is achieved using the discrete Fourier transform (DFT) as a sparsifying basis.
These results confirm that a DPSS-based dictionary can provide a very attractive alternative to the DFT for sparse recovery.
We conclude in Section~\ref{sec:conc} with a final discussion and directions for future work.

\subsection{Relation to existing work}

Although customized measurement and reconstruction schemes~\cite{FengB_Spectrum,BreslF_Spectrum,Feng_Universal,VenkaB_Further,MishaE_From,MishaE_Blind} have previously been proposed for efficiently sampling multiband signals, we believe that our paper is of independent interest from these works, specifically because we restrict ourselves to operating within the finite-dimensional CS framework.
There are a variety of plausible CS (and even non-CS) scenarios where a sparse representation of a finite-length Nyquist-rate sample vector would be useful, and it is this problem to which we devote our attention.
This work may be of interest, for example, to any practitioner who has struggled with the lack of sparsity that the DFT dictionary provides even for pure sampled tones at ``off-grid'' frequencies.
Moreover, as we discuss more fully in Section~\ref{sec:csthy}, several analog CS hardware architectures can be viewed as providing random projections of finite-length, Nyquist-rate sample vectors.
Our formulation is compatible with these architectures and does not require a customized sampling protocol.

It is important to mention that we are not the first authors to recognize the potential role that DPSS's (or their continuous-time counterparts, the Prolate Spheroidal Wave Functions, or PSWF's~\cite{SlepiP_ProlateI,LandaP_ProlateII,LandaP_ProlateIII,Slepi_ProlateIV}) can play in CS.
Izu and Lakey~\cite{izu2009time} have drawn an analogy between sampling bounds for multiband signals and classical results in CS, but not specifically for the purpose of using the finite-dimensional CS framework for sparse recovery of sample vectors from multiband analog signals.
Gosse~\cite{gosse2010compressed} has considered the recovery of smooth functions from random samples; however, this work focuses on a very different setting, employing a PSWF (not DPSS) dictionary, considering only baseband signals, and exploiting sparsity in a different way than our work.
Senay et al.~\cite{senay2008compressive,senay2009reconstruction} have also considered a PSWF dictionary for reconstruction of signals from nonuniform samples; however, this work also focuses on baseband signals and lacks formal approximation and CS recovery guarantees.
Oh et al.~\cite{oh2010signal} have employed a modulated DPSS dictionary for sampled bandpass signals; however, this work falls largely outside the standard CS framework and again lacks formal approximation and CS recovery guarantees of the type we provide.
Finally, Sejdi\'{c} et al.~\cite{sejdic2008channel} have proposed a multiband modulated DPSS dictionary very similar to our own and a greedy algorithm for signal decomposition in that dictionary.
However, this work is again not devoted to developing sparse approximation guarantees for sampled multiband signals.
It focuses not on signal recovery but on identification of a communications channel, and the proposed reconstruction algorithm is not intended to exploit block-sparse structure in the signal coefficients.
We hope that our paper will be a valuable addition to this nascent literature and help to encourage much further exploration of the connections between DPSS's, PSWF's, and CS.

\subsection{Preliminaries}

Before proceeding, we first briefly introduce some mathematical notation that we will use throughout the paper. We use bold characters to indicate finite-dimensional vectors and matrices.  All such vectors and matrices are indexed beginning at $0$, so that the first element of a length-$N$ vector $\bx$ is given by $\bx[0]$ and the last by $\bx[N-1]$.
We denote the Hermetian transpose of a matrix $\bA$ by $\bA^H$.  We use $\norm[p]{\cdot}$ to denote the standard $\ell_p$ norm. We also use $\norm[0]{\bx} := |\supp(\bx)|$ to denote the number of nonzeros of $\bx$, and we say that $\bx$ is $S$-sparse if $\norm[0]{\bx} = S$.  We use $\expval{x}$ to denote expected value of a random variable $x$ and $\prob{E}$ to denote the probability of an event $E$. Finally, we adopt the convention $j = \sqrt{-1}$ throughout the paper.

\section{Mapping Analog Sensing to the Digital Domain}
\label{sec:csthy}

In the standard CS setting, one is concerned with recovering a finite-dimensional vector $\bx \in \complex^N$ from a limited number of measurements.  A typical first order assumption is that the vector $\bx$ is {\em sparse}, meaning that there exists some basis or dictionary $\bPsi$ such that $\bx = \bPsi \balpha$ and $\balpha$ has a small number of nonzeros, i.e., $\norm[0]{\balpha} \le S$ for some $S \ll N$.  One then acquires the measurements
\begin{equation} \label{eq:phi}
\by = \bA \bx + \be
\end{equation}
where $\bA \in \complex^{M \times N}$ maps $\bx$ to a length-$M$ vector of complex-valued measurements, and where $\be$ is a length-$M$ vector that represents {\em measurement noise} generated by the acquisition hardware. In the context of CS, one seeks to design $\bA$ so that $M$ is on the order of $S$ (the number of degrees of freedom of the signal) and potentially much smaller than $N$.

In the present work, however, we are concerned with the acquisition of a finite-length window of a complex-valued, continuous-time signal, which we denote by $x(t)$.
Specifically, we suppose that we are interested in a time window of length $\twind$ seconds and that we acquire the measurements
\begin{equation} \label{eq:phi1}
\by = \cPhi(x(t)) + \be,
\end{equation}
where $\cPhi$ is a linear {\em measurement operator} that maps functions defined on $[0,\twind)$ to a length-$M$ vector of measurements and $\be$ again represents measurement noise.
We assume throughout this paper that $x(t)$ is bandlimited with bandlimit $\frac{\bnyq}{2}$ Hz, i.e., that $x(t)$ has a continuous-time Fourier transform (CTFT)
$$
X(F) = \int_{-\infty}^{\infty} x(t)e^{-j2\pi F t} dt
$$
such that $X(F) = 0$ for $|F| > \frac{\bnyq}{2}$. Additional assumptions on $x(t)$ will be specified in Section~\ref{ssec:multiband}.

Because we assume that $x(t)$ is bandlimited and that the measurement process~\eqref{eq:phi1} takes place over a finite window of time, we restrict our attention to the problem of recovering the Nyquist-rate samples of $x(t)$ over this time interval.
Specifically, we let $\tsamp \le \frac{1}{\bnyq}$ denote a sampling interval (in seconds) chosen to meet the minimum Nyquist sampling rate, and we let $x[n]$ denote the infinite-length sequence that would be obtained by uniformly sampling $x(t)$ with sampling period $\tsamp$, i.e., $x[n] = x(n\tsamp)$.
We are interested in a time window of length $\twind$ seconds, during which there are $N = \lceil \frac{\twind}{\tsamp} \rceil$ samples.
We let $\bx = \left[x[0] ~ x[1] ~ \cdots ~ x[N-1]\right]^T$ denote $x[n]$ truncated to the $N$ samples from $0$ to $N-1$.
This paper is specifically devoted to the problem of recovering $\bx$, the vector of Nyquist-rate samples of $x(t)$ on $[0,\twind)$,\footnote{Note that our goal is to recover $\bx$, which of course carries useful information about $x(t)$, but recovering $\bx$ may not be sufficient for exactly recovering $x(t)$ on the entire window $[0,\twind)$. (This depends on the exact sampling rate and the decay of the analog interpolation kernel.) In practice, the methods we describe in this paper for digital single-window reconstruction could be implemented in a streaming multi-window setting, and this would allow for a more accurate reconstruction of $x(t)$ on the entire window.} from compressive measurements $\by$ of the form~\eqref{eq:phi1}.

To facilitate this, we first note that the sensing model in~\eqref{eq:phi1} is clearly very similar to the standard CS model in~\eqref{eq:phi}.
We briefly describe conditions under which these models are equivalent.
Recall from the Shannon-Nyquist sampling theorem that $x(t)$ can be perfectly reconstructed from $x[n]$ since $\frac{1}{\tsamp} \ge \bnyq$. Specifically, we have the formula
\begin{equation} \label{eq:sincinterp}
x(t) = \sum_{n=-\infty}^{\infty} x[n] \sinc{t/\tsamp - n},
\end{equation}
where
$$
\sinc{t} = \begin{cases} \sin(\pi t)/(\pi t), & t \neq 0 \\ 1, & t = 0. \end{cases}
$$
Observe that since $\cPhi$ is linear, we can express each measurement $\by[m]$ in~\eqref{eq:phi1} simply as the inner product between $x(t)$ and some {\em sensing functional} $\phi_m(t)$, i.e.,\footnote{In our setup, since $\cPhi$ maps functions defined on $[0,\twind)$ to vectors in $\complex^M$, we are inherently assuming that $\phi_m(t) = 0$ outside of $[0,\twind)$, so that the sensing functionals are time-limited. Although certain acquisition systems (such as the {\em modulated wideband converter} of~\cite{MishaE_From}) do not satisfy this condition, we believe that it is often a reasonable assumption in practice and that many acquisition systems can at least be well-approximated as time-limited.}
\begin{equation} \label{eq:y1}
\by[m] = \iprod{\phi_m(t)}{x(t)} + \be[m].
\end{equation}
In this case we can use (\ref{eq:sincinterp}) to reduce (\ref{eq:y1}) to
\begin{equation} \label{eq:y2}
\by[m] = \iprod{\phi_m(t)}{\sum_{n=-\infty}^{\infty} x[n] \sinc{t/\tsamp - n}} + \be[m] = \sum_{n=-\infty}^{\infty} x[n] \iprod{\phi_m(t)}{\sinc{t/\tsamp - n}} + \be[m].
\end{equation}
If we let $\bA$ denote the $M \times N$ matrix with entries given by
$$
\bA[m,n] = \iprod{\phi_m(t)}{\sinc{t/\tsamp - n}}
$$
and let $\bw$ denote the length-$M$ vector with entries given by
\begin{equation} \label{eq:wdef}
\bw[m] = \sum_{\overset{n \le -1}{n \ge N}} x[n] \iprod{\phi_m(t)}{\sinc{t/\tsamp - n}},
\end{equation}
then (\ref{eq:phi1}) reduces to
\begin{equation} \label{eq:phi2}
\by = \bA \bx + \bw + \be.
\end{equation}
If the vector $\bw = \bzero$, then~\eqref{eq:phi2} is exactly equivalent to the standard CS sensing model in~\eqref{eq:phi}.  Moreover, if $\bw$ is not zero but is small compared to $\be$, then we can simply absorb $\bw$ into $\be$ and again reduce~\eqref{eq:phi2} to~\eqref{eq:phi}.

A precise statement concerning the size of $\bw$ would depend greatly on the choice of the $\phi_m(t)$.  While a detailed analysis of $\bw$ for various practical choices of $\phi_m(t)$ is beyond the scope of this paper, we briefly mention some possible strategies for controlling $\bw$.  First, one can easily show that if each $\phi_m(t)$ consists of any weighted combination of Dirac delta functions positioned at times $0, \tsamp, \ldots, (N-1)\tsamp$, then by construction $\bw = \bzero$. This should not be surprising, as in this case it is clear that the measurements are simply a linear combination of the Nyquist-rate samples from the finite window.  Importantly, it is possible to collect measurements of this type {\em without first acquiring the Nyquist-rate samples} (see, for example, the architecture proposed in~\cite{SlaviLDB_Compressive}), although there are also plenty of situations in which one might explicitly apply a matrix multiplication to compress data {\em after} acquiring a length-$N$ vector of Nyquist-rate samples.

For many architectures used in practice, it will not be the case that $\bw = \bzero$ exactly.  However, it may still be possible to ensure that $\bw$ remains very small.  There are a number of possible routes to such a guarantee.  For example, the $\phi_m(t)$ could be designed to incorporate a smooth window $g(t)$ so that we effectively sample $x(t) g(t)$ instead of $x(t)$, where $g(t)$ is designed to ensure that $x[n] g[n] \approx 0$ for $n\le -1$ or $n\ge N$.  The reconstruction algorithm could then compensate for the effect of $g[n]$ on $0, 1, \ldots, N-1$.
Alternatively, by considering a slightly oversampled version of $x(t)$ (so that $\frac{1}{\tsamp}$ exceeds $\bnyq$ by some nontrivial amount) it is also possible to replace the $\mathrm{sinc}$ interpolation kernel with one that decays significantly faster, ensuring that the inner products in~\eqref{eq:wdef} decay to zero extremely quickly~\cite{DaubeD_Approximating}.  Finally, as we will see below, many constructions of $\phi_m(t)$ often involve a degree of randomness that could also be leveraged to show that with high probability, the inner products in~\eqref{eq:wdef} decay even faster.  However, since the details will depend greatly on the particular architecture used, we leave such an investigation for future work.

Having argued that the measurement model in~\eqref{eq:phi1} can often be expressed in the form~\eqref{eq:phi}, we now turn to the central theoretical question of this paper:
\begin{quote}
Supposing that $x(t)$ obeys the multiband model described in Section~\ref{ssec:analogmodels}, how can we recover $\bx$, i.e., the Nyquist-rate samples of $x(t)$ on $[0,\twind)$, from compressive measurements of the form $\by = \bA \bx + \be$?
\end{quote}
In order to answer this question, of course, we will need a dictionary $\bPsi$ that provides a suitably sparse representation for $\bx$.
We devote Section~\ref{sec:MBDPSS} to constructing such a dictionary.
In addition to a dictionary $\bPsi$, however, we will also need a reconstruction algorithm that can efficiently recover $\bx$ from the compressive measurements $\by$.
While it is certainly possible to apply out-of-the-box CS recovery algorithms to this problem, there are certain properties of our dictionary that make the recovery problem worthy of further consideration.
(In particular, the columns of our dictionary $\bPsi$ will typically not be orthogonal, and the sparse coefficient vectors $\balpha$ that arise will tend to have structured (block-sparse) sparsity patterns.)
In light of these nuances, Section~\ref{sec:background} now provides additional background on CS that will allow us to formulate a principled recovery technique.

\section{Compressive Sensing Background}
\label{sec:background}

\subsection{Sensing matrix design}
\label{ssec:matdesign}

Setting aside the question of how to design the sparsity-inducing dictionary $\bPsi$, we first address the problem of designing $\bA$.
Although many favorable properties for sensing matrices have been studied in the context of CS, the most common is the {\em restricted isometry property} (RIP)~\cite{CandeT_Decoding}.  We say that the matrix $\bA \bPsi$ satisfies the RIP of order $S$ if there exists a constant $\delta_S \in (0,1)$ such that
\begin{equation}
\label{RIP} \sqrt{1-\delta_S} \le \frac{\norm{\bA \bPsi \balpha}}{\norm{\balpha}} \le \sqrt{1+\delta_S}
\end{equation}
holds for all $\balpha$ such that $\norm[0]{\balpha} \le S$.  In words, $\bA \bPsi$ preserves the norm of $S$-sparse vectors.  Note that for any pair of vectors $\balpha$ and $\balpha'$ such that $\norm[0]{\balpha} = \norm[0]{\balpha'} = S$, we have that $\norm[0]{\balpha-\balpha'} \le 2S$.  This gives us an alternative interpretation of~\eqref{RIP}---namely that the RIP of order $2S$ ensures that $\bA \bPsi$ preserves Euclidean distances between $S$-sparse vectors $\balpha$.

A related concept is what we call the $\bPsi$-RIP (following the notation in~\cite{CandeENR_Compressed}). Specifically, we say that the matrix $\bA$ satisfies the $\bPsi$-RIP of order $S$ if there exists a constant $\delta_S \in (0,1)$ such that
\begin{equation}
\label{Psi-RIP} \sqrt{1-\delta_S} \le \frac{\norm{\bA \bPsi \balpha}}{\norm{\bPsi \balpha}} \le \sqrt{1+\delta_S}
\end{equation}
holds for all $\balpha$ such that $\norm[0]{\balpha} \le S$. When $\bPsi$ is an orthonormal basis,~\eqref{RIP} and~\eqref{Psi-RIP} are equivalent. However, we will be concerned in this paper with non-orthogonal (and even non-square) dictionaries $\bPsi$, in which case the RIP and the $\bPsi$-RIP are slightly different concepts: the former ensures norm preservation of all sparse coefficient vectors $\balpha$, while the latter ensures norm preservation of all signals having a sparse representation $\bx = \bPsi \balpha$.
In many problems (such as when $\bPsi$ is an overcomplete dictionary), the RIP is considered to be a stronger requirement.

There are a variety of approaches to constructing matrices that satisfy the RIP or $\bPsi$-RIP, some of which are better suited to practical architectures than others. From a theoretical standpoint, however, the most fruitful approaches involve the use of random matrices.  Specifically, we consider matrices constructed as follows: given $M$ and $N$, we generate a random $M \times N$ matrix $\bA$ by choosing the entries $\bA[m,n]$ as independent and identically distributed (i.i.d.)\ random variables. While it is not strictly necessary, for the sake of simplicity we will consider only real-valued random variables, so that $\bA \in \real^{M \times N}$.

We impose two conditions on the random distribution. First, we require that the distribution is centered and normalized such that $\mathbb{E}(\bA[m,n]) = 0$ and $\mathbb{E}(\bA[m,n]^2) = \frac{1}{M}$. Second, we require that the distribution is {\em subgaussian}~\cite{BuldyK_Metric,vershynin2010introduction}, meaning that there exists a constant $c_0 > 0$ such that
\begin{equation} \label{eq:sgdef}
\mathbb{E} \left( e^{\bA[m,n] t} \right) \le e^{ c_0^2t^2}
\end{equation}
for all $t \in \real$.
%
% This says that the moment-generating function of our distribution is dominated by that of a Gaussian distribution, which is also equivalent to requiring that the tails of our distribution decay {\em at least as fast} as the tails of a Gaussian distribution.
%
Examples of subgaussian distributions include the Gaussian distribution, the Rademacher distribution, and the uniform distribution.  In general, any distribution with bounded support is subgaussian.

The key property of subgaussian random variables that will be of use in this paper is that for any $\bx \in \complex^N$, the random variable $\norm[2]{\bA \bx}^2$ is highly concentrated about $\norm[2]{\bx}^2$. In particular, there exists a constant $c_1(\eta) > 0$ that depends only on $\eta$ and the constant $c_0$ in (\ref{eq:sgdef}) such that
\begin{equation} \label{star}
\prob{ \left| \norm[2]{\bA \bx}^2- \norm[2]{\bx}^2\right| \ge \eta \norm[2]{\bx}^2} \le  4 e^{-c_1(\eta) M},
\end{equation}
where the probability is taken over all draws of the matrix $\bA$ (see Lemma 6.1 of~\cite{DeVorPW_Instance} or~\cite{Daven_Concentration}).\footnote{The concentration result in~\eqref{star} is typically stated for $\bx \in \real^N$ instead of $\complex^N$.  The complex case follows from the real case by handling the real and imaginary parts separately and then applying the union bound, which results in a factor of $4$ instead of  $2$ in front of the exponent.}  We leave the constant $c_1(\eta)$ undefined since it will depend both on the particular subgaussian distribution under consideration and on the range of $\eta$ considered.  Importantly, however, for any subgaussian distribution and any $\eta_{\mathrm{max}}$, we can write $c_1(\eta) = \kappa \eta^2$ for $\eta \le \eta_{\mathrm{max}}$ with $\kappa$ being a constant that depends on certain properties of the distribution~\cite{Daven_Concentration}.  This concentration bound has a number of important consequences.  Perhaps most important for our purposes is the following lemma (an adaptation of Lemma 5.1 in~\cite{BaranDDW_Simple}).\footnote{The constants in~\cite{BaranDDW_Simple} differ from those in Lemma~\ref{lem:subspace}, but the proof is substantially the same (see~\cite{DavenBWB_Signal}). Note that in~\cite{DavenBWB_Signal} $\cX$ is a subspace of $\real^N$ rather than $\complex^N$.  In our case we incur an additional factor of 2 in the constant which arises as a consequence of the increase in the covering number for a sphere in $\complex^S$ (which can easily be derived from the fact that there is an isometry between $\complex^S$ and $\real^{2S}$).}
\begin{lemma} \label{lem:subspace}
Let $\cX$ denote any $S$-dimensional subspace of $\complex^N$.  Fix $\delta, \beta \in (0,1)$. Let $\bA$ be an $M \times N$ random matrix with i.i.d.\ entries chosen from a distribution satisfying (\ref{star}). If
\begin{equation} \label{eq:subspace-M}
M \ge \frac{ 2S \log (42/\delta) + \log(4/\beta) }{c_1(\delta/\sqrt{2})}
\end{equation}
then with probability exceeding $1 - \beta$,
\begin{equation} \label{eq:subspace-embed}
\sqrt{1-\delta} \norm{\bx} \le \norm{\bA \bx} \le \sqrt{1+\delta} \norm{\bx}
\end{equation}
for all $\bx \in \cX$.
\end{lemma}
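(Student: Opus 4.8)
The plan is to establish the subspace embedding by a standard covering (net) argument, combining the pointwise concentration bound~\eqref{star} with a union bound over a finite net and then extending to the whole subspace by continuity. First I would reduce to the unit sphere of $\cX$: since both sides of~\eqref{eq:subspace-embed} are homogeneous of degree one in $\bx$, it suffices to prove $\sqrt{1-\delta} \le \norm{\bA\bx} \le \sqrt{1+\delta}$ for all $\bx \in \cX$ with $\norm{\bx} = 1$. Equivalently, I will aim to control $\left| \norm{\bA\bx}^2 - 1 \right|$, since the square-root bounds follow from the squared bounds for $\delta \in (0,1)$.

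Next I would construct a finite $\epsilon$-net $\cQ$ for the unit sphere of $\cX$. Because $\cX$ is an $S$-dimensional subspace of $\complex^N$, it is isometric to $\complex^S \cong \real^{2S}$, and the unit sphere admits an $\epsilon$-net of cardinality at most $(1 + 2/\epsilon)^{2S}$; this is precisely where the factor of $2$ in the exponent (flagged in the footnote) enters. Applying~\eqref{star} to each point $\bq \in \cQ$ with tolerance parameter $\eta = \delta/\sqrt{2}$, and taking a union bound over all $|\cQ|$ points, I would obtain that with probability exceeding $1 - 4|\cQ| e^{-c_1(\delta/\sqrt{2})M}$, every net point satisfies $\left| \norm{\bA\bq}^2 - 1 \right| \le \delta/\sqrt{2}$. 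Choosing $M$ as in~\eqref{eq:subspace-M} forces this failure probability below $\beta$: one checks that $4 (1+2/\epsilon)^{2S} e^{-c_1 M} \le \beta$ reduces to the stated bound once $\epsilon$ is fixed appropriately (the constant $42$ in~\eqref{eq:subspace-M} arises from the choice $\epsilon$ giving $1 + 2/\epsilon$ together with the slack absorbed in passing from net control to global control).

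The main obstacle, and the genuinely nontrivial step, is the extension from the net to the entire unit sphere. Here I would use a bootstrapping argument: let $\bA^* := \sup_{\bx \in \cX, \norm{\bx}=1} \norm{\bA\bx}$, and for arbitrary unit $\bx$ pick the nearest net point $\bq$ with $\norm{\bx - \bq} \le \epsilon$. Then by the triangle inequality $\norm{\bA\bx} \le \norm{\bA\bq} + \norm{\bA(\bx-\bq)} \le \sqrt{1+\delta/\sqrt{2}} + \bA^* \epsilon$. Taking the supremum over $\bx$ yields an inequality that can be solved for $\bA^*$, giving an upper bound of the form $(1 + \text{const}\cdot\delta)^{1/2}$ provided $\epsilon$ is chosen small enough; the matching lower bound follows similarly from $\norm{\bA\bx} \ge \norm{\bA\bq} - \bA^*\epsilon$. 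The delicate part is tracking constants so that the $\delta/\sqrt{2}$ tolerance on the net, combined with the $\epsilon$-slack, telescopes to exactly $\delta$ on the full subspace, which is what pins down the numerical constant $42$ and the argument $\delta/\sqrt{2}$ of $c_1$ in~\eqref{eq:subspace-M}.

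Since this lemma is explicitly described as an adaptation of Lemma~5.1 in~\cite{BaranDDW_Simple} with the proof ``substantially the same,'' I would follow that template directly, taking care only to carry through the factor-of-$2$ increase in the covering-number exponent (from $\real^N$ to $\complex^N$) and the factor of $4$ rather than $2$ in~\eqref{star}, both of which are already accounted for in the stated constants.
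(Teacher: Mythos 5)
Your proposal is correct and follows essentially the same route as the paper, which proves this lemma by citing the covering-number argument of Lemma~5.1 in~\cite{BaranDDW_Simple}: an $\epsilon$-net on the unit sphere of $\cX$ (with the exponent $2S$ coming from the isometry $\complex^S \cong \real^{2S}$), pointwise concentration via~\eqref{star} with tolerance $\delta/\sqrt{2}$ and a union bound giving the factor $4$ and the constant $42$, and the standard bootstrapping step to pass from the net to the whole sphere. Nothing is missing; your sketch reconstructs exactly the adaptation the paper describes in its footnote.
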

When $\bPsi$ is an orthonormal basis, one can use this lemma to go beyond a single $S$-dimensional subspace to instead consider all possible subspaces spanned by $S$ columns of $\bPsi$, thereby establishing the RIP for $\bA \bPsi$.
The proof follows that of Theorem 5.2 of~\cite{BaranDDW_Simple}.
\begin{lemma} \label{lem:RIP}
Let $\bPsi$ be an orthonormal basis for $\complex^N$ and fix $\delta, \beta \in (0,1)$. Let $\bA$ be an $M \times N$ random matrix with i.i.d.\ entries chosen from a distribution satisfying (\ref{star}). If
\begin{equation} \label{eq:RIP-M}
M \ge \frac{ 2 S \log (42 e N/ \delta S) + \log(4/\beta) }{c_1 (\delta/\sqrt{2})}
\end{equation}
with $e$ denoting the base of the natural logarithm, then with probability exceeding $1 - \beta$, $\bA \bPsi$ will satisfy the RIP of order $S$ with constant $\delta$.
\end{lemma}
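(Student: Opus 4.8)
The plan is to reduce the RIP of $\bA\bPsi$ to a union bound over finitely many applications of Lemma~\ref{lem:subspace}. Since $\bPsi$ is orthonormal, $\norm{\bPsi\balpha} = \norm{\balpha}$, so the RIP condition~\eqref{RIP} for $\bA\bPsi$ is equivalent to requiring $\sqrt{1-\delta}\,\norm{\bPsi\balpha} \le \norm{\bA\bPsi\balpha} \le \sqrt{1+\delta}\,\norm{\bPsi\balpha}$ for every $S$-sparse $\balpha$. For each index set $\Lambda \subseteq \{0,\ldots,N-1\}$ with $|\Lambda| = S$, the collection of signals $\bx = \bPsi\balpha$ with $\supp(\balpha)\subseteq\Lambda$ is an $S$-dimensional subspace $\cX_\Lambda$ of $\complex^N$, namely the span of the $S$ columns of $\bPsi$ indexed by $\Lambda$. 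Establishing the RIP of order $S$ is therefore exactly the statement that the embedding~\eqref{eq:subspace-embed} holds simultaneously on every one of the $\binom{N}{S}$ subspaces $\cX_\Lambda$.

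First I would apply Lemma~\ref{lem:subspace} to a single subspace $\cX_\Lambda$, but with its failure probability set to $\beta' := \beta/\binom{N}{S}$ rather than $\beta$. By the union bound, the embedding then fails on at least one $\cX_\Lambda$ with probability at most $\binom{N}{S}\beta' = \beta$, so with probability exceeding $1-\beta$ the embedding holds on all of them at once, which is precisely the RIP for $\bA\bPsi$. It remains only to check that the stated bound on $M$ is strong enough to invoke Lemma~\ref{lem:subspace} with this smaller $\beta'$. Substituting $\beta'$ into~\eqref{eq:subspace-M}, the requirement becomes
\begin{equation*}
M \ge \frac{2S\log(42/\delta) + \log\!\left(4\binom{N}{S}/\beta\right)}{c_1(\delta/\sqrt{2})}.
\end{equation*}

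The remaining work is routine bookkeeping. I would bound the binomial coefficient by $\binom{N}{S} \le (eN/S)^S$, so that $\log(4\binom{N}{S}/\beta) \le S\log(eN/S) + \log(4/\beta)$, and then absorb the $S\log(eN/S)$ term into the leading logarithmic term: since $eN/S \ge 1$, we have $2S\log(42/\delta) + S\log(eN/S) \le 2S\bigl[\log(42/\delta) + \log(eN/S)\bigr] = 2S\log(42eN/(\delta S))$. This shows that the hypothesis~\eqref{eq:RIP-M} of the present lemma implies the bound displayed above, and hence that Lemma~\ref{lem:subspace} applies with failure probability $\beta'$ on each subspace, completing the argument. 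There is no genuine obstacle here---the proof is the standard covering/union-bound passage from a single-subspace embedding to the RIP---so the only point demanding care is the constant chasing: tracking the factor inside the logarithm when collapsing $2S\log(42/\delta) + S\log(eN/S)$ into the clean form $2S\log(42eN/(\delta S))$, and confirming that this collapse only loosens (never tightens) the requirement on $M$, which holds because $\log(eN/S) \ge 0$.
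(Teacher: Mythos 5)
Your proof is correct and takes essentially the same approach as the paper's: both reduce the RIP of $\bA\bPsi$ to the embedding~\eqref{eq:subspace-embed} holding on the $\binom{N}{S} \le (eN/S)^S$ coordinate-aligned subspaces of $\bPsi$ and then apply a union bound to Lemma~\ref{lem:subspace}. The paper compresses this into two sentences, leaving the per-subspace failure probability $\beta/\binom{N}{S}$ and the collapse of $2S\log(42/\delta) + S\log(eN/S)$ into $2S\log(42eN/(\delta S))$ implicit; your write-up makes that bookkeeping explicit, and it checks out.
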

\begin{proof}
This is a simple generalization of Lemma~\ref{lem:subspace}, which follows from the observation that~\eqref{RIP} is equivalent to~\eqref{eq:subspace-embed} holding for all $S$-dimensional subspaces.  There are $\binom{N}{S} \le (eN/S)^{S}$ subspaces of dimension $S$ aligned with the coordinate axes of $\bPsi$, and so applying a union bound to Lemma~\ref{lem:subspace} we obtain the desired result.
\end{proof}
From essentially the same argument, we can also prove for more general dictionaries $\bPsi$ that $\bA$ will satisfy the $\bPsi$-RIP.
\begin{cor} \label{cor:PsiRIP}
Let $\bPsi$ be an arbitrary $N \times D$ matrix and fix $\delta, \beta \in (0,1)$. Let $\bA$ be an $M \times N$ random matrix with i.i.d.\ entries chosen from a distribution satisfying (\ref{star}). If
\begin{equation} \label{eq:Psi-RIP-M}
M \ge \frac{ 2S \log (42 e D/ \delta S) + \log(4/\beta) }{c_1( \delta/\sqrt{2})}
\end{equation}
with $e$ denoting the base of the natural logarithm, then with probability exceeding $1 - \beta$, $\bA$ will satisfy the $\bPsi$-RIP of order $S$ with constant $\delta$.
\end{cor}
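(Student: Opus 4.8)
The plan is to mimic the proof of Lemma~\ref{lem:RIP} almost verbatim, with the only essential change being a reinterpretation of which $S$-dimensional subspaces must be embedded. First I would observe that, upon setting $\bx = \bPsi\balpha$, the two-sided bound~\eqref{Psi-RIP} defining the $\bPsi$-RIP is literally the subspace-embedding inequality~\eqref{eq:subspace-embed}: the $\bPsi$-RIP of order $S$ asks that
$$
\sqrt{1-\delta}\,\norm{\bPsi\balpha} \le \norm{\bA\bPsi\balpha} \le \sqrt{1+\delta}\,\norm{\bPsi\balpha}
$$
hold simultaneously for all $S$-sparse $\balpha$. This is exactly the statement that $\bA$ preserves the norm of each vector $\bx = \bPsi\balpha$ arising from an $S$-sparse $\balpha$.

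Next I would decompose the $S$-sparse coefficient vectors according to their support. For each index set $\Lambda \subseteq \{0,1,\dots,D-1\}$ with $|\Lambda| = S$, the collection $\cX_\Lambda := \{\bPsi\balpha : \supp(\balpha)\subseteq\Lambda\}$ is a subspace of $\complex^N$ of dimension at most $S$, namely the span of the $S$ columns of $\bPsi$ indexed by $\Lambda$. Every $S$-sparse $\balpha$ has support contained in some such $\Lambda$, so the $\bPsi$-RIP holds precisely when $\bA$ embeds every $\cX_\Lambda$ in the sense of~\eqref{eq:subspace-embed}. I would then apply Lemma~\ref{lem:subspace} to each $\cX_\Lambda$ individually. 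If a particular $\cX_\Lambda$ has dimension strictly less than $S$ (as will happen when $\bPsi$ is overcomplete or otherwise column-rank-deficient), I would extend it to any containing $S$-dimensional subspace, so that the dimension-$S$ form of the lemma applies uniformly and the required $M$ serves as an upper bound for all supports.

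To finish, I would set the per-subspace failure probability in Lemma~\ref{lem:subspace} to $\beta' = \beta/\binom{D}{S}$ and take a union bound over all $\binom{D}{S}$ choices of $\Lambda$, so that the total failure probability is at most $\beta$. Substituting $\beta'$ into~\eqref{eq:subspace-M} and using the standard estimate $\log\binom{D}{S} \le S\log(eD/S)$ yields the sufficient condition
$$
M \ge \frac{2S\log(42/\delta) + S\log(eD/S) + \log(4/\beta)}{c_1(\delta/\sqrt{2})},
$$
and bounding $S\log(eD/S)$ by $2S\log(eD/S)$ and recombining the logarithms gives the slightly looser but cleaner stated bound~\eqref{eq:Psi-RIP-M}. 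Since the corollary only asserts a lower bound on $M$, this looser expression remains a valid sufficient condition. I expect no serious obstacle: the argument is structurally identical to Lemma~\ref{lem:RIP}, and the only point requiring care is the possible rank deficiency of the submatrices of $\bPsi$, which is handled cleanly by embedding each $\cX_\Lambda$ in a full $S$-dimensional subspace before invoking Lemma~\ref{lem:subspace}.
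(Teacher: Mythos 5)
Your proposal is correct and is essentially the paper's own argument: the paper proves this corollary by remarking that it follows ``from essentially the same argument'' as Lemma~\ref{lem:RIP}, i.e., by applying Lemma~\ref{lem:subspace} to the $\binom{D}{S} \le (eD/S)^S$ subspaces spanned by $S$ columns of $\bPsi$ and taking a union bound, exactly as you do. Your extra step of extending a rank-deficient column span to a full $S$-dimensional subspace before invoking Lemma~\ref{lem:subspace} is a small point of added rigor left implicit in the paper, not a different route.
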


As noted above, the random matrix approach is somewhat impractical to build in hardware.  However, several hardware architectures have been implemented and/or proposed that enable compressive samples to be acquired in practical settings. Examples include the random demodulator~\cite{TroppLDRB_Beyond}, random filtering~\cite{TroppWDBB_Random}, the modulated wideband converter~\cite{MishaE_From}, random convolution~\cite{BajwaHRWN_Toeplitz,Rombe_Compressive}, and the compressive multiplexer~\cite{SlaviLDB_Compressive}.  In this paper we will rely on random matrices in the development of our theory, but we will see via simulations that the techniques we propose are also applicable to systems that use some of these more practical architectures.

\subsection{CS recovery algorithms}
\label{subsec:rec}

\subsubsection{Greedy and iterative algorithms}
\label{sec:greedyiterative}

Before we return to the problem of designing $\bPsi$, we first discuss the question of how to recover the vector $\bx$ from measurements of the form $\by= \bA \bx + \be = \bA \bPsi \balpha + \be$. The original CS theory proposed $\ell_1$-minimization as a recovery technique~\cite{Cande_Compressive,Donoh_Compressed}.  Convex optimization techniques are powerful methods for CS signal recovery, but there also exist a variety of alternative greedy or iterative algorithms that are commonly used in practice and that satisfy similar performance guarantees, including iterative hard thresholding (IHT)~\cite{BlumeD_Iterative}, orthogonal matching pursuit (OMP)~\cite{PatiRK_Orthogonal,DavisMZ_Adaptive,Tropp_Greed,DavenW_Analysis}, and several more recent variations on OMP~\cite{DonohDTS_Sparse,NeedeV_Uniform,NeedeV_Signal,NeedeT_CoSaMP,DaiM_Subspace,CohenDD_Instance}.

In this paper we will restrict our attention to two of the most commonly used algorithms in practice---IHT and CoSaMP~\cite{BlumeD_Iterative,NeedeT_CoSaMP}. We begin with IHT, which is probably the simplest of all CS recovery algorithms.  As is the case for most iterative recovery algorithms, a core component of IHT is {\em hard thresholding}.  Specifically, we define the operator
\begin{equation}
\label{eq:hard}
\hard(\balpha,S)[n] = \begin{cases} \balpha[n], & \textrm{$|\balpha[n]|$ is among the $S$ largest elements of $|\balpha|$;} \\
                                        0, & \textrm{otherwise.}
                          \end{cases}
\end{equation}
In words, the hard thresholding operator sets all but the $S$ largest elements of a vector to zero (with ties broken according to any arbitrary rule).

To the best of our knowledge, there are no existing papers that specifically discuss how to implement IHT when $\bPsi$ is not an orthonormal basis or a tight frame (see~\cite{cevher2011alps} for a discussion of the latter case).
Nonetheless, we can envision two natural (and reasonable) ways that the canonical IHT algorithm~\cite{BlumeD_Iterative} can be extended to handle a general dictionary.
In the first of these variations, the algorithm would consist of iteratively applying the update rule
\begin{equation}
\label{eq:IHTalpha}
\balpha^{\ell+1} = \hard(\balpha^{\ell} + \mu (\bA \bPsi)^H(\by-\bA \bPsi \balpha^\ell),S)
\end{equation}
where $\mu$ is a parameter set by the user.
In the second of these variations, the algorithm would consist of iteratively applying the update rule
\begin{equation} \label{eq:IHT}
\bx^{\ell+1} = \bPsi \cdot \hard \left( \bPsi^H \left( \bx^{\ell} + \mu \bA^H \left(\by - \bA \bx^{\ell} \right) \right), S \right).
\end{equation}
When $\bPsi$ is an orthonormal basis these algorithms are equivalent, but in general they are not.
On the whole, IHT is a remarkably simple algorithm, but in practice its performance is greatly dependent on careful selection and adaptation of the parameter $\mu$.
We refer the reader to~\cite{BlumeD_Iterative} for further details.

CoSaMP is a somewhat more complicated algorithm, but can be easily understood as breaking the recovery problem into two separate sub-problems: identifying the $S$ columns of $\bPsi$ that best represent $\bx$ and then projecting onto that subspace.
The former problem is clearly somewhat challenging, but once solved, the latter is relatively straightforward.
In particular, if we have identified the optimal columns of $\bPsi$, indexed by the set $\Lambda$, then we can recover $\bx$ via least-squares.  In this case, an optimal recovery strategy is to solve the problem:
\begin{equation}
\label{eq:oracle}
\widehat{\bx} = \mathop{\argmin}_{\bz} \ \norm{ \by - \bA \bz} \quad \mathrm{s.t.} \quad \bz \in \cR(\bPsi_{\Lambda}),
\end{equation}
where $\bPsi_{\Lambda}$ denotes the submatrix of $\bPsi$ that contains only the columns of $\bPsi$ corresponding to the index set $\Lambda$ and $\cR(\bPsi_{\Lambda})$ denotes the range of $\bPsi_{\Lambda}$.  If we let $\widetilde{\bA} = \bA \bPsi$, then one way to obtain the solution to (\ref{eq:oracle}) is via the pseudoinverse of $\widetilde{\bA}_\Lambda$, denoted $\widetilde{\bA}_\Lambda^\dagger$.  Specifically, we can compute
\begin{equation}
\widehat{\balpha}|_{\Lambda} = \widetilde{\bA}_{\Lambda}^\dag \by = \left(\widetilde{\bA}_{\Lambda}^H \widetilde{\bA}_{\Lambda} \right)^{-1} \widetilde{\bA}_{\Lambda}^H  \by \quad \quad \mathrm{and} \quad \quad \widehat{\balpha}|_{\Lambda^c} = \bzero
\label{eq:ls2}
\end{equation}
and then set $\widehat{\bx} = \bPsi \widehat{\balpha}$.  While this is certainly not the only approach to solving~\eqref{eq:oracle} (as we will see in Section~\ref{ssec:regu}), it allows us to easily observe that in the noise-free setting, if the support estimate $\Lambda$ is correct, then $\by = \bA \bx = \widetilde{\bA}_\Lambda \balpha|_\Lambda$, and so plugging this into (\ref{eq:ls2}) yields $\widehat{\balpha} = \balpha$ (and hence $\widehat{\bx} = \bx$) provided that $\widetilde{\bA}_\Lambda$ has full column rank.  Thus, the central challenge in recovery is to correctly identify the set $\Lambda$.  CoSaMP and related algorithms solve this problem by iteratively identifying likely columns, performing a projection, and then improving the estimate of which columns to use.

Unfortunately, we are again not aware of any papers that specifically discuss how to implement CoSaMP when $\bPsi$ is not an orthonormal basis.
Nonetheless, we can envision two natural extensions of the canonical CoSaMP algorithm~\cite{NeedeT_CoSaMP}.
One of these is shown in Algorithm~\ref{alg:cosamp};\footnote{ We note that the choice of $2S$ in the ``identify'' step is primarily driven by the proof technique, and is not intended to be interpreted as an optimal or necessary choice.  For example, in~\cite{DaiM_Subspace} it is shown that the choice of $S$ is sufficient to establish performance guarantees similar to those for CoSaMP.  It is also important to note that when the number of measurements $M$ is very small (less than $3S$) it is necessary to make suitable modifications as the assumptions of the algorithm are clearly violated in this case.  Moreover, a simple extension of CoSaMP as presented here involves including an additional orthogonalization step after pruning $\widetilde{\bx}$ down to an $S$-dimensional estimate, as is also done in~\cite{DaiM_Subspace}.  This can often result in modest performance gains and is a technique that we exploit in our simulations.} in a sense, this formulation is more analogous to~\eqref{eq:IHT} than to~\eqref{eq:IHTalpha} because it is focused on recovery of $\bx$ rather than $\balpha$.
However, Algorithm~\ref{alg:cosamp} is actually quite flexible and can be invoked in multiple ways.
To help distinguish among the different possibilities, it will be helpful to introduce the notation
$$
\widehat{\bx} = \mathrm{CoSaMP}(\bA,\bPsi,\by,S)
$$
to denote the output produced by Algorithm~\ref{alg:cosamp} when the arguments $(\bA,\bPsi,\by,S)$ are provided as input.
Having set this notation, it is also reasonable to consider invoking Algorithm~\ref{alg:cosamp} with the input arguments $(\bA \bPsi, \bI, \by, S)$.
This formulation is more analogous to~\eqref{eq:IHTalpha}. In this case we will denote the output by $\widehat{\balpha} = \mathrm{CoSaMP}(\bA \bPsi, \bI, \by, S)$ since the algorithm will construct and output an estimate of $\balpha$ (rather than $\bx$).

\begin{algorithm}[t]
\caption{Compressive Sampling Matching Pursuit (CoSaMP)} \label{alg:cosamp}
\begin{algorithmic}
\STATE \textbf{input:} $\bA$, $\bPsi$, $\by$, $S$, stopping criterion
\STATE \textbf{initialize:} $\br^0 = \by$,  $\bx^0 = 0$, $\ell = 0$
\WHILE{not converged}
\STATE
\begin{tabular}{ll}
\textbf{proxy:} & $\bh^{\ell} = \bA^H \br^{\ell}$ \\
\textbf{identify:} & $\Omega^{\ell+1} = \supp(\hard( \bPsi^H \bh^{\ell},2S))$ \\
\textbf{merge:} & $\Lambda^{\ell+1} = \supp( \hard( \bPsi^H \bx^{\ell}, S ) ) \cup \Omega^{\ell+1}$ \\
\textbf{update:} & $\widetilde{\bx} = \argmin_{\bz}  \norm{\by - \bA \bz} \quad \mathrm{s.t.} \quad \bz \in \cR(\bPsi_{\Lambda^{\ell+1}})$ \\
 & $\bx^{\ell+1} = \bPsi \cdot \hard(\bPsi^H \widetilde{\bx},S)$  \\
 & $\br^{\ell+1} = \by - \bA \bx^{\ell+1}$   \\
 & $\ell = \ell+1$
\end{tabular}
\ENDWHILE
\STATE \textbf{output:} $\widehat{\bx} = \bx^{\ell}$
\end{algorithmic}
\end{algorithm}

\subsubsection{``Model-based'' recovery algorithms}

Traditional approaches to CS signal recovery, like those described above, place no prior assumptions on $\supp(\balpha)$.  Sparsity on its own implies nothing about the locations of the nonzeros, and hence most approaches to CS signal recovery treat every possible support as equally likely.  However, in many practical applications the nonzeros are not distributed completely at random, but rather exhibit a degree of structure. In the case of signals exhibiting such {\em structured sparsity}, it is possible to both reduce the required number of measurements and develop specialized ``model-based'' algorithms for recovery that exploit this structure~\cite{BaranCDH_Model,Blume_Sampling}.

In this paper, we are interested in the model of {\em block-sparsity}~\cite{BaranCDH_Model,EldarKB_Block-Sparse}.  In a block-sparse vector $\balpha$, the nonzero coefficients cluster in a small number of blocks.  Specifically, suppose that $\bx  = \bPsi \balpha$ with $\bPsi$ being an $N \times D$ matrix and that we decompose $\bPsi$ into $J$ submatrices of size $N \times \frac{D}{J}$, i.e.,
$$
\bPsi = \left[ \bPsi_0 ~ \bPsi_1 ~ \cdots ~ \bPsi_{J-1} \right].
$$
Then we can write $\bx = \sum_{i=0}^{J-1} \bPsi_i \balpha_i$, where each $\balpha_i \in \complex^{D/J}$.  We say that $\balpha$ is $K$-block-sparse if there exists a set $\cI \subseteq \{0, 1, \ldots, J-1 \}$ such that $|\cI| = K$ and $\balpha_i = 0$ for all $i \notin \cI$.
With some abuse of notation, we now let let $\bPsi_{\cI}$ denote the submatrix of $\bPsi$ that contains only the columns of $\bPsi$ corresponding to the {\em blocks} indexed by $\cI$.

We first illustrate how we can exploit block-sparsity algorithmically. Our goal is to generalize IHT and CoSaMP to the block-sparse setting.  To do this, we observe that the hard thresholding function plays a key role in both algorithms.  One way to interpret this role is that $\hard(\bPsi^H \bx,S)$ is actually computing a projection of $\bPsi^H \bx$ onto the set of $S$-sparse vectors.  In the case where $\bPsi$ is an orthonormal basis we can also interpret $\bPsi \cdot \hard(\bPsi^H \bx,S)$ as projecting $\bx$ onto the set of signals that are $S$-sparse with respect to the basis $\bPsi$.

In the block-sparse case we must replace hard thresholding with an appropriate operator that takes a candidate signal and finds the closest $K$-block-sparse approximation.  Towards this end, we define
\begin{equation} \label{eq:blocksupp}
\cS_{\bPsi}(\bx,K) := \argmin_{|\cI| \le K} \min_{\bz \in \cR(\bPsi_{\cI})} \norm{\bx - \bz}.
\end{equation}
$\cS_{\bPsi}(\bx,K)$ is analogous to the support of $\balpha$ in the traditional sparse setting: it tells us which set of $K$ blocks of $\bPsi$ can best approximate $\bx$.  Along with $\cS_{\bPsi}(\bx,K)$, we also define
\begin{equation} \label{eq:blockproj}
\cP_{\bPsi}(\bx,K) := \argmin_{\bz} \norm{\bx - \bz} \quad \mathrm{s.t.} \quad \bz \in \cR \left(\bPsi_{\cS_{\bPsi}(\bx,K)}\right).
\end{equation}
$\cP_{\bPsi}$ is simply the projection of the vector $\bx \in \complex^N$ onto the set of $K$-block-sparse signals. To simplify our notation, we will often write $\cS(\balpha,K) = \cS_{\bPsi}(\balpha,K)$ and $\cP(\bx,K) = \cP_{ \bPsi}(\bx,K)$ when $\bPsi$ is clear from the context.

For each of the IHT and CoSaMP algorithms proposed in Section~\ref{sec:greedyiterative}, it is possible to propose a variation of the algorithm designed to exploit block-sparsity simply by replacing the hard thresholding operator with an appropriate block-sparse projection.
For example, one block-based version of IHT (which is also a special case of the iterative projection algorithm in~\cite{Blume_Sampling}), would consist of replacing the core iteration of IHT in~\eqref{eq:IHT} with
\begin{equation} \label{eq:IHTblock}
\bx^{\ell+1} = \cP(\bx^{\ell} + \mu \bA^H(\by-\bA \bx^\ell),K).
\end{equation}
Note that the only difference from~\eqref{eq:IHT} is that we have replaced hard thresholding with the projection onto the set of block-sparse signals.
Similarly, a block-based version of CoSaMP is shown in Algorithm~\ref{alg:bbcosamp}.

\begin{algorithm}[t]
\caption{Block-Based CoSaMP (BBCoSaMP)} \label{alg:bbcosamp}
\begin{algorithmic}
\STATE \textbf{input:} $\bA$, $\bPsi$, $\by$, $K$, stopping criterion
\STATE \textbf{initialize:} $\br^0 = \by$, $\bx^0 = 0$, $\ell = 0$
\WHILE{not converged}
\STATE
\begin{tabular}{ll}
\textbf{proxy:} & $\bh^{\ell} = \bA^H \br^{\ell}$ \\
\textbf{identify:} & $\widetilde{\cI}^{\ell+1} = \cS( \cP( \bh^{\ell},2K),2K)$ \\
\textbf{merge:} & $\cI^{\ell+1} = \cS(\bx^{\ell},K) \cup \widetilde{\cI}^{\ell+1}$ \\
\textbf{update:} & $\widetilde{\bx} = \argmin_{\bz}  \norm{\by - \bA \bz} \quad \mathrm{s.t.} \quad \bz \in \cR(\bPsi_{\cI^{\ell+1}})$ \\
 & $\bx^{\ell+1} = \cP(\widetilde{\bx},K)$ \\
 & $\br^{\ell+1} = \by - \bA \bx^{\ell+1}$   \\
 & $\ell = \ell+1$
\end{tabular}
\ENDWHILE
\STATE \textbf{output:} $\widehat{\bx} = \bx^{\ell}$
\end{algorithmic}
\end{algorithm}

Algorithm~\ref{alg:bbcosamp} is in differing senses both a special case and a generalization of the model-based CoSaMP algorithm proposed in~\cite{BaranCDH_Model}.
Specifically, in~\cite{BaranCDH_Model} an algorithm for block-sparse signal recovery is proposed that is equivalent to Algorithm~\ref{alg:bbcosamp} when $\bPsi = \bI$.
The more general case of arbitrary $\bPsi$ is not discussed.
However, there are alternative options for handling $\bPsi \neq \bI$ besides the one specified in Algorithm~\ref{alg:bbcosamp}.
Like Algorithm~\ref{alg:cosamp}, Algorithm~\ref{alg:bbcosamp} is quite flexible and can be invoked in multiple ways.
Following our convention in Section~\ref{sec:greedyiterative}, we use the notation $\widehat{\bx} = \mathrm{BBCoSaMP}(\bA,\bPsi,\by,K)$ to denote the output produced by Algorithm~\ref{alg:bbcosamp} when the arguments $(\bA,\bPsi,\by,K)$ are provided as input, and we use the notation $\widehat{\balpha} = \mathrm{BBCoSaMP}(\bA \bPsi, \bI, \by, K)$ to denote the output produced by Algorithm~\ref{alg:bbcosamp} when the arguments $(\bA \bPsi, \bI, \by, K)$ are provided as input.

\subsubsection{``Model-based'' recovery guarantees}
\label{sec:mbguarantees}

Theoretical guarantees for standard CS recovery algorithms typically rely on the RIP, and since in the standard case any $S$-sparse signal is possible, there is little room for improvement.  However, the block-sparse model actually rules out a large number of possible signal supports, and so we no longer require the full RIP or $\bPsi$-RIP, i.e., we no longer need~\eqref{RIP} or~\eqref{Psi-RIP} to hold for all possible $S$-sparse signals.  Instead we only require that~\eqref{RIP} or~\eqref{Psi-RIP} hold for all $\balpha$ which are $K$-block-sparse.  We will refer to these relaxed properties as the block-RIP and $\bPsi$-block-RIP respectively.

The relaxation to block-sparse signals allows us to potentially dramatically reduce the required number of measurements. Specifically, note that a $K$-block-sparse vector $\balpha$ satisfies $\norm[0]{\balpha} \le K \frac{D}{J}$. In the standard sparsity model we would have that the number of possible subspaces is $\binom{D}{S}$ with $S = K \frac{D}{J}$, whereas now the number of possible subspaces is given by $\binom{J}{K}$, which can be potentially much smaller.  Establishing~\eqref{RIP} or~\eqref{Psi-RIP} for a more general union of subspaces is a problem that has received some attention in the CS literature~\cite{BlumeD_Sampling,LuD_Theory,EldarM_Robust}.  In our context it should be clear that we can simply apply Lemma~\ref{lem:subspace} just as in the proofs of Lemma~\ref{lem:RIP} and Corollary~\ref{cor:PsiRIP} to obtain the following improved bounds.

\begin{cor} \label{cor:blockRIP}
Let $\bPsi$ be an orthonormal basis for $\complex^N$ and fix $\delta, \beta \in (0,1)$. Let $\bA$ be an $M \times N$ random matrix with i.i.d.\ entries chosen from a distribution satisfying (\ref{star}). If
\begin{equation} \label{eq:blockRIP-M}
M \ge \frac{ 2K \left(\frac{N}{J} \log (42/ \delta) + \log (e J/K)\right) + \log(4/\beta) }{c_1(\delta/\sqrt{2})}
\end{equation}
with $e$ denoting the base of the natural logarithm, then with probability exceeding $1 - \beta$, $\bA \bPsi$ will satisfy the block-RIP of order $K$ with constant $\delta$.
\end{cor}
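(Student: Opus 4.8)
The plan is to recognize the set of $K$-block-sparse signals as a union of $\binom{J}{K}$ low-dimensional subspaces and then apply Lemma~\ref{lem:subspace} to each subspace together with a union bound, in direct parallel with the proof of Lemma~\ref{lem:RIP}. First I would note that, because $\bPsi$ is an orthonormal basis for $\complex^N$, we have $D = N$ and each block $\bPsi_i$ consists of $\frac{N}{J}$ orthonormal columns. Consequently, for any index set $\cI \subseteq \{0,\ldots,J-1\}$ with $|\cI| = K$, the range $\cX_\cI := \cR(\bPsi_\cI)$ is a subspace of $\complex^N$ of dimension exactly $S := K\frac{N}{J}$. As $\balpha$ ranges over the $K$-block-sparse vectors whose active blocks are indexed by $\cI$, the signal $\bx = \bPsi\balpha$ ranges over all of $\cX_\cI$; moreover orthonormality gives $\norm{\bPsi\balpha} = \norm{\balpha}$, so the desired block-RIP inequality~\eqref{RIP} for such $\balpha$ is \emph{identical} to the subspace embedding~\eqref{eq:subspace-embed} for $\bx \in \cX_\cI$. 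This is precisely the step where orthonormality of $\bPsi$ is used and where~\eqref{RIP} and~\eqref{Psi-RIP} coincide.

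Next I would invoke Lemma~\ref{lem:subspace} on each of these $\binom{J}{K}$ subspaces $\cX_\cI$, each of dimension $S = K\frac{N}{J}$, but with the per-subspace failure probability set to $\beta/\binom{J}{K}$ in place of $\beta$. A union bound over all $\binom{J}{K}$ choices of $\cI$ then guarantees that~\eqref{eq:subspace-embed}---and hence~\eqref{RIP}---holds simultaneously on every admissible block pattern with probability at least $1-\beta$. Substituting $\beta \mapsto \beta/\binom{J}{K}$ into the hypothesis of Lemma~\ref{lem:subspace}, the required number of measurements becomes
\[
M \ge \frac{2S\log(42/\delta) + \log\!\left(4\binom{J}{K}/\beta\right)}{c_1(\delta/\sqrt{2})}.
\]

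Finally I would simplify the combinatorial term with the standard estimate $\binom{J}{K} \le (eJ/K)^K$, giving $\log\!\left(4\binom{J}{K}/\beta\right) \le K\log(eJ/K) + \log(4/\beta)$, and insert $S = K\frac{N}{J}$. This produces the sufficient condition
\[
M \ge \frac{2K\frac{N}{J}\log(42/\delta) + K\log(eJ/K) + \log(4/\beta)}{c_1(\delta/\sqrt{2})},
\]
which is implied by the stated bound~\eqref{eq:blockRIP-M}, since the latter merely replaces the $K\log(eJ/K)$ term by the larger $2K\log(eJ/K)$ (harmless because $\log(eJ/K)\ge 0$ whenever $K \le J$). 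Thus~\eqref{eq:blockRIP-M} certifies the block-RIP of order $K$ with constant $\delta$.

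I expect no genuine obstacle here: the argument is the block-sparse analog of Lemma~\ref{lem:RIP}, with the count $\binom{N}{S} \le (eN/S)^S$ of coordinate-aligned subspaces replaced by the far smaller count $\binom{J}{K} \le (eJ/K)^K$ of admissible block patterns, which is exactly what drives the measurement savings. The only points requiring care are the bookkeeping of the subspace dimension (it is $S = K\frac{N}{J}$, not $K$, so the leading $\log(42/\delta)$ term scales with $K\frac{N}{J}$) and the observation that the condition~\eqref{eq:blockRIP-M} as stated is a slightly conservative form of the bound the union bound actually yields.
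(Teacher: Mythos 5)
Your proposal is correct and follows exactly the route the paper intends: the paper proves this corollary by remarking that one can ``simply apply Lemma~\ref{lem:subspace} just as in the proofs of Lemma~\ref{lem:RIP} and Corollary~\ref{cor:PsiRIP},'' i.e., a union bound over the $\binom{J}{K} \le (eJ/K)^K$ coordinate-aligned block subspaces, each of dimension $K\frac{N}{J}$, with orthonormality of $\bPsi$ identifying~\eqref{RIP} with the subspace embedding~\eqref{eq:subspace-embed}. Your observation that the stated bound~\eqref{eq:blockRIP-M} carries a harmless factor of $2$ on the $K\log(eJ/K)$ term matches the same conservatism already present in~\eqref{eq:RIP-M}.
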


\begin{cor} \label{cor:PsiblockRIP}
Let $\bPsi$ be an arbitrary $N \times D$ matrix and fix $\delta, \beta \in (0,1)$. Let $\bA$ be an $M \times N$ random matrix with i.i.d.\ entries chosen from a distribution satisfying (\ref{star}). If
\begin{equation} \label{eq:Psi-blockRIP-M}
M \ge \frac{ 2 K \left(\frac{D}{J} \log (42/ \delta) + \log (e J/K)\right) + \log(4/\beta) }{c_1(\delta/\sqrt{2})}
\end{equation}
with $e$ denoting the base of the natural logarithm, then with probability exceeding $1 - \beta$, $\bA$ will satisfy the $\bPsi$-block-RIP of order $K$ with constant $\delta$.
\end{cor}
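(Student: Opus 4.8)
The plan is to mimic the proofs of Lemma~\ref{lem:RIP} and Corollary~\ref{cor:PsiRIP} essentially verbatim, replacing the collection of coordinate-aligned subspaces by the smaller collection of \emph{block}-aligned subspaces. The starting observation is that the $\bPsi$-block-RIP of order $K$---the requirement that \eqref{Psi-RIP} hold for all $K$-block-sparse $\balpha$---is equivalent to the two-sided embedding \eqref{eq:subspace-embed} holding simultaneously for every $\bx$ lying in $\cR(\bPsi_{\cI})$, as $\cI$ ranges over all index sets with $|\cI| = K$. Indeed, a $K$-block-sparse $\balpha$ with active blocks $\cI$ yields a signal $\bPsi \balpha \in \cR(\bPsi_{\cI})$, and conversely every vector in $\cR(\bPsi_{\cI})$ arises this way; so controlling the ratio $\norm{\bA \bPsi \balpha}/\norm{\bPsi \balpha}$ over all $K$-block-sparse $\balpha$ is the same as controlling $\norm{\bA \bx}/\norm{\bx}$ over each of these subspaces.

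For a fixed $\cI$, the submatrix $\bPsi_{\cI}$ has $K \frac{D}{J}$ columns, so $\cR(\bPsi_{\cI})$ is a subspace of dimension at most $S := K \frac{D}{J}$. I would apply Lemma~\ref{lem:subspace} to each such subspace, allocating a per-subspace failure probability of $\beta / \binom{J}{K}$, and then take a union bound over the $\binom{J}{K}$ choices of $\cI$. This forces the total failure probability to be at most $\beta$, while the measurement requirement inherited from Lemma~\ref{lem:subspace} becomes
\[
M \ge \frac{ 2S \log(42/\delta) + \log\!\left( 4 \binom{J}{K} / \beta \right) }{ c_1(\delta/\sqrt{2}) }.
\]
Substituting $S = K \frac{D}{J}$ and using the standard estimate $\binom{J}{K} \le (eJ/K)^K$, so that $\log \binom{J}{K} \le K \log(eJ/K)$, then yields a sufficient condition that is implied by \eqref{eq:Psi-blockRIP-M}.

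I do not expect a genuine obstacle here, since the argument is structurally identical to the orthonormal case already treated in Corollary~\ref{cor:blockRIP}. The only point warranting a moment's care is that $\cR(\bPsi_{\cI})$ may have dimension strictly smaller than $K \frac{D}{J}$ when the columns of $\bPsi$ are linearly dependent; this is harmless because the hypothesis of Lemma~\ref{lem:subspace} is monotone in the subspace dimension, so bounding the dimension above by $S = K \frac{D}{J}$ is always safe (one may, if desired, embed the subspace in an $S$-dimensional superspace before invoking the lemma). Finally, the bound \eqref{eq:Psi-blockRIP-M} carries a factor $2K$ rather than $K$ in front of $\log(eJ/K)$; this extra slack, chosen to parallel the form of Corollary~\ref{cor:blockRIP}, only enlarges the required $M$ and hence strengthens the guarantee.
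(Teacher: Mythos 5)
Your proposal is correct and is essentially the paper's own proof: the paper establishes Corollary~\ref{cor:PsiblockRIP} in exactly the same way, by applying Lemma~\ref{lem:subspace} to each of the $\binom{J}{K}$ block-aligned subspaces $\cR(\bPsi_{\cI})$ (each of dimension at most $K\frac{D}{J}$) and taking a union bound using $\binom{J}{K} \le (eJ/K)^K$, just as in the proofs of Lemma~\ref{lem:RIP} and Corollary~\ref{cor:PsiRIP}. Your two side remarks---that a possibly smaller subspace dimension is harmless and that the factor $2K$ in front of $\log(eJ/K)$ in~\eqref{eq:Psi-blockRIP-M} is slack beyond what the union bound requires---are accurate refinements of details the paper leaves implicit.
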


The measurement requirements in~\eqref{eq:blockRIP-M} and~\eqref{eq:Psi-blockRIP-M} represent improvements over~\eqref{eq:RIP-M} or~\eqref{eq:Psi-RIP-M} in that a straightforward application of~\eqref{eq:RIP-M} or~\eqref{eq:Psi-RIP-M} would lead to replacing the $\log(eJ/K)$ term above with $\frac{N}{J} \log(eJ/K)$ or $\frac{D}{J} \log(eJ/K)$ respectively.

We can combine these corollaries with the following theorems to show that we can stably recover block-sparse signals using fewer measurements. Note that the following theorems are simplified guarantees for the case of only approximately block-sparse signals. Both theorems can be generalized to block-compressible signals,\footnote{By block-compressible, we mean signals that are well-approximated by a block-sparse signal.  The guarantee on the recovery error for block-compressible signals is similar to those in Theorems~\ref{thm:blockiht} and~\ref{thm:blockcosamp} but includes an additional additive component that quantifies the error incurred by approximating $\balpha$ with a block-sparse signal.  If a signal is close to being block-sparse, then this error is negligible, but if a signal is not block-sparse at all, then this error can potentially be large.} but we restrict our attention to the guarantees for sparse signals to simplify discussion.

\begin{thm}[Theorem 2 from~\cite{Blume_Sampling}] \label{thm:blockiht}
Suppose that $\bx$ can be written as $\bx = \bPsi \balpha$ where $\balpha$ is $K$-block-sparse and that we observe $\by = \bA \bPsi \balpha + \be = \bA \bx + \be$.  If $\bA$ satisfies the $\bPsi$-block-RIP of order $2K$ with constant $\delta_{2K} \le 0.2$ and $\frac{1}{\mu} \in [1+\delta_{2K},1.5(1-\delta_{2K}))$, then the estimate obtained from block-based IHT~\eqref{eq:IHTblock} satisfies
\begin{equation} \label{eq:bbihtthm}
\norm{\bx - \widehat{\bx}} \le \kappa_1 \norm{\be}
\end{equation}
where $\kappa_1>1$ is a constant determined by $\delta_{2K}$  and the stopping criterion.
\end{thm}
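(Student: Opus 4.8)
The plan is to run the standard iterative-hard-thresholding convergence argument, adapted to the union-of-subspaces geometry induced by block-sparsity, and to extract a per-iteration contraction that drives the iterates toward $\bx$ up to a noise floor proportional to $\norm{\be}$. I would work entirely in the signal domain $\complex^N$, since both $\bx$ and every iterate $\bx^\ell$ lie in a union of the block subspaces $\cR(\bPsi_{\cI})$ with $|\cI| \le K$, and the $\bPsi$-block-RIP is precisely a statement about how $\bA$ acts on such signals.

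First I would set up the per-iteration error recursion. Writing $\bz^\ell = \bx^\ell + \mu\bA^H(\by - \bA\bx^\ell)$ and substituting $\by = \bA\bx + \be$ gives $\bz^\ell - \bx = -(\bI - \mu\bA^H\bA)(\bx - \bx^\ell) + \mu\bA^H\be$. The structural fact I would lean on is that $\bx^{\ell+1} = \cP(\bz^\ell,K)$ is, by~\eqref{eq:blockproj}, the best $K$-block-sparse approximation to $\bz^\ell$, while $\bx$ is itself $K$-block-sparse; hence $\norm{\bz^\ell - \bx^{\ell+1}} \le \norm{\bz^\ell - \bx}$. Expanding this inequality and cancelling the common $\norm{\bz^\ell}^2$ term yields the compact bound $\norm{\bx^{\ell+1} - \bx}^2 \le 2\,\iprod{\bx^{\ell+1}-\bx}{\bz^\ell - \bx}$.

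The heart of the argument is then to control the right-hand side using the $\bPsi$-block-RIP. Setting $\boldsymbol{d}^{\ell+1} := \bx^{\ell+1}-\bx$, this difference lies in $\cR(\bPsi_{\mathcal{J}})$ for a block set $\mathcal{J}$ with $|\mathcal{J}| \le 2K$; letting $P_{\mathcal{J}}$ denote the orthogonal projection onto $\cR(\bPsi_{\mathcal{J}})$ and applying Cauchy--Schwarz gives $\norm{\boldsymbol{d}^{\ell+1}} \le 2\norm{P_{\mathcal{J}}(\bz^\ell - \bx)}$. I would bound the noise part $2\mu\norm{P_{\mathcal{J}}\bA^H\be}$ by $2\mu\sqrt{1+\delta_{2K}}\,\norm{\be}$ using the upper block-RIP inequality. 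For the signal part, the restricted isometry bounds translate into an operator estimate $\norm{P_{\mathcal{J}}(\bI - \mu\bA^H\bA)(\bx-\bx^\ell)} \le \max\{|1-\mu(1-\delta_{2K})|,\,|1-\mu(1+\delta_{2K})|\}\,\norm{\bx-\bx^\ell}$ over the relevant block subspaces. Combining these pieces produces a recursion $\norm{\bx - \bx^{\ell+1}} \le \rho\,\norm{\bx - \bx^\ell} + c\,\norm{\be}$.

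The main obstacle---and where the hypotheses $\delta_{2K}\le 0.2$ and $\tfrac{1}{\mu}\in[1+\delta_{2K},\,1.5(1-\delta_{2K}))$ do their work---is certifying that the contraction factor $\rho$ is strictly below $1$. The admissible range for $\mu$ is exactly the set of step sizes for which the restricted norm of $\bI - \mu\bA^H\bA$, inflated by the factor of $2$ from the projection bound, stays under one; the stated interval for $\tfrac{1}{\mu}$ is nonempty precisely when $1+\delta_{2K} < 1.5(1-\delta_{2K})$, i.e.\ $5\delta_{2K} < 1$, which is why the threshold $\delta_{2K} \le 0.2$ appears. The precise endpoints and the factor $1.5$ emerge from the careful step-size and support bookkeeping of~\cite{Blume_Sampling}. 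Once $\rho<1$ is in hand, I would unroll the recursion as a geometric series, so that $\norm{\bx-\bx^\ell} \le \rho^\ell\norm{\bx - \bx^0} + \tfrac{c}{1-\rho}\norm{\be}$; after enough iterations the transient $\rho^\ell\norm{\bx-\bx^0}$ term falls below the noise floor, yielding $\norm{\bx - \widehat{\bx}} \le \kappa_1\norm{\be}$ with $\kappa_1$ built from $c/(1-\rho)$ and a stopping-criterion-dependent margin. The one genuinely delicate point throughout is the non-orthonormality of $\bPsi$: because $\cP$ and $\cS$ are defined through the optimizations~\eqref{eq:blocksupp}--\eqref{eq:blockproj} rather than by coordinate thresholding, one must verify both that the projection-optimality inequality holds in this generality and that the combined block support stays within the order for which the $\bPsi$-block-RIP is assumed---exactly the generalization carried out in~\cite{Blume_Sampling}.
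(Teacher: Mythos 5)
The paper does not prove this statement at all---it is imported verbatim as Theorem 2 of~\cite{Blume_Sampling}---so the relevant comparison is with Blumensath's proof, and your proposal takes a genuinely different route that, as written, has a gap precisely at the step where the hypothesis $\delta_{2K}\le 0.2$ is supposed to do its work. Your argument is the classical projection-optimality IHT analysis: from $\norm{\bx^{\ell+1}-\bx}^2 \le 2\iprod{\bx^{\ell+1}-\bx}{\bz^\ell-\bx}$ you pass to the operator estimate $\norm{\bP_{\mathcal{J}}(\bI-\mu\bA^H\bA)(\bx-\bx^\ell)} \le \max\{|1-\mu(1-\delta_{2K})|,\,|1-\mu(1+\delta_{2K})|\}\,\norm{\bx-\bx^\ell}$. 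That estimate would be justified if $\bx-\bx^\ell$ lay in $\cR(\bPsi_{\mathcal{J}})$, but it does not: $\mathcal{J}$ collects the blocks of $\bx$ and $\bx^{\ell+1}$, whereas $\bx-\bx^\ell$ lies in $\cR(\bPsi_{\mathcal{J}'})$ where $\mathcal{J}'$ collects the blocks of $\bx$ and $\bx^\ell$. Controlling $\norm{\bP_{\mathcal{J}}(\bI-\mu\bA^H\bA)\bv}$ for $\bv\in\cR(\bPsi_{\mathcal{J}'})$ requires knowing how $\bA$ acts on $\cR(\bPsi_{\mathcal{J}\cup\mathcal{J}'})$, and $|\mathcal{J}\cup\mathcal{J}'|$ can be as large as $3K$ since the three vectors $\bx$, $\bx^\ell$, $\bx^{\ell+1}$ all enter. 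So your recursion needs the $\bPsi$-block-RIP of order $3K$, which is not among the hypotheses; this is exactly why the classical analyses in this style (Blumensath--Davies, Foucart) state their conditions in terms of a constant of order $3S$.

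The cited proof avoids the third support through a different decomposition: a majorization-minimization argument. One observes that $\bx^{\ell+1}$ minimizes over the union of $K$-block subspaces the surrogate $C^S(\bz) = \norm{\by-\bA\bz}^2 - \norm{\bA(\bz-\bx^\ell)}^2 + \frac{1}{\mu}\norm{\bz-\bx^\ell}^2$; the condition $\frac{1}{\mu}\ge 1+\delta_{2K}$ makes $C^S$ majorize the true cost at $\bx^{\ell+1}$ (this invokes the RIP only on $\bx^{\ell+1}-\bx^\ell$), and then the chain $\norm{\by-\bA\bx^{\ell+1}}^2 \le C^S(\bx^{\ell+1}) \le C^S(\bx)$, together with the lower RIP bound applied separately to $\bx-\bx^{\ell+1}$ and to $\bx-\bx^\ell$, yields $\norm{\bx-\bx^{\ell+1}} \le \sqrt{\frac{1/\mu}{1-\delta_{2K}}-1}\,\norm{\bx-\bx^\ell} + \frac{2}{\sqrt{1-\delta_{2K}}}\norm{\be}$. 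Every RIP application involves the difference of only two of the three vectors $\bx,\bx^\ell,\bx^{\ell+1}$, which is why order $2K$ suffices, and the upper limit $\frac{1}{\mu} < 1.5(1-\delta_{2K})$ caps the contraction factor at $\sqrt{1/2}$, after which unrolling the recursion gives the stated bound. To repair your write-up you must either strengthen the hypothesis to an order-$3K$ condition (proving a weaker theorem than the one stated) or switch to this surrogate decomposition.
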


\begin{thm}[Theorem 6 from~\cite{BaranCDH_Model}] \label{thm:blockcosamp}
Suppose that $\balpha$ is $K$-block-sparse and that we observe $\by = \bA \bPsi \balpha + \be$.  If $\bA \bPsi$ satisfies the block-RIP of order $4K$ with constant $\delta_{4K} \le 0.1$, then the output of block-based CoSaMP (Algorithm~\ref{alg:bbcosamp}) with $\widehat{\balpha} = \mathrm{BBCoSaMP}(\bA \bPsi, \bI, \by, K)$ satisfies
\begin{equation} \label{eq:bbcosampthm}
\norm{\balpha - \widehat{\balpha}} \le \kappa_2 \norm{\be}
\end{equation}
where $\kappa_2>1$ is a constant determined by $\delta_{4K}$ and the stopping criterion.
\end{thm}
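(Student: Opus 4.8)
The plan is to observe that the particular invocation $\widehat{\balpha} = \mathrm{BBCoSaMP}(\bA\bPsi,\bI,\by,K)$ collapses the problem to its canonical form. Taking $\widetilde{\bA} := \bA\bPsi$ as the effective sensing matrix and $\bI$ as the dictionary, Algorithm~\ref{alg:bbcosamp} operates directly on the coefficient vector $\balpha$: the operators $\cS$ and $\cP$ reduce to block-sparse support selection and orthogonal projection in the standard coordinates, and the hypothesis is precisely that $\widetilde{\bA}$ restricted-isometrically embeds the block-sparse union of subspaces (the block-RIP of order $4K$). In this form the statement is an instance of the model-based CoSaMP guarantee, and I would prove it by adapting the Needell--Tropp CoSaMP analysis to the block-sparse union-of-subspaces model exactly as in~\cite{NeedeT_CoSaMP,BaranCDH_Model}.

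The core of the argument is an iteration invariant exhibiting geometric decay of the estimation error. Writing $\widehat{\balpha}^\ell$ for the estimate at iteration $\ell$, I would establish that
\begin{equation*}
\norm{\balpha - \widehat{\balpha}^{\ell+1}} \le \rho \norm{\balpha - \widehat{\balpha}^{\ell}} + \tau \norm{\be}
\end{equation*}
for a contraction factor $\rho < 1$ and a constant $\tau$, both functions of $\delta_{4K}$. Unrolling this recursion and invoking the stopping criterion then yields~\eqref{eq:bbcosampthm} with $\kappa_2$ depending on $\rho$, $\tau$, and the number of iterations; the constraint $\delta_{4K} \le 0.1$ is exactly what forces $\rho$ comfortably below $1$ (e.g.\ $\rho \le 1/2$, as in the classical analysis).

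To obtain the invariant I would break each iteration into the four standard stages and bound the error introduced at each, carefully tracking the block-sparsity level so that the block-RIP of the correct order always applies. In the identification step, the proxy $\bh^\ell = \widetilde{\bA}^H\br^\ell$ is shown, via the block-RIP, to approximately recover the largest blocks of the current residual signal $\balpha - \widehat{\balpha}^\ell$; since this residual is $2K$-block-sparse and we select $2K$ blocks, $\widetilde{\cI}^{\ell+1}$ captures all but a small fraction of its energy. The support merger then guarantees that $\cI^{\ell+1}$ contains the relevant support, with $|\cI^{\ell+1}| \le 3K$ blocks. The least-squares update on the coordinate subspace $\cR(\bI_{\cI^{\ell+1}})$ is accurate up to the noise level, using the block-RIP on the at-most-$3K$ active blocks together with the true $K$-block support. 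Finally, the block-sparse pruning $\cP(\cdot,K)$ loses at most a constant factor, since projection onto the model is no worse than twice the best $K$-block-sparse approximation error. Each of these bounds references a union of at most $4K$ blocks, which is precisely the order assumed.

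The main obstacle is the structural bookkeeping for the union of subspaces rather than any individual inequality. One must verify that the block-sparse model is closed under the relevant additions (sums of $K$- and $2K$-block-sparse vectors remain within the $4K$ regime), that $\cP$ genuinely returns an optimal block-sparse projection onto the selected model so that the pruning step behaves like orthogonal projection onto a union of subspaces, and that every block-RIP invocation touches at most $4K$ blocks. Once this combinatorial and structural accounting is in place, the norm inequalities are direct translations of the classical CoSaMP lemmas, with ordinary sparsity replaced by block-sparsity and the ordinary RIP replaced by the block-RIP, so I do not anticipate difficulty in that part.
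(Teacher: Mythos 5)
Your proposal is correct and matches the paper's approach: the paper does not re-prove this statement but imports it directly as Theorem 6 of~\cite{BaranCDH_Model}, justified by exactly the reduction you identify---that $\mathrm{BBCoSaMP}(\bA\bPsi,\bI,\by,K)$ coincides with the model-based CoSaMP algorithm of~\cite{BaranCDH_Model} applied to the coefficient vector $\balpha$ with effective sensing matrix $\bA\bPsi$ and the block-sparse union-of-subspaces model. Your sketch of the iteration invariant with geometric decay, the four-stage error accounting, and the $4K$ bookkeeping is precisely the argument carried out in the cited reference (adapting~\cite{NeedeT_CoSaMP}), so there is nothing genuinely different in route.
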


Theorems~\ref{thm:blockiht} and~\ref{thm:blockcosamp} show that measurement noise has a controlled impact on the amount of noise in the reconstruction.
However, note that Theorems~\ref{thm:blockiht} and~\ref{thm:blockcosamp} are fundamentally different from one another when the matrix $\bPsi$ is not an orthonormal basis.
Theorem~\ref{thm:blockcosamp} requires the assumption that $\bA \bPsi$ satisfies the block-RIP while Theorem~\ref{thm:blockiht} requires that $\bA$ satisfies the $\bPsi$-block-RIP.
Theorem~\ref{thm:blockcosamp} also provides a different guarantee (recovery of $\balpha$) compared to Theorem~\ref{thm:blockiht} (which guarantees recovery of $\bx$).
In the case where $\bPsi$ is an orthonormal basis, we could immediately set $\widehat{\bx} = \bPsi \widehat{\balpha}$ so that the recovery guarantee in~\eqref{eq:bbcosampthm} applies to the error in $\bx$ as well.
However, for arbitrary dictionaries $\bPsi$ this equivalence no longer holds.\footnote{It is also worth noting that in the context of traditional (as opposed to model-based) CS, there do exist guarantees on $\norm{\widehat{\bx}-\bx}$ for general $\bPsi$ when using an alternative optimization-based approach combined with the $\bPsi$-RIP~\cite{CandeENR_Compressed}.
}
We conjecture that were we to instead consider $\widehat{\bx} = \mathrm{BBCoSaMP}(\bA, \bPsi, \by, K)$ we should be able to establish a theorem for block-based CoSaMP analogous to Theorem~\ref{thm:blockiht}, but we leave such an analysis for future work.

In Section~\ref{sec:MBDPSS} we develop a multiband modulated DPSS dictionary $\bPsi$ designed to offer high-quality block-sparse representations for sampled multiband signals.
Using this dictionary we establish block-RIP and $\bPsi$-block-RIP guarantees in Section~\ref{sec:recovery}, which allows us to translate Theorems~\ref{thm:blockiht} and~\ref{thm:blockcosamp} into guarantees for recovery of sampled multiband signals from compressive measurements.
When we then turn to implement these algorithms, however, it is important to note that, although we can implement the $\mathrm{BBCoSaMP}(\bA \bPsi, \bI, \by, K)$ version of block-based CoSaMP with no trouble, there is an important caveat to the results for block-based IHT and the $\mathrm{BBCoSaMP}(\bA, \bPsi, \by, K)$ version of block-based CoSaMP.
Specifically, both of those latter algorithms require that we be able to compute $\cP(\bx,K)$ as defined in~\eqref{eq:blockproj}.
Unfortunately, because our dictionary $\bPsi$ is not orthonormal we are aware of no algorithms that are guaranteed to solve this problem.
However, we will see empirically in Section~\ref{sec:sims} that we can attempt to solve this problem by applying many of the same algorithms commonly used for CS recovery.
In other words, we can perfectly implement an algorithm ($\mathrm{BBCoSaMP}(\bA \bPsi, \bI, \by, K)$) that has a provable guarantee on the recovery of $\balpha$, and we can approximately implement algorithms (block-based IHT and $\mathrm{BBCoSaMP}(\bA, \bPsi, \by, K)$) that are intended to accurately recover $\bx$.
We have implemented both variations of block-based CoSaMP, and while both perform well in practice, the empirical performance of $\mathrm{BBCoSaMP}(\bA, \bPsi, \by, K)$ turns out to be superior.  In Section~\ref{sec:sims} we present a suite of simulations demonstrating the remarkable effectiveness of $\mathrm{BBCoSaMP}(\bA, \bPsi, \by, K)$ (when combined with the multiband modulated DPSS dictionary) in recovering sampled multiband signals from compressive measurements. 

\section{A Sparse Dictionary for Sampled Multiband Signals}
\label{sec:MBDPSS}

\subsection{Analog signal models}
\label{ssec:analogmodels}

We now confront the problem of designing a dictionary $\bPsi$ in which $\bx$, a length-$N$ vector of Nyquist-rate samples of $x(t)$, will be sparse or compressible.
This is typically a significant challenge to anyone applying CS techniques to analog signals, since many natural analog signal models cannot be obviously represented via a simple basis $\bPsi$.
We now describe two of the most appealing analog signal models and discuss the degree to which these models can fit within the CS framework.

\subsubsection{Multitone signals}

There are a variety of possibilities for quantifying the notion of sparsity in a continuous-time signal $x(t)$.  Perhaps the simplest, dating back at least to the work of Prony~\cite{Prony_Essai}, is the {\em multitone} model, which assumes that $x(t)$ can be expressed as
\begin{equation} \label{eq:mtone1}
x(t) = \sum_{k=0}^{S-1} \alpha_k e^{j 2 \pi F_k t},
\end{equation}
i.e., $x(t)$ is simply a weighted combination of $S$ complex exponentials of arbitrary frequencies.  A related model is given by
\begin{equation} \label{eq:mtone2}
x(t) = \sum_{n=0}^{N-1} \balpha[n] e^{j 2 \pi (n-N/2+1) t/N \tsamp},
\end{equation}
where $\norm[0]{\balpha} = S$.  This model is considered in~\cite{TroppLDRB_Beyond}, which provided one of the first extensions of the CS framework to the case of continuous-time signals.  The advantage of this model is that (\ref{eq:mtone2}) implies that $\bx = \bPsi \balpha$, where $\bx$ is the vector of discrete-time samples of $x(t)$ on $[0,\twind)$ and $\bPsi$ is the non-normalized $N \times N$ DFT matrix with suitably ordered columns. Thus, the model (\ref{eq:mtone2}) immediately fits into the standard CS framework when the vector of coefficients $\balpha$ is sparse.

In practice, however, this approach is inadequate for two main reasons.  First, the model in (\ref{eq:mtone2}) assumes that any tones in the signal have a frequency that lies exactly on the so-called {\em Nyquist-grid}, i.e., the tones are {\em bounded harmonics}.  When dealing with tones of arbitrary frequencies, the corresponding ``DFT leakage'' results in $\balpha$ that are not sparse and are not even well-approximated as sparse. In this case it can be useful to either incorporate a smooth windowing function into the measurement system, as in~\cite{TroppLDRB_Beyond}, or to consider the less restrictive model in (\ref{eq:mtone1}), as in~\cite{DuartB_Spectral}.  However, these approaches do not address the second main objection, which is that many (if not most) real-world signals are not mere combinations of a few pure tones.  For a variety of reasons, it is typically more realistic to assume that each of the $S$ signal components has some non-negligible bandwidth, which leads us to instead consider the following extension of the multitone model.

\subsubsection{Multiband signals}
\label{ssec:multiband}

For the remainder of this paper, we will focus on {\em multiband} signals, or signals whose Fourier transform is concentrated on a small number of continuous intervals or bands.  Towards this end, for a general continuous-time signal $x(t)$, we define $\cF(x)$ as the support of $X(F)$, i.e., $\cF(x) = \{F \in \real: X(F) \neq 0 \}$.  We will be interested in signals for which we can decompose $\cF = \cF(x)$ into a union of $K$ continuous intervals, so that we can write
$$
\cF \subseteq \bigcup_{i=0}^{K-1} [a_i,b_i].
$$
In the most general setting, we would allow the endpoints of each interval to be arbitrary but subject to a restriction on the total Lebesgue measure of their union.  See, for example,~\cite{FengB_Spectrum,BreslF_Spectrum,Feng_Universal,VenkaB_Further}.  In this paper we restrict ourselves to a simpler model.  Specifically, we divide the range of possible frequencies from $-\frac{\bnyq}{2}$ to $\frac{\bnyq}{2}$ into $J = \frac{\bnyq}{\bband}$ equal bands of width $\bband$ and require $X(F)$ to be supported on at most $K$ of these bands. An example is illustrated in Figure~\ref{fig:mbspectrum}.  More formally, we define the $i^{\mathrm{th}}$ band as
$$
\Delta_i = \left[-\frac{\bnyq}{2} + i \bband, -\frac{\bnyq}{2}+ (i+1)\bband \right].
$$
We then define
$$
\Gamma(K,\bband) = \left\{ \cF: \cF \subseteq \bigcup_{i \in \cI} \Delta_i \ \mathrm{for~some} \ \cI \subseteq \{0,1,\ldots,J-1\} \ \mathrm{with} \ |\cI| \le K \right\}
$$
as the set of all possible supports. Using this, we define
\begin{equation} \label{eq:mband}
\cM(K,\bband) = \left\{ x(t) : \cF(x) \subseteq \cF' \ \mathrm{for~some} \ \cF' \in \Gamma(K,\bband) \right\}
\end{equation}
as the set of multiband signals.  Note that the total occupied bandwidth is at most $K\bband$.
Our interest is in the setting where $K\bband \ll \bnyq$, so that if we knew {\em a priori} where the $K$ bands were located, we could acquire $x(t)$ in a streaming setting with only $K\bband$ samples per second.
(This is the so-called {\em Landau rate}~\cite{Landa_Necessary}.)
Our goal is to acquire finite windows of multiband signals without such {\em a priori} knowledge while keeping the measurement rate as close as possible to the Landau rate.

\begin{figure}
   \centering
   \hspace*{-.15in} \includegraphics[width=\imgwidth]{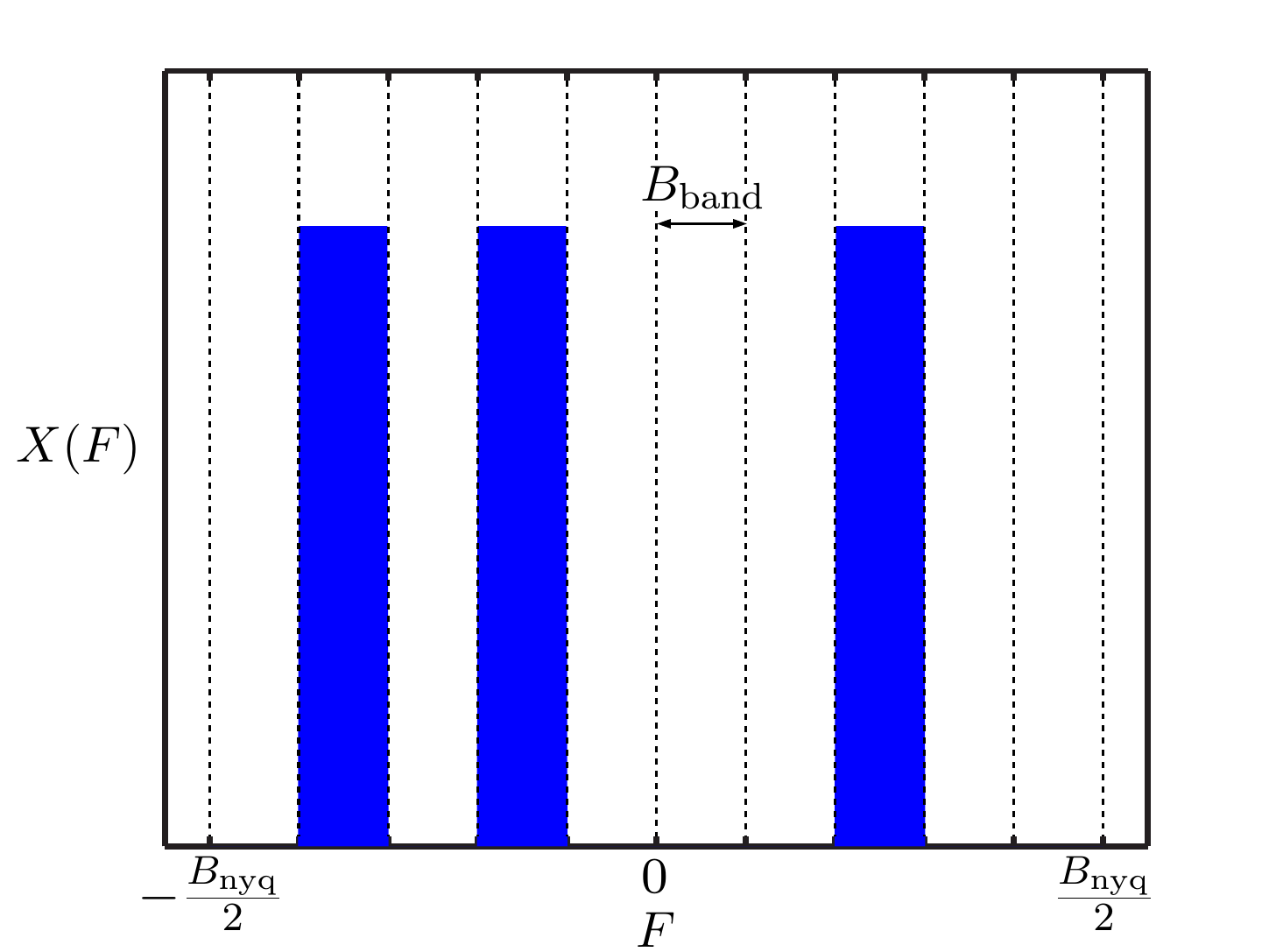}
   \caption{ \small \sl Illustration showing the CTFT of a multiband signal where $J = \frac{\bnyq}{\bband} = 10$ and $K=3$.
   \label{fig:mbspectrum}}
\end{figure}

Note, however, that the set $\cM(K,\bband)$ is defined for infinite-length signals $x(t)$.  Indeed, any signal with a Fourier transform supported on a finite range of frequencies cannot also be supported on a finite range of time.  This would seem to be somewhat at odds with the finite-dimensional CS framework described above.  As a result, previous efforts aimed at sampling multiband signals have developed largely outside the framework of CS~\cite{FengB_Spectrum,BreslF_Spectrum,Feng_Universal,VenkaB_Further,MishaE_From,MishaE_Blind}. It is our goal in this paper to show that it is possible to recover a finite-length window of samples of a multiband signal using many of the standard tools from CS.  To do this, we will need to construct an appropriate dictionary $\bPsi$ for capturing the structure of this set, which we do in the following section.

Finally, we also note that while our signal model breaks up the spectrum into $J = \frac{\bnyq}{\bband}$ bands with fixed boundaries and bandwidth, it actually encompasses the broader class of signals where the bandwidth and center frequency of each band are arbitrary.  For example, a signal with $K$ bands of width $\bband$ but with arbitrary center frequencies will automatically lie within $\cM(2K,\bband)$.  Since we are primarily interested in the case where $K\bband \ll \bnyq$, this factor of 2 will not be significant in the development of our theoretical results. However, we do note that in practice it may be possible to achieve a significant gain by exploiting a more flexible signal model.

In the remainder of this section we demonstrate that it is possible to construct discrete-time bases using {\em Discrete Prolate Spheroidal Sequences} (DPSS's) that efficiently capture the structure of sampled multiband signals. We first review DPSS's and their key properties as first developed in~\cite{Slepi_ProlateV}, and we then discuss some of the consequences of these properties in terms of their utility in representing sampled continuous-time signals.  Ultimately, we demonstrate how to use DPSS's to construct a dictionary $\bPsi$ which sparsely represents windows of sampled multiband signals.

\subsection{Discrete Prolate Spheroidal Sequences (DPSS's)}
\label{sec:dpssprops}

Our goal in this subsection is to provide a self-contained review of the concepts from Slepian's work on DPSS's~\cite{Slepi_ProlateV} that are most relevant to the problem of modeling sampled multiband signals. We refer the reader to~\cite{SlepiP_ProlateI,LandaP_ProlateII,LandaP_ProlateIII,Slepi_ProlateIV,Slepi_ProlateV,Slepi_On,Slepi_Some} for more complete overviews of DPSS's, PSWF's, and their implications in time-frequency localization.

\subsubsection{DPSS's}

Let $N$ be an integer, and let $0 < \slepw < \frac12$.
Given $N$ and $\slepw$, the DPSS's are a collection of $N$ discrete-time sequences that are strictly bandlimited to the digital frequency range $|f| \le \slepw$ yet highly concentrated in time to the index range $n = 0,1,\dots,N-1$. The DPSS's are defined to be the eigenvectors of a two-step procedure in which one first time-limits the sequence and then bandlimits the sequence. Before we can state a more formal definition, let us note that for a given discrete-time signal $x[n]$, we let
$$
X(f) = \sum_{n=-\infty}^\infty x[n] e^{-j 2 \pi f n}
$$
denote the discrete-time Fourier transform (DTFT) of $x[n]$.\footnote{Note that we use lower-case $f$ to indicate the digital frequency (so that $X(f)$ represents the DTFT of a discrete-time sequence $x[n]$, while $X(F)$ represents the CTFT of a continuous-time signal $x(t)$).} Next, we let $\cB_\slepw$ denote an operator that takes a discrete-time signal, bandlimits its DTFT to the frequency range $|f| \le \slepw$, and returns the corresponding signal in the time domain.
Additionally, we let $\cT_N$ denote an operator that takes an infinite-length discrete-time signal and zeros out all entries outside the index range $\{0,1,\dots,N-1\}$ (but still returns an infinite-length signal).
With these definitions, the DPSS's are defined as follows.
\begin{definition}~{\em\cite{Slepi_ProlateV}}
Given $N$ and $\slepw$, the {\em Discrete Prolate Spheroidal Sequences} (DPSS's) are a collection of $N$ real-valued discrete-time sequences
$s_{N,\slepw}^{(0)}, s_{N,\slepw}^{(1)}, \dots, s_{N,\slepw}^{(N-1)}$
that, along with the corresponding scalar eigenvalues
$$
1 > \lambda_{N,\slepw}^{(0)} > \lambda_{N,\slepw}^{(1)} > \cdots > \lambda_{N,\slepw}^{(N-1)} > 0,
$$
satisfy
\begin{equation}
\cB_\slepw(\cT_N (s_{N,\slepw}^{(\ell)})) = \lambda_{N,\slepw}^{(\ell)} s_{N,\slepw}^{(\ell)}
\label{eq:dpsseigmain}
\end{equation}
for all $\ell \in \{0,1,\dots,N-1\}$.
The DPSS's are normalized so that
\begin{equation}
\label{eq:dpssnorm}
\| \cT_N(s_{N,\slepw}^{(\ell)}) \|_2 = 1
\end{equation}
for all $\ell \in \{0,1,\dots,N-1\}$.
\end{definition}

As we discuss in more detail below in Section~\ref{sec:eigconc}, the eigenvalues $\lambda_{N,\slepw}^{(0)}, \lambda_{N,\slepw}^{(1)}, \dots, \lambda_{N,\slepw}^{(N-1)}$ have a very distinctive behavior: the first $2N\slepw$ eigenvalues tend to cluster extremely close to $1$, while the remaining eigenvalues tend to cluster similarly close to $0$.

Before proceeding, let us briefly mention several key properties of the DPSS's that will be useful in our subsequent analysis. First, it is clear from (\ref{eq:dpsseigmain}) that the DPSS's are, by definition, strictly bandlimited to the digital frequency range $|f| \le \slepw$. Second, the DPSS's are also approximately time-limited to the index range $n = 0,1,\dots,N-1$.  Specifically, it can be shown that~\cite{Slepi_ProlateV}
\begin{equation}
\label{eq:snorm}
\| s_{N,\slepw}^{(\ell)} \|_2 = \frac{1}{\sqrt{\lambda_{N,\slepw}^{(\ell)}}}.
\end{equation}
Comparing~\eqref{eq:dpssnorm} with~\eqref{eq:snorm}, we see that for values of $\ell$ where $\lambda_{N,\slepw}^{(\ell)} \approx 1$, nearly all of the energy in $s_{N,\slepw}^{(\ell)}$ is contained in the interval $\{0,1,\dots,N-1\}$. Third, the DPSS's are orthogonal over $\integers$~\cite{Slepi_ProlateV}, i.e., for any $\ell,\ell' \in \{0,1,\dots,N-1\}$ with $\ell \neq \ell'$, $\langle s_{N,\slepw}^{(\ell)}, s_{N,\slepw}^{(\ell')} \rangle = 0$.

\subsubsection{Time-limited DPSS's}
\label{sec:tldpss}

While each DPSS actually has infinite support in time, several very useful properties hold for the collection of signals one obtains by time-limiting the DPSS's to the index range $n = 0,1,\dots,N-1$. First, the time-limited DPSS's $\cT_N(s_{N,\slepw}^{(0)}), \cT_N(s_{N,\slepw}^{(1)}), \dots, \cT_N(s_{N,\slepw}^{(N-1)})$ are approximately bandlimited to the digital frequency range $|f| \le \slepw$.  Specifically, from (\ref{eq:dpsseigmain}) and (\ref{eq:snorm}), one can deduce that
\begin{equation}
\| \cB_\slepw( \cT_N (s_{N,\slepw}^{(\ell)})) \|_2 = \sqrt{\lambda_{N,\slepw}^{(\ell)}}.
\label{eq:apbl}
\end{equation}
Comparing~\eqref{eq:dpssnorm} with~\eqref{eq:apbl}, we see that for values of $\ell$ where $\lambda_{N,\slepw}^{(\ell)} \approx 1$, nearly all of the energy in $\cT_N (s_{N,\slepw}^{(\ell)})$ is contained in the frequencies $|f| \le \slepw$. An illustration of four representative time-limited DPSS's and their DTFT's is provided in Figure~\ref{fig:dpssplots}.
While by construction the DTFT of any DPSS is perfectly bandlimited, the DTFT of the corresponding time-limited DPSS will only be concentrated in the bandwidth of interest for the first $2N\slepw$ DPSS's.  As a result, we will frequently be primarily interested in roughly the first $2N\slepw$ DPSS's.

\begin{figure*}
   \centering
   \begin{tabular*}{\linewidth}{@{\extracolsep{\fill}} cccc}
   \hspace*{-.18in} \includegraphics[width=.28\linewidth]{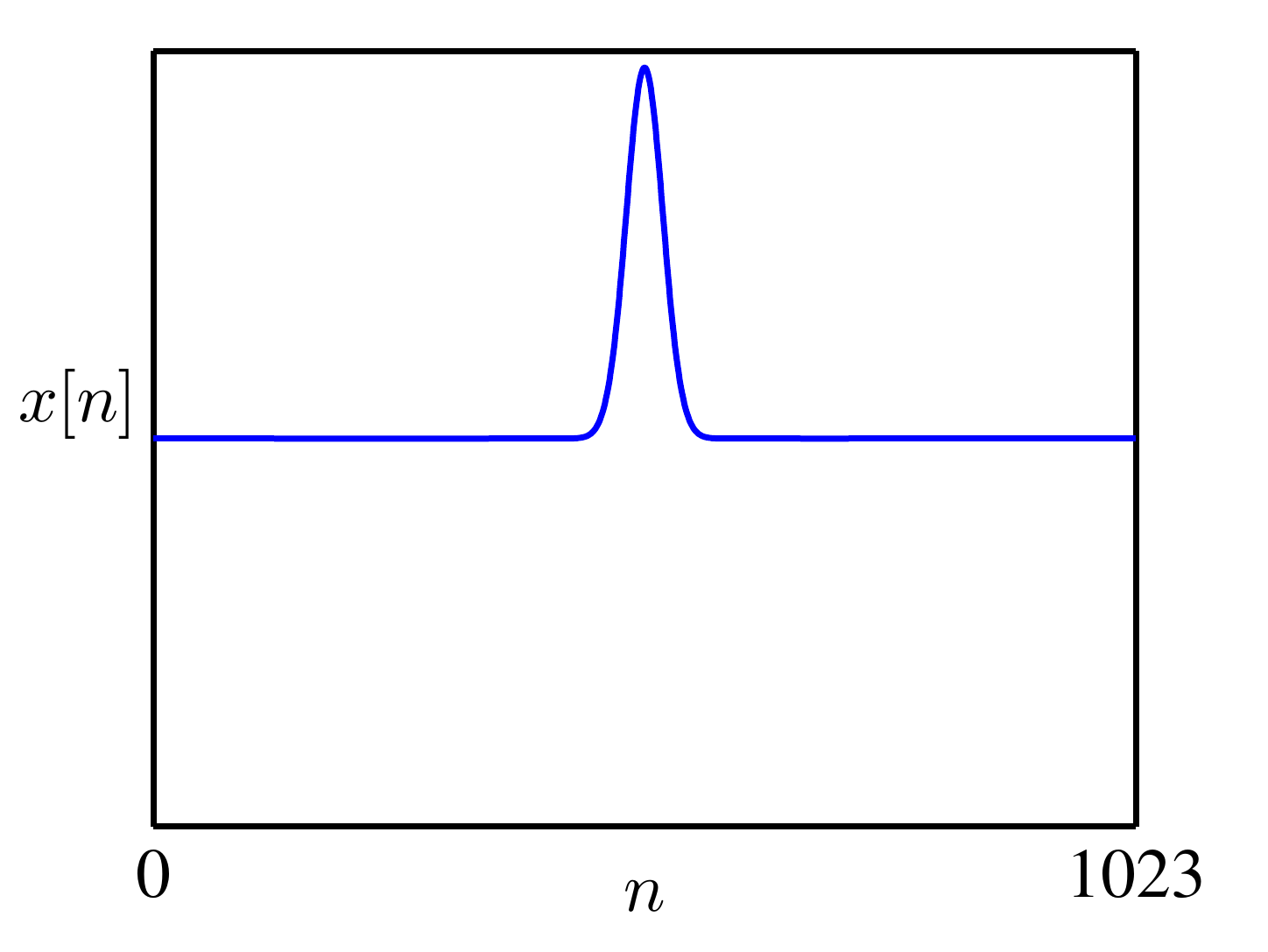} & \hspace*{-.37in}\includegraphics[width=.28\linewidth]{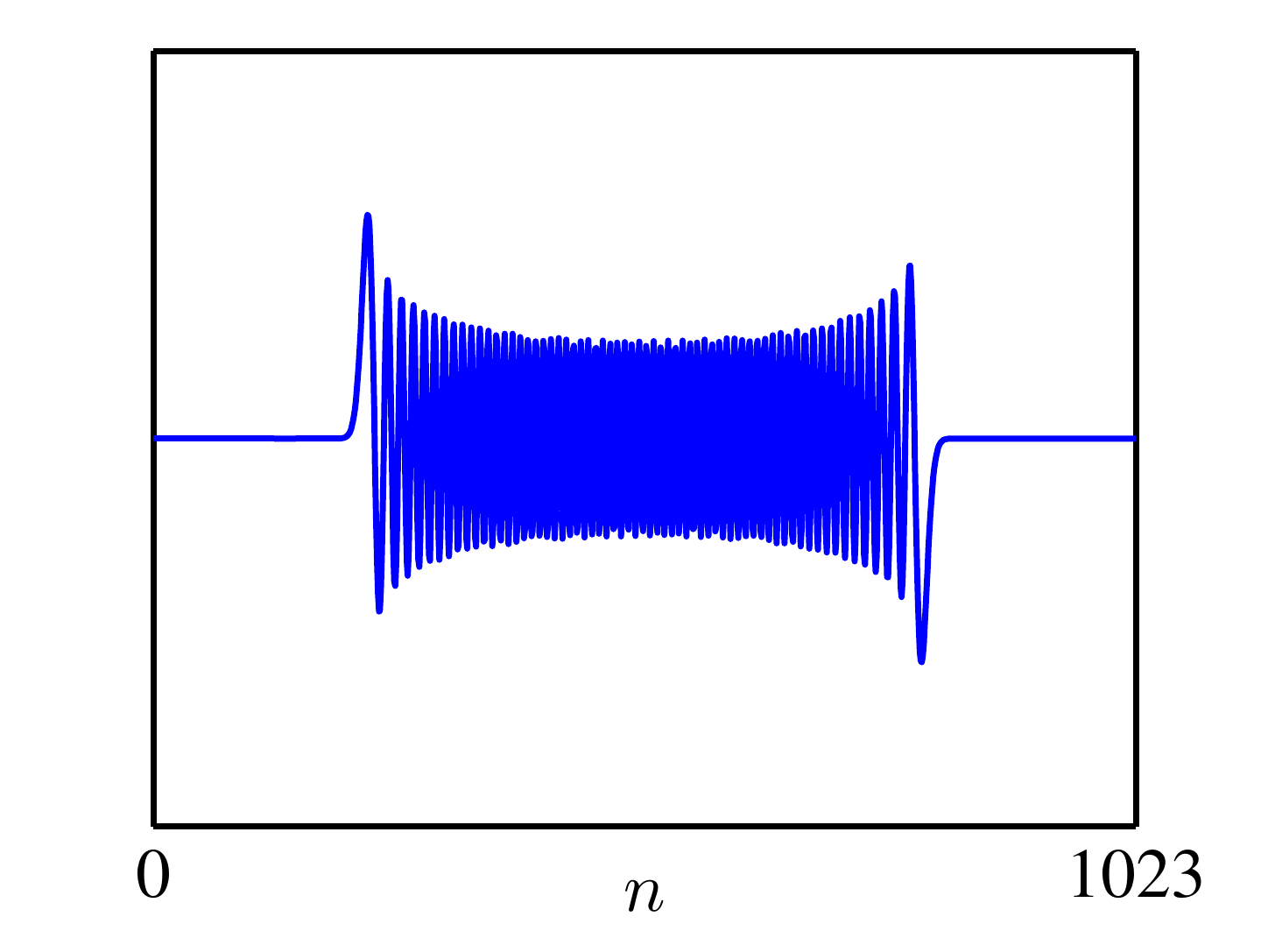} & \hspace*{-.37in} \includegraphics[width=.28\linewidth]{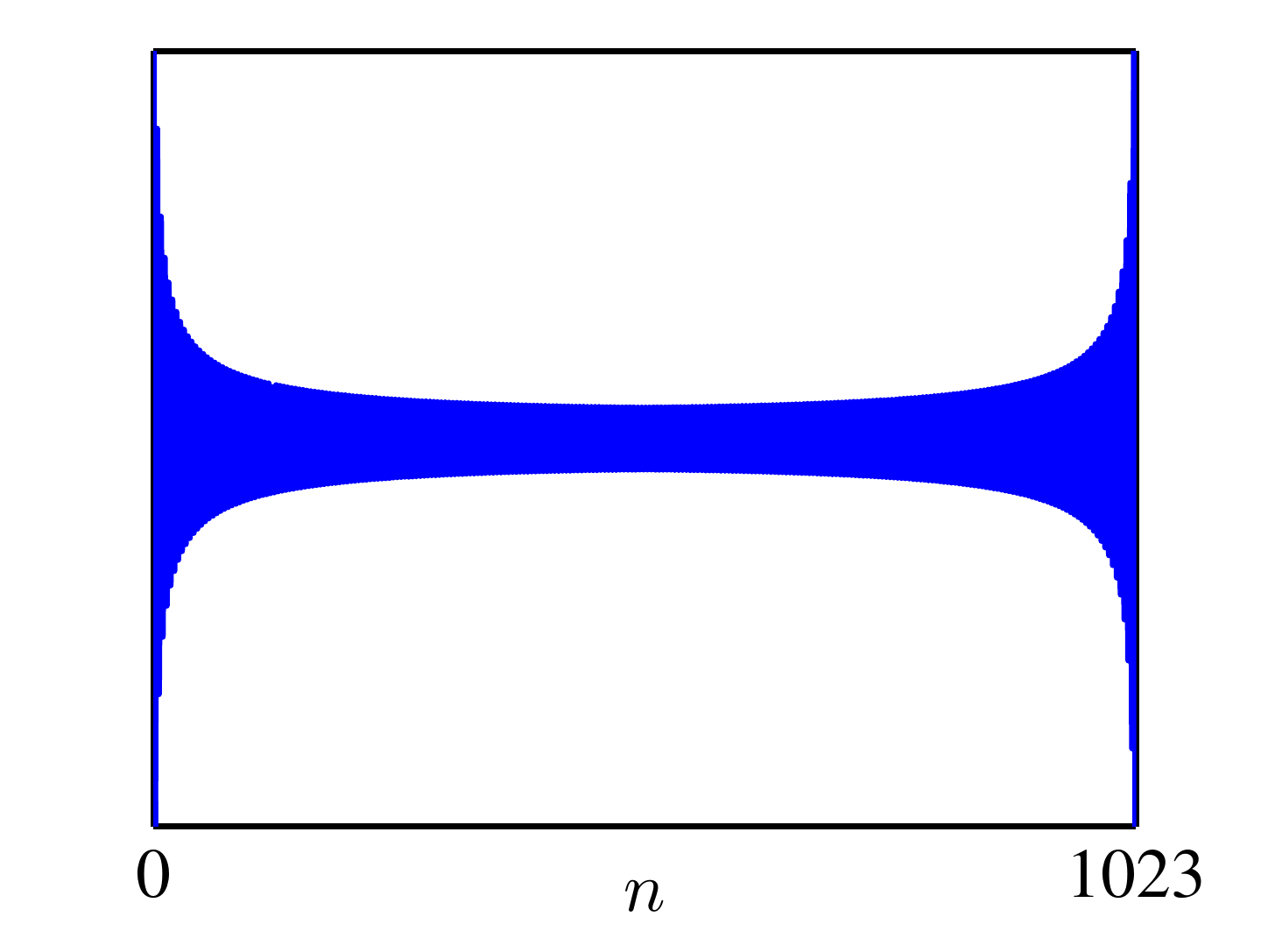} & \hspace*{-.37in}\includegraphics[width=.28\linewidth]{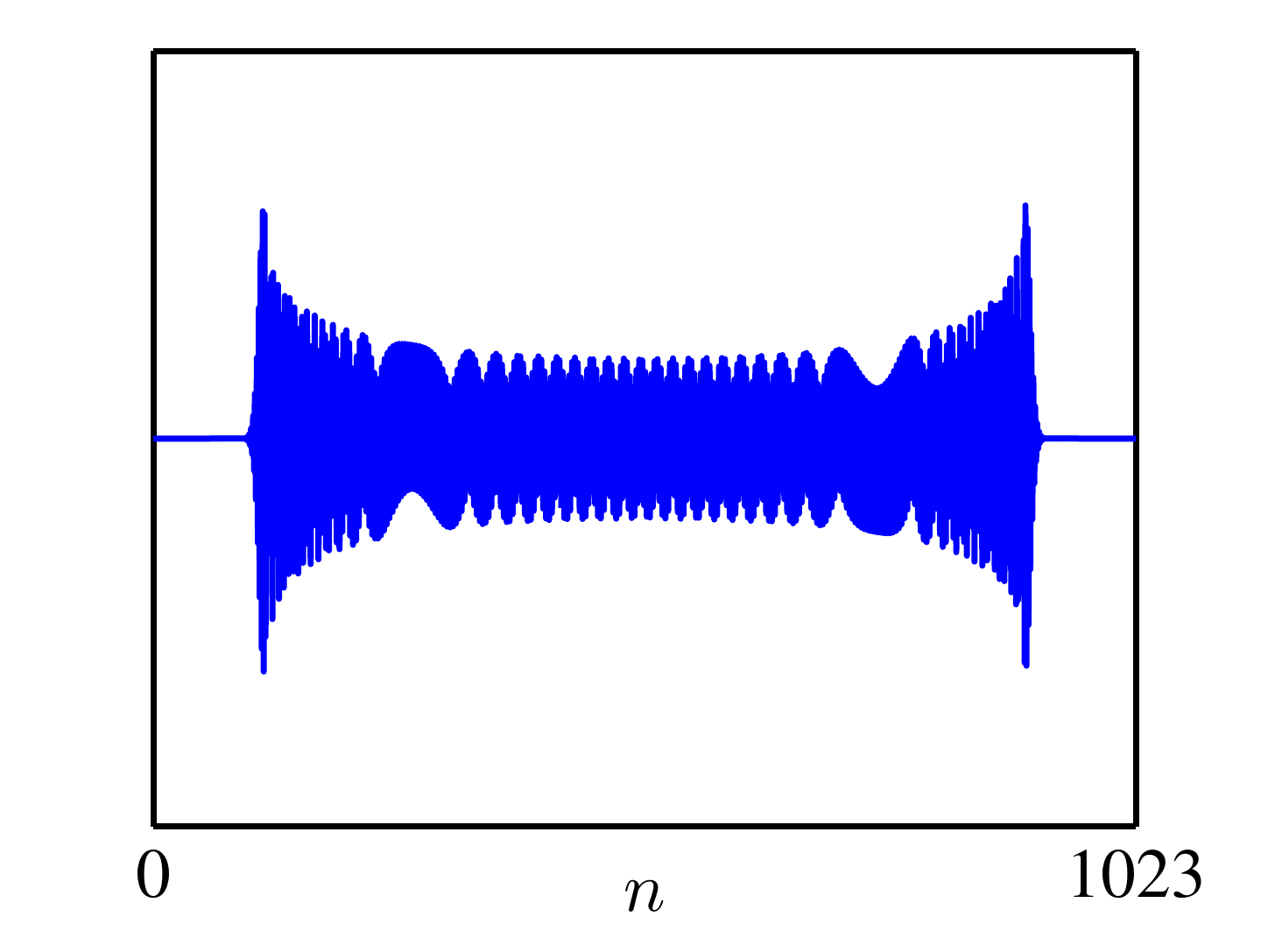} \\
   \hspace*{-.47in} \includegraphics[width=.28\linewidth]{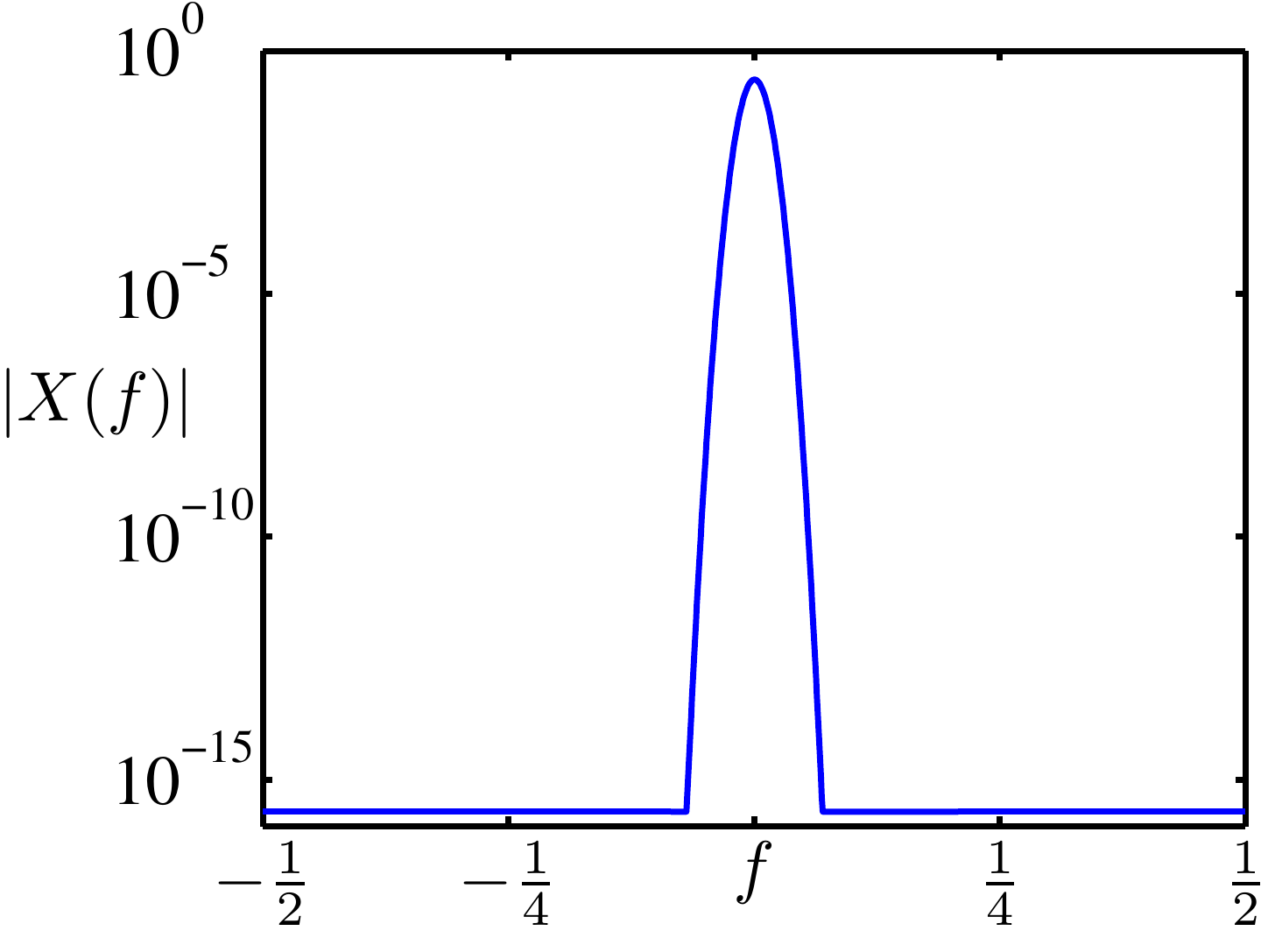} & \hspace*{-.38in} \includegraphics[width=.28\linewidth]{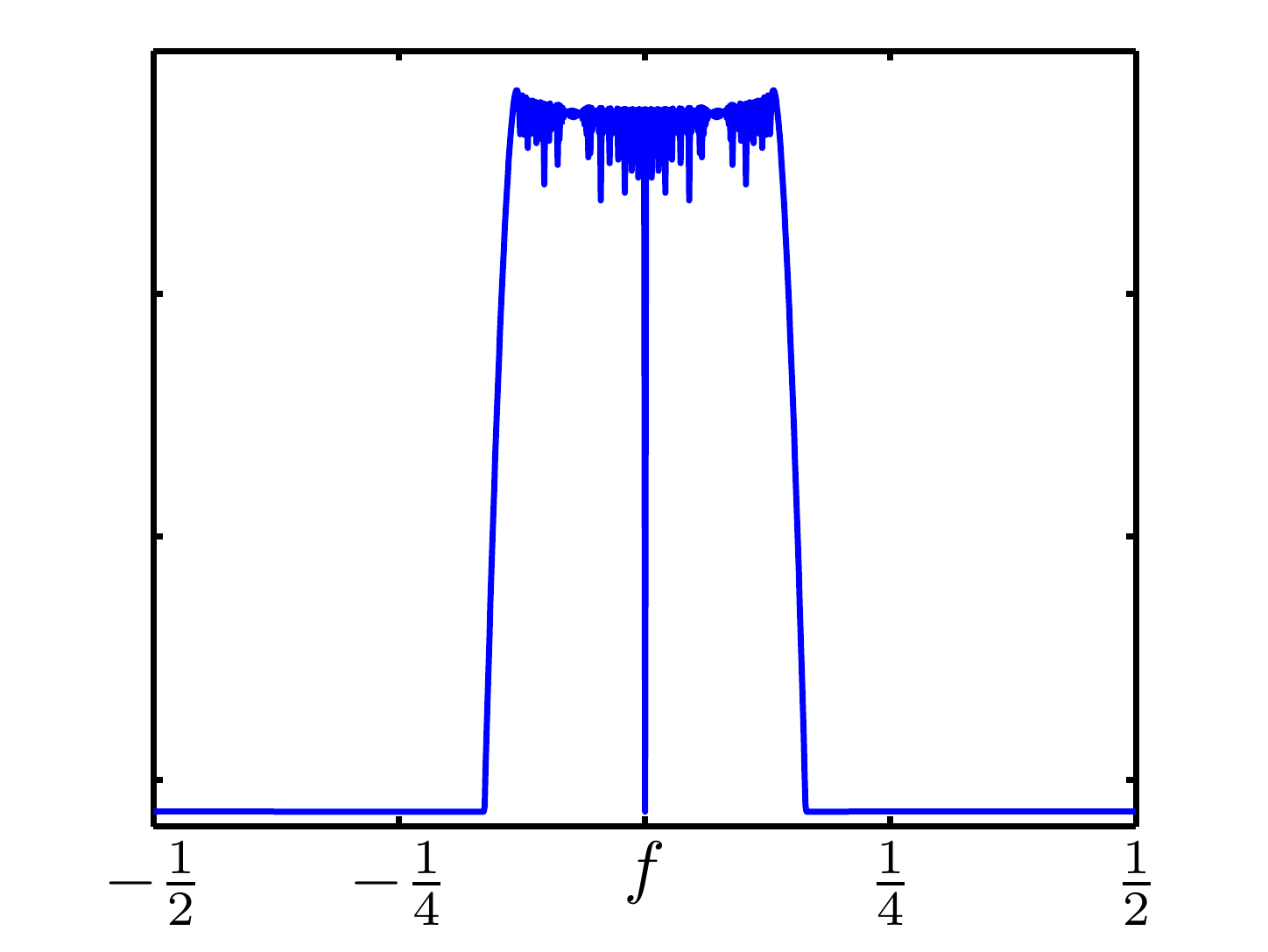} & \hspace*{-.37in} \includegraphics[width=.28\linewidth]{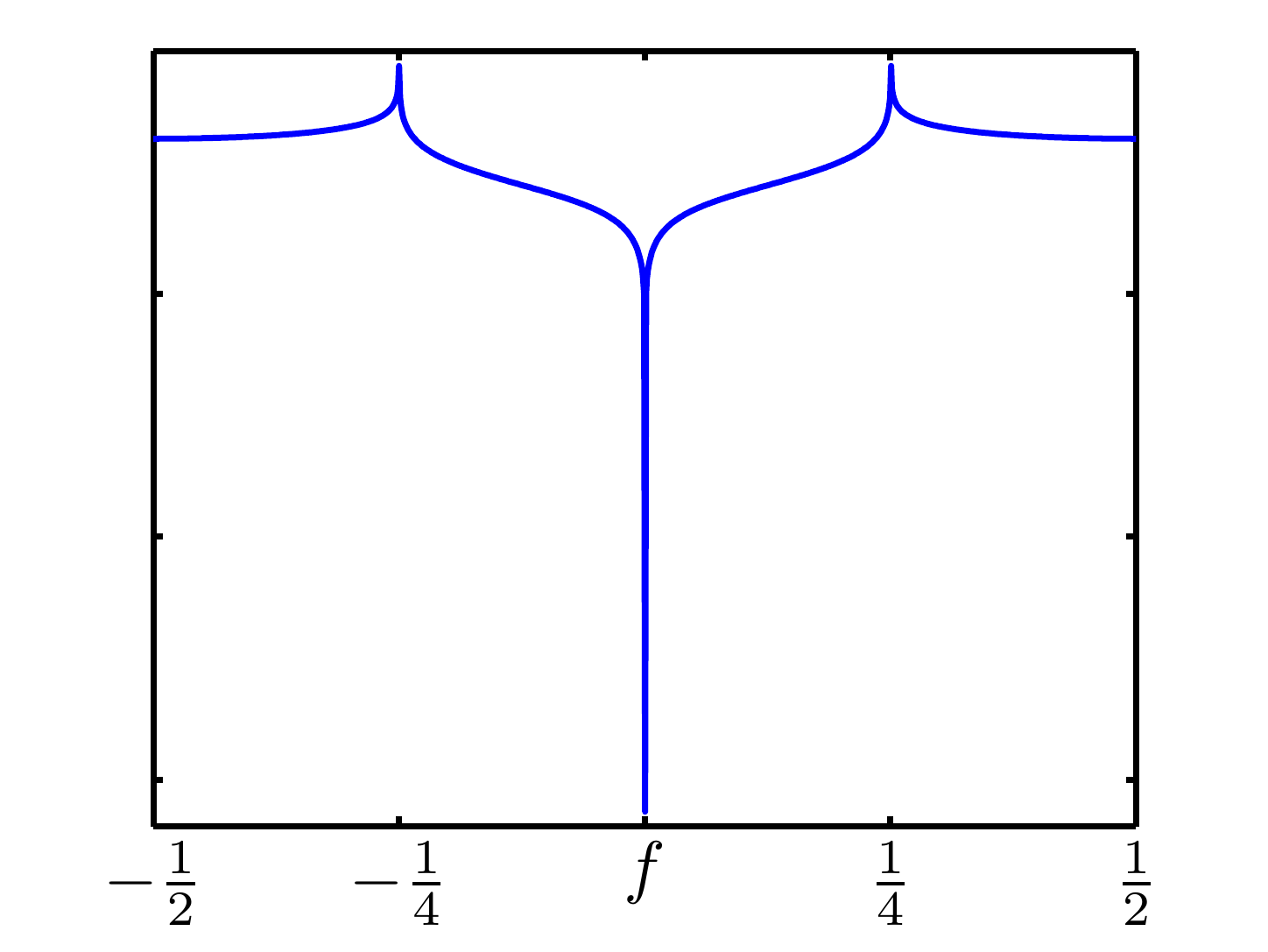} & \hspace*{-.37in}\includegraphics[width=.28\linewidth]{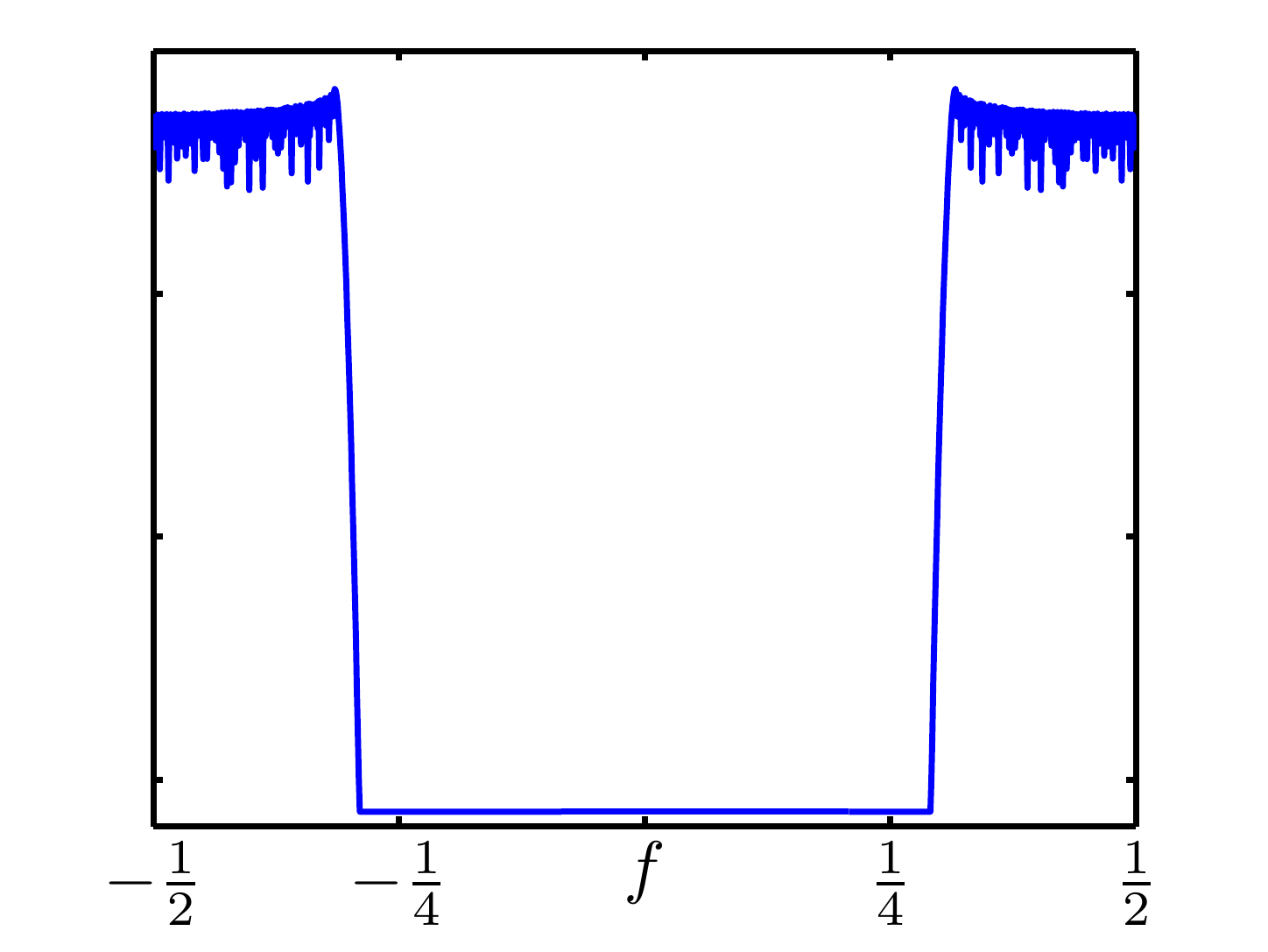} \\
   \hspace*{-.1in} {\small $\ell=0$} & \hspace*{-.15in}{\small $\ell=127$} & \hspace*{-.14in}{\small  $\ell=511$} & \hspace*{-.17in}{\small $\ell=767$}
   \end{tabular*}
   \caption{\small \sl Illustration of four example DPSS's time-limited to the interval $[0,1,\dots,N-1]$ and the magnitude of their DTFT's.  In this example, $N=1024$ and $\slepw = \frac{1}{4}$. Note that for $\ell$ up to approximately $2N\slepw-1$ the energy of the spectrum is highly concentrated on the interval $[-\slepw,\slepw]$, and when $\ell$ is sufficiently larger than $2N\slepw-1$ the energy of the spectrum is concentrated almost entirely outside the interval $[-\slepw,\slepw]$. \label{fig:dpssplots}}
\end{figure*}

Second, the time-limited DPSS's are also orthogonal~\cite{Slepi_ProlateV} so that for any $\ell,\ell' \in \{0,1,\dots,N-1\}$ with $\ell \neq \ell'$,
\begin{equation}
\label{eq:dpssorth}
 \langle \cT_N(s_{N,\slepw}^{(\ell)}), \cT_N(s_{N,\slepw}^{(\ell')}) \rangle  = 0.
\end{equation}
Finally, like the DPSS's, the time-limited DPSS's have a special eigenvalue relationship with the time-limiting and bandlimiting operators.
In particular, if we apply the operator $\cT_N$ to both sides of (\ref{eq:dpsseigmain}), we see that the sequences $\cT_N (s_{N,\slepw}^{(\ell)})$ are actually eigenfunctions of the two-step procedure in which one first bandlimits a sequence and then time-limits the sequence.

\subsubsection{DPSS vectors}

Because our focus in this paper is primarily on representing and reconstructing finite-length vectors, we will find the following restriction of the time-limited DPSS's to be useful, where we restrict the domain exclusively to the index range $n = 0,1,\dots,N-1$ (discarding the zeros outside this range).
\begin{definition}
Given $N$ and $\slepw$, the {\em DPSS vectors} $\bs_{N,\slepw}^{(0)}, \bs_{N,\slepw}^{(1)}, \dots, \bs_{N,\slepw}^{(N-1)} \in \real^N$ are defined by restricting the time-limited DPSS's to the index range $n = 0,1,\dots,N-1$:
$$
\bs_{N,\slepw}^{(\ell)}[n] := \cT_N(s_{N,\slepw}^{(\ell)})[n] = s_{N,\slepw}^{(\ell)}[n]
$$
for all $\ell,n \in \{0,1,\dots,N-1\}$.
\end{definition}

Following from our discussion in Section~\ref{sec:tldpss}, the DPSS vectors obey several favorable properties.
First, combining (\ref{eq:dpssnorm}) and (\ref{eq:dpssorth}), it follows that the DPSS vectors form an orthonormal basis for $\complex^N$ (or for $\real^N$).
However, as we discuss in subsequent sections, bases constructed using just the first $\approx 2N\slepw $ DPSS vectors can be remarkably effective for capturing the energy in our signals of interest.
Second, if we define $\bB_{N,\slepw}$ to be the $N \times N$ matrix with entries given by
\begin{equation}
\bB_{N,\slepw}[m,n] := 2\slepw \sinc{ 2\slepw(m-n)},
\label{eq:sdef}
\end{equation}
we see that the eigenvectors of $\bB_{N,\slepw}$ are given by the DPSS vectors $\bs_{N,\slepw}^{(0)}, \bs_{N,\slepw}^{(1)}, \dots, \bs_{N,\slepw}^{(N-1)}$, and the corresponding eigenvalues are $\lambda_{N,\slepw}^{(0)}, \lambda_{N,\slepw}^{(1)}, \dots, \lambda_{N,\slepw}^{(N-1)}$~\cite{Slepi_ProlateV}. Thus, if we concatenate the DPSS vectors into an $N \times N$ matrix
\begin{equation}
\bS_{N,\slepw} := \left[\bs_{N,\slepw}^{(0)} ~ \bs_{N,\slepw}^{(1)} ~  \cdots ~ \bs_{N,\slepw}^{(N-1)}\right] \in \real^{N \times N}
\label{eq:vdef}
\end{equation}
and let $\bLambda_{N,\slepw}$ denote an $N \times N$ diagonal matrix with the DPSS eigenvalues $\lambda_{N,\slepw}^{(0)}, \lambda_{N,\slepw}^{(1)}, \dots, \lambda_{N,\slepw}^{(N-1)}$ along the main diagonal, then we can write the eigendecomposition of $\bB_{N,\slepw}$ as
\begin{equation} \label{eq:Seigdecomp}
\bB_{N,\slepw} = \bS_{N,\slepw} \bLambda_{N,\slepw} \bS^H_{N,\slepw}.
\end{equation}
This decomposition will prove useful in our analysis below.

\subsubsection{Eigenvalue concentration}
\label{sec:eigconc}

As mentioned above, the eigenvalues $\lambda_{N,\slepw}^{(0)}, \lambda_{N,\slepw}^{(1)}, \dots, \lambda_{N,\slepw}^{(N-1)}$ have a very distinctive and important behavior: the first $2N\slepw$ eigenvalues tend to cluster extremely close to $1$, while the remaining eigenvalues tend to cluster similarly close to $0$. This behavior---which will allow us to construct efficient bases using small numbers of DPSS vectors---is illustrated in Figure~\ref{fig:dpsseigs} and captured more formally in the following results.

\begin{figure}
   \centering
   \hspace*{-.4in} \includegraphics[width=\imgwidth]{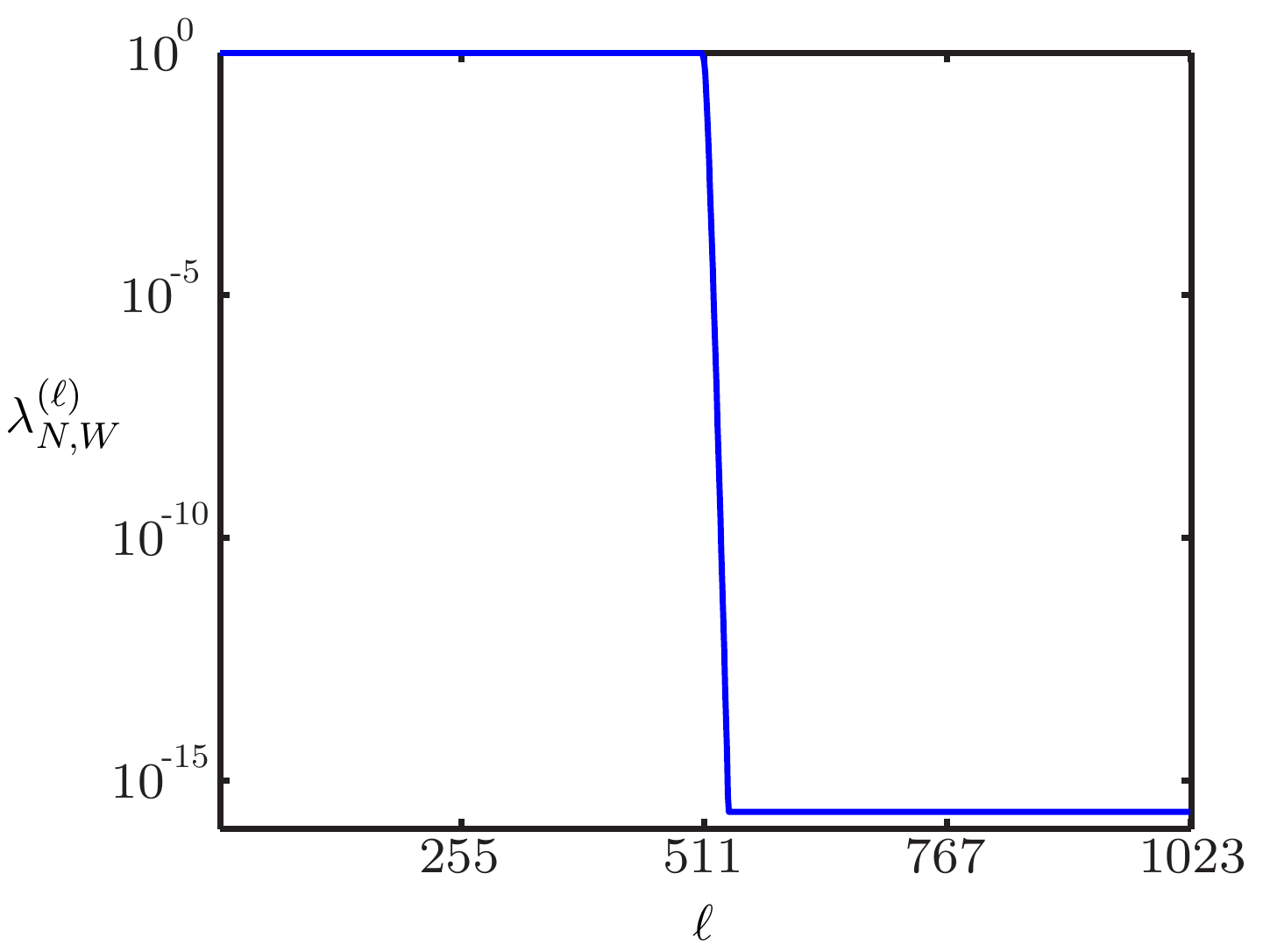}
   \caption{\small \sl Eigenvalue concentration for $N=1024$ and $\slepw = \frac{1}{4}$. Note that the first $2N\slepw = 512$ eigenvalues cluster near $1$ and the remaining eigenvalues rapidly approach the level of machine precision.
   \label{fig:dpsseigs}}
\end{figure}

\begin{lemma}[Eigenvalues that cluster near one \cite{Slepi_ProlateV}] Suppose that $\slepw$ is fixed, and let $\epsilon \in (0,1)$ be fixed. Then there exist constants $C_1, C_2$ (where $C_2$ may depend on $\slepw$ and $\epsilon$) and an integer $N_0$ (which may also depend on $\slepw$ and $\epsilon$) such that
\begin{equation}
\lambda_{N,\slepw}^{(\ell)} \ge 1 - C_1 e^{- C_2 N} ~~ \mathrm{for~all}~ \ell \le 2N\slepw(1-\epsilon) ~\mathrm{and~all}~N \ge N_0.
\label{eq:eig2nbminus}
\end{equation}
\label{lem:smallk}
\end{lemma}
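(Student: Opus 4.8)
The plan is to characterize the eigenvalues variationally and reduce the claim to the construction of a single well-chosen subspace. Using the eigendecomposition $\bB_{N,\slepw} = \bS_{N,\slepw}\bLambda_{N,\slepw}\bS_{N,\slepw}^H$ from~\eqref{eq:Seigdecomp}, the $\lambda_{N,\slepw}^{(\ell)}$ are exactly the eigenvalues of the symmetric matrix $\bB_{N,\slepw}$, and for any $\bu\in\complex^N$ (identified with its time-limited extension $\cT_N\bu$) we have $\bu^H\bB_{N,\slepw}\bu = \norm{\cB_\slepw \bu}^2$, the fraction of the spectral energy of $\bu$ that lies in the band $|f|\le\slepw$ (by Parseval, since $\cB_\slepw$ is an orthogonal projection). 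By the Courant--Fischer min-max theorem,
\[
\lambda_{N,\slepw}^{(\ell)} = \max_{\dim V = \ell+1}\ \min_{\bzero\neq\bu\in V}\ \frac{\norm{\cB_\slepw\bu}^2}{\norm{\bu}^2}.
\]
Because the eigenvalues are decreasing in $\ell$, it suffices to handle the largest relevant index: I would exhibit a subspace $V\subseteq\complex^N$ of dimension $\lfloor 2N\slepw(1-\epsilon)\rfloor+1$ on which the \emph{out-of-band} energy fraction $\norm{\bu}^{-2}\int_{\slepw<|f|\le 1/2}|U(f)|^2\,df$ is uniformly at most $C_1 e^{-C_2 N}$, where $U$ is the DTFT of $\bu$.

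For the construction I would introduce a strictly narrower inner bandwidth $\slepw_1 := (1-\epsilon/2)\slepw < \slepw$ and take $V$ to be the span of the time-restrictions $\cT_N$ of the first $\lfloor 2N\slepw(1-\epsilon)\rfloor+1$ prolate functions associated with the concentration problem for band $[-\slepw_1,\slepw_1]$ and time window $\{0,1,\dots,N-1\}$. The decisive feature of this choice is that each such function is \emph{exactly} bandlimited to $[-\slepw_1,\slepw_1]\subset[-\slepw,\slepw]$, so its untruncated spectrum contributes zero energy in the region $|f|>\slepw$. Consequently, after applying $\cT_N$, all of the out-of-band energy can arise only from the time-truncation, and for each basis function it is bounded by $1-\mu$, where $\mu$ is the fraction of that function's energy concentrated on the window $\{0,\dots,N-1\}$. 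Since $2N\slepw_1 = 2N\slepw(1-\epsilon/2)$ strictly exceeds $2N\slepw(1-\epsilon)$, there are enough such functions with $\mu$ close to $1$ to fill out the required dimension once $N$ is large, and feeding the resulting uniform out-of-band bound into the min-max formula yields the stated estimate with suitable $C_1,C_2,N_0$.

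The hard part is obtaining a \emph{genuinely exponential} (rather than merely super-polynomial) rate in $N$, together with the bookkeeping needed to transfer concentration bounds into the sampled, digital-frequency setting. A naive per-atom truncation estimate cannot deliver $e^{-C_2 N}$: a nonzero function that is exactly bandlimited is an entire function of exponential type and hence, by a Phragm\'en--Lindel\"of/Paley--Wiener obstruction, cannot decay exponentially in time, so the truncation tail of any single bandlimited atom is only polynomially small. The exponential closeness of $\mu$ to $1$ is therefore an intrinsic spectral-asymptotics fact about the time-frequency localization operator itself, not a consequence of pointwise decay. I would import this input from the classical exponential decay of the prolate (PSWF) concentration eigenvalues for index below $\approx 2N\slepw_1$, or, alternatively, establish it by a direct asymptotic analysis of the spectrum of $\bB_{N,\slepw}$ as in Slepian~\cite{Slepi_ProlateV}; verifying that the continuous-to-discrete correspondence preserves the exponential rate, and that the admissible index range matches $2N\slepw(1-\epsilon)$ after choosing $\slepw_1$, is where the technical care concentrates.
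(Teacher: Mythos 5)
The first thing to note is that the paper does not prove this lemma at all: it is quoted verbatim from Slepian's work \cite{Slepi_ProlateV}, where the exponential clustering of the DPSS eigenvalues near one is obtained from a direct asymptotic analysis of the spectrum of $\bB_{N,\slepw}$. So the real question is whether your argument would stand as an independent proof, and it does not: it contains a genuine circularity at its core. Your variational scaffolding is sound --- the identity $\bu^H\bB_{N,\slepw}\bu=\norm{\cB_\slepw \cT_N \bu}^2$, the Courant--Fischer characterization, and the reduction to the largest index $\ell$ are all correct, and your key observation (that an atom exactly bandlimited to $[-\slepw_1,\slepw_1]\subset[-\slepw,\slepw]$ contributes out-of-band energy only through its truncation tail) does yield, via Cauchy--Schwarz over the $O(N)$ atoms, a uniform bound of $\sum_{m\le \ell}\bigl(1-\lambda^{(m)}_{N,\slepw_1}\bigr)/\lambda^{(m)}_{N,\slepw_1}$ on the out-of-band energy fraction over your subspace $V$. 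But to conclude from this you need $1-\lambda^{(m)}_{N,\slepw_1}\le Ce^{-cN}$ for all $m\le 2N\slepw_1(1-\epsilon'')$, where $\epsilon''$ is defined by $\slepw_1(1-\epsilon'')=\slepw(1-\epsilon)$, and that is precisely the statement of the lemma with parameters $(\slepw_1,\epsilon'')$ in place of $(\slepw,\epsilon)$. Since the lemma is asserted for every fixed $\slepw\in(0,\tfrac{1}{2})$ and $\epsilon\in(0,1)$, your reduction maps one instance of the lemma onto another instance of exactly equal difficulty; iterating it never bottoms out in anything you have established.

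Your fallback --- importing the exponential decay either from the continuous PSWF concentration eigenvalues or from ``a direct asymptotic analysis as in Slepian'' --- is an honest acknowledgment of this, but it concedes the entire content of the result: the exponential closeness of the concentration eigenvalues to one below the time--bandwidth limit \emph{is} the lemma, and everything else in your write-up is bookkeeping around it. (Your Paley--Wiener remark is correct and well placed: it explains why no per-atom decay estimate can produce an $e^{-C_2 N}$ rate, i.e., why the needed input is an intrinsically spectral fact. But diagnosing why the naive approach must fail is not the same as supplying the alternative.) To make the proof self-contained you would have to actually carry out one of the two deferred steps: either Slepian's asymptotics for the discrete operator $\bB_{N,\slepw}$ itself, or a quantitative continuous-to-discrete transfer (for instance, comparing $\bB_{N,\slepw}$ with the continuous time--frequency localization operator at time--bandwidth product $c=\pi\slepw N$ and controlling the eigenvalue perturbation uniformly in $N$) --- and either one constitutes essentially the whole work. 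As it stands, your proposal is a correct reformulation plus a citation, which is, in effect, what the paper itself does, minus the reformulation.
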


\begin{lemma}[Eigenvalues that cluster near zero \cite{Slepi_ProlateV}]
Suppose that $\slepw$ is fixed, and let $\epsilon \in (0,\frac{1}{2\slepw}-1)$ be fixed. Then there exist constants $C_3, C_4$ (where $C_4$ may depend on $\slepw$ and $\epsilon$) and an integer $N_1$ (which may also depend on $\slepw$ and $\epsilon$) such that
\begin{equation}
\lambda_{N,\slepw}^{(\ell)} \le C_3 e^{- C_4 N} ~~ \mathrm{for~all}~ \ell \ge 2N\slepw(1+\epsilon) ~\mathrm{and~all}~N \ge N_1.
\label{eq:eig2nbplus}
\end{equation}
Alternatively, suppose that $\slepw$ is fixed, and let $\alpha > 0$ be fixed. Then there exist constants $C_5, C_6$ and an integer $N_2$ (where $N_2$ may depend on $\slepw$ and $\alpha$) such that
$$
\lambda_{N,\slepw}^{(\ell)} \le C_5 e^{- C_6 \alpha} ~~ \mathrm{for~all}~ \ell \ge 2N\slepw + \alpha \log(N) ~\mathrm{and~all}~N \ge N_2.
$$
\label{lem:bigk}
\end{lemma}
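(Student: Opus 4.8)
The plan is to recast the eigenvalues $\lambda_{N,\slepw}^{(\ell)}$ as the squared singular values of a single linear operator and then control the small singular values by a low-rank approximation argument. Observe that, since $\bB_{N,\slepw}[m,n] = 2\slepw\sinc{2\slepw(m-n)} = \int_{-\slepw}^{\slepw} e^{j2\pi f(m-n)}\,df$, for any $\bx \in \complex^N$ we have $\iprod{\bx}{\bB_{N,\slepw}\bx} = \int_{-\slepw}^{\slepw}|\widehat{\bx}(f)|^2\,df$, where $\widehat{\bx}(f) = \sum_{n=0}^{N-1}\bx[n]e^{-j2\pi fn}$, while $\norm{\bx}^2 = \int_{-1/2}^{1/2}|\widehat{\bx}(f)|^2\,df$ by Parseval. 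Defining the ``in-band DTFT'' operator $R:\complex^N \to L^2[-\slepw,\slepw]$ by $(R\bx)(f) = \widehat{\bx}(f)$, we therefore have $R^*R = \bB_{N,\slepw}$, so that $\lambda_{N,\slepw}^{(\ell)} = \sigma_{\ell+1}(R)^2$, the $(\ell+1)$-th largest squared singular value. As a first sanity check, $\trace(\bB_{N,\slepw}) = \sum_{\ell}\lambda_{N,\slepw}^{(\ell)} = 2N\slepw$ (the diagonal entries equal $2\slepw$), which, with the monotone ordering of the eigenvalues, already forces the bulk of the spectral mass into the first $\approx 2N\slepw$ indices; but this crude accounting only yields $\lambda_{N,\slepw}^{(\ell)} \le 2N\slepw/(\ell+1)$ and is far too weak to produce the exponential decay claimed.

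To obtain exponential decay I would use the variational (Courant--Fischer) characterization of singular values together with the Eckart--Young bound: for any operator $R_L$ of rank at most $L$ one has $\sigma_{L+1}(R) \le \norm{R - R_L}$, and hence $\lambda_{N,\slepw}^{(L)} \le \norm{R - R_L}^2$. The task thus reduces to constructing, for $L$ slightly larger than $2N\slepw$, a subspace $\cU \subset L^2[-\slepw,\slepw]$ of dimension $L$ whose orthogonal projector $P_{\cU}$ makes $\norm{(I - P_{\cU})R} = \max_{\norm{\bx}=1}\left(\int_{-\slepw}^{\slepw}|\widehat{\bx}(f) - P_{\cU}\widehat{\bx}(f)|^2\,df\right)^{1/2}$ exponentially small. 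The in-band DTFT of any length-$N$ vector is a trigonometric polynomial of degree $N-1$ restricted to the short interval $[-\slepw,\slepw]$, and the content of the claim is that such restrictions are captured, up to exponentially small error, by a space whose dimension is governed by the time--bandwidth product $N \cdot 2\slepw = 2N\slepw$. A direct construction---for instance projecting the kernels $f \mapsto e^{-j2\pi fn}$ onto low-degree polynomials or onto a fixed family of smooth atoms---does establish a bound of the form $\lambda_{N,\slepw}^{(\ell)} \le C e^{-cN}$ once $\ell$ exceeds a constant multiple of $N\slepw$, and tracking the error as a function of the excess dimension $L - 2N\slepw$ is the natural route to the alternative (logarithmic-plunge) estimate.

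The hard part will be pinning down the \emph{sharp} constant in the threshold---proving the bound precisely for $\ell \ge 2N\slepw(1+\epsilon)$ rather than for $\ell$ beyond some larger multiple of $N\slepw$---and likewise obtaining the refinement that the transition region has width only $O(\log N)$, so that $\lambda_{N,\slepw}^{(\ell)} \le C_5 e^{-C_6\alpha}$ once $\ell \ge 2N\slepw + \alpha\log(N)$. Naive polynomial or Taylor approximations of the exponential kernels overshoot the threshold by a multiplicative constant (roughly $2\pi e N\slepw$ in place of $2N\slepw$), while the trace and Hilbert--Schmidt estimates are essentially circular, so neither suffices. Closing this gap is exactly the delicate eigenvalue-distribution phenomenon analyzed by Slepian: I would either invoke the commuting symmetric tridiagonal operator, whose well-separated spectrum controls the ordering and separation of the DPSS eigenvectors, together with the asymptotic connection to Legendre functions, or appeal to the Landau--Widom theorem on the eigenvalue counting function of time- and band-limiting operators, which supplies both the sharp $2N\slepw$ count above any fixed level and the $\Theta(\log N)$ plunge width. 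This asymptotic analysis is the genuine obstacle; the operator-theoretic reduction above cleanly reduces the lemma to it but does not, on its own, deliver the optimal constants.
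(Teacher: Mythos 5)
The paper does not prove this lemma at all---it is quoted directly from Slepian and cited to \cite{Slepi_ProlateV}---so the paper's ``proof'' is pure attribution to the classical literature. Your proposal, by your own accurate self-assessment, ends in the same place: the trace identity and the Eckart--Young/variational reduction are correct but yield only polynomial-type bounds, and the sharp threshold $2N\slepw(1+\epsilon)$ together with the $O(\log N)$ plunge width must be imported from Slepian's asymptotic analysis or the Landau--Widom theorem, i.e., from exactly the sources the paper relies on. In substance, then, you take the same route as the paper; the added operator-theoretic scaffolding is sound but carries no independent probative weight once one is willing to cite those classical results.
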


On occasion, we will have a need to compute bounds on sums of the eigenvalues. First, we note the following.
\begin{lemma}
\begin{equation}
\sum_{\ell = 0}^{N-1} \lambda_{N,\slepw}^{(\ell)} = \mathrm{trace}(\bB_{N,\slepw}) = 2N\slepw.
\label{eq:trace}
\end{equation}
\end{lemma}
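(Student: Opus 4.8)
The plan is to establish the two equalities in~\eqref{eq:trace} separately. The first equality, $\sum_{\ell=0}^{N-1} \lambda_{N,\slepw}^{(\ell)} = \trace(\bB_{N,\slepw})$, is a completely general fact relating the eigenvalues of a matrix to its trace, and will follow immediately from the eigendecomposition already recorded in~\eqref{eq:Seigdecomp}. The second equality, $\trace(\bB_{N,\slepw}) = 2N\slepw$, is a direct computation of the diagonal entries from the defining formula~\eqref{eq:sdef}.

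For the first equality, I would invoke~\eqref{eq:Seigdecomp}, namely $\bB_{N,\slepw} = \bS_{N,\slepw} \bLambda_{N,\slepw} \bS^H_{N,\slepw}$. Since the DPSS vectors form an orthonormal basis for $\complex^N$ (as noted immediately after their definition), the matrix $\bS_{N,\slepw}$ is unitary, so that $\bS^H_{N,\slepw} \bS_{N,\slepw} = \bI$. Applying the cyclic property of the trace then gives
$$
\trace(\bB_{N,\slepw}) = \trace\!\left( \bS_{N,\slepw} \bLambda_{N,\slepw} \bS^H_{N,\slepw} \right) = \trace\!\left( \bLambda_{N,\slepw} \bS^H_{N,\slepw} \bS_{N,\slepw} \right) = \trace(\bLambda_{N,\slepw}) = \sum_{\ell=0}^{N-1} \lambda_{N,\slepw}^{(\ell)},
$$
where the final step uses that $\bLambda_{N,\slepw}$ is the diagonal matrix whose entries are the DPSS eigenvalues.

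For the second equality, I would simply read off the diagonal entries of $\bB_{N,\slepw}$ from~\eqref{eq:sdef}. Setting $m = n$ there yields $\bB_{N,\slepw}[m,m] = 2\slepw \sinc{0} = 2\slepw$, since $\sinc{0} = 1$ under the convention adopted for the sinc function. Summing the $N$ diagonal entries then gives $\trace(\bB_{N,\slepw}) = \sum_{m=0}^{N-1} 2\slepw = 2N\slepw$, completing the argument.

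Since both steps are routine, there is no genuine obstacle here; the only point requiring a moment of care is the bookkeeping in the first step, where one must note that $\bS_{N,\slepw}$ being unitary (equivalently, that the DPSS vectors are orthonormal) is precisely what allows the similarity transform to preserve the trace. I would present the steps in the order above, treating the first equality as the conceptual content and the second as a one-line evaluation.
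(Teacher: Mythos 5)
Your proof is correct. The paper actually states this lemma without any proof, treating it as a standard fact, and your argument supplies exactly the justification the authors implicitly rely on: the diagonal entries of $\bB_{N,\slepw}$ are constant and equal to $2\slepw \sinc{0} = 2\slepw$ by~\eqref{eq:sdef}, giving $\trace(\bB_{N,\slepw}) = 2N\slepw$, while the identity $\trace(\bB_{N,\slepw}) = \sum_{\ell=0}^{N-1} \lambda_{N,\slepw}^{(\ell)}$ follows from the orthonormal eigendecomposition~\eqref{eq:Seigdecomp} and the cyclic property of the trace, precisely as you wrote.
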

The following results will also prove useful.

\begin{cor}
Suppose that $\slepw$ is fixed, let $\epsilon \in (0,1)$, and let $k = 2N\slepw(1-\epsilon)$. Then for $N \ge N_0$,
$$
\sum_{\ell = k}^{N-1} \lambda_{N,\slepw}^{(\ell)} \le 2 N \slepw (\epsilon + C_1 e^{-C_2 N}),
$$
where $C_1$, $C_2$, and $N_0$ are as specified in Lemma~\ref{lem:smallk}.
\label{cor:smallk}
\end{cor}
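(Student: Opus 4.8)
The plan is to combine the exact trace identity~\eqref{eq:trace} with the lower bound on the leading eigenvalues from Lemma~\ref{lem:smallk}, exploiting the fact that the tail sum is simply the total minus the head. First I would use~\eqref{eq:trace} to split the trace into the first $k$ eigenvalues and the remaining ones:
\[
\sum_{\ell=k}^{N-1} \lambda_{N,\slepw}^{(\ell)} = \sum_{\ell=0}^{N-1} \lambda_{N,\slepw}^{(\ell)} - \sum_{\ell=0}^{k-1} \lambda_{N,\slepw}^{(\ell)} = 2N\slepw - \sum_{\ell=0}^{k-1} \lambda_{N,\slepw}^{(\ell)}.
\]
This reduces an upper bound on the tail to a lower bound on the head, which is exactly where Lemma~\ref{lem:smallk} is useful.

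Next I would lower-bound the head sum. Every index $\ell$ in the range $0 \le \ell \le k-1$ satisfies $\ell < k = 2N\slepw(1-\epsilon)$, so Lemma~\ref{lem:smallk} applies to each corresponding eigenvalue, giving $\lambda_{N,\slepw}^{(\ell)} \ge 1 - C_1 e^{-C_2 N}$ for $N \ge N_0$. Since there are exactly $k$ terms in the head sum, this yields
\[
\sum_{\ell=0}^{k-1} \lambda_{N,\slepw}^{(\ell)} \ge k\left(1 - C_1 e^{-C_2 N}\right) = 2N\slepw(1-\epsilon)\left(1 - C_1 e^{-C_2 N}\right).
\]

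Substituting this into the previous display and expanding the product gives
\[
\sum_{\ell=k}^{N-1} \lambda_{N,\slepw}^{(\ell)} \le 2N\slepw\left[1 - (1-\epsilon)\left(1 - C_1 e^{-C_2 N}\right)\right] = 2N\slepw\left[\epsilon + (1-\epsilon)C_1 e^{-C_2 N}\right].
\]
Finally, since $\epsilon \in (0,1)$ implies $1-\epsilon < 1$, I would bound $(1-\epsilon)C_1 e^{-C_2 N} \le C_1 e^{-C_2 N}$ to arrive at the claimed estimate $2N\slepw\left(\epsilon + C_1 e^{-C_2 N}\right)$.

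There is no genuine analytical difficulty here, since the whole argument is driven by the trace identity and the single inequality from Lemma~\ref{lem:smallk}; the only point requiring care is the bookkeeping around the (possibly non-integer) index $k = 2N\slepw(1-\epsilon)$. If $k$ is not an integer one must decide how to round when defining the summation limits, but because every eigenvalue lies in $(0,1)$, changing the term count by one alters the head sum by at most $1$, which is comfortably absorbed into the exponentially small slack; I would simply take $k$ to be the integer number of leading eigenvalues to which Lemma~\ref{lem:smallk} applies. The requirement $N \ge N_0$ is inherited verbatim from that lemma.
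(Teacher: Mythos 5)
Your proof is correct and follows essentially the same route as the paper's: both use the trace identity~\eqref{eq:trace} to convert the tail sum into $2N\slepw$ minus the head sum, lower-bound each of the first $k$ eigenvalues by $1 - C_1 e^{-C_2 N}$ via Lemma~\ref{lem:smallk}, and then absorb the leftover $(1-\epsilon)C_1 e^{-C_2 N}$ (equivalently, drop the $-\epsilon C_1 e^{-C_2 N}$ term) to reach the stated bound. Your closing remark on rounding the possibly non-integer $k$ is a careful detail the paper leaves implicit, but it does not change the argument.
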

\begin{proof} It follows from~\eqref{eq:eig2nbminus} and~\eqref{eq:trace} that
\begin{align}
 \sum_{\ell = k}^{N-1} \lambda_{N,\slepw}^{(\ell)} &= 2N\slepw - \sum_{\ell = 0}^{k-1} \lambda_{N,\slepw}^{(\ell)} \notag \\
 &\le 2N\slepw - 2N\slepw(1-\epsilon)\left(1 - C_1 e^{-C_2 N}\right)  \notag \\
 &= 2 N \slepw \left(1 - (1-\epsilon)\left(1 - C_1 e^{-C_2 N}\right) \right) \notag \\
 &= 2 N\slepw \left(\epsilon + C_1 e^{-C_2 N} -  \epsilon C_1 e^{-C_2 N} \right) \notag \\
 &\le 2N\slepw \left(\epsilon + C_1 e^{-C_2 N} \right) \label{eq:DPSSbound1}
 \end{align}
for $N \ge N_0$.
\end{proof}

\begin{cor}
Suppose that $\slepw$ is fixed, let $\epsilon \in (0,\frac{1}{2\slepw}-1)$, and let $k = 2N\slepw(1+\epsilon)$. Then for for $N \ge \max(N_0,N_1)$,
\begin{equation}
\label{eq:DPSSbound2}
\sum_{\ell = k}^{N-1} \lambda_{N,\slepw}^{(\ell)} \le 2 N \slepw \min\left( \epsilon + C_1 e^{-C_2 N} , \;  \frac{C_3}{2\slepw} e^{-C_4 N} \right), \end{equation}
where $C_1$, $C_2$, $C_3$, $C_4$, $N_0$, and $N_1$ are as specified in Lemma~\ref{lem:smallk} and Lemma~\ref{lem:bigk}.
\label{cor:bigk}
\end{cor}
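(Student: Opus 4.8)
The plan is to establish each argument of the minimum separately, since the stated inequality follows once both candidate upper bounds are verified. Both parts draw only on results already in hand—Corollary~\ref{cor:smallk}, Lemma~\ref{lem:bigk}, and the fact that the eigenvalues $\lambda_{N,\slepw}^{(\ell)}$ are nonnegative and strictly decreasing in $\ell$—so the corollary is essentially a bookkeeping combination of two prior estimates rather than a new derivation.

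For the first argument $\epsilon + C_1 e^{-C_2 N}$, I would observe that the present lower summation limit $k = 2N\slepw(1+\epsilon)$ exceeds the limit $2N\slepw(1-\epsilon)$ appearing in Corollary~\ref{cor:smallk}. Because the summands are nonnegative, discarding the extra terms can only shrink the sum, so
\[
\sum_{\ell = 2N\slepw(1+\epsilon)}^{N-1} \lambda_{N,\slepw}^{(\ell)} \le \sum_{\ell = 2N\slepw(1-\epsilon)}^{N-1} \lambda_{N,\slepw}^{(\ell)} \le 2N\slepw \left( \epsilon + C_1 e^{-C_2 N} \right)
\]
for $N \ge N_0$, which is exactly the first candidate bound.

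For the second argument, I would instead bound each summand individually. Every index in the sum satisfies $\ell \ge k = 2N\slepw(1+\epsilon)$, so the first part of Lemma~\ref{lem:bigk} gives $\lambda_{N,\slepw}^{(\ell)} \le C_3 e^{-C_4 N}$ whenever $N \ge N_1$. The number of terms is $N - k = N\left(1 - 2\slepw(1+\epsilon)\right)$, which is positive because $\epsilon < \frac{1}{2\slepw}-1$, and is at most $N$. Hence
\[
\sum_{\ell = k}^{N-1} \lambda_{N,\slepw}^{(\ell)} \le (N-k)\, C_3 e^{-C_4 N} \le N C_3 e^{-C_4 N} = 2N\slepw \cdot \frac{C_3}{2\slepw} e^{-C_4 N},
\]
the second candidate bound. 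Taking the minimum of the two estimates and noting that the first needs $N \ge N_0$ while the second needs $N \ge N_1$ yields the result for $N \ge \max(N_0,N_1)$.

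There is no genuine obstacle here. The only points requiring a moment's care are confirming that raising the lower limit from $2N\slepw(1-\epsilon)$ to $2N\slepw(1+\epsilon)$ legitimately reuses Corollary~\ref{cor:smallk} by monotonicity, and checking that the constraint $\epsilon < \frac{1}{2\slepw}-1$ keeps the term count $N-k$ strictly below $N$ so that the coarse bound $(N-k) \le N$ is valid.
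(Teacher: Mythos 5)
Your proof is correct and takes essentially the same route as the paper, whose entire proof reads that the result ``follows simply from Corollary~\ref{cor:smallk} and from~\eqref{eq:eig2nbplus}'': the first argument of the minimum comes from Corollary~\ref{cor:smallk} together with nonnegativity of the discarded terms (since $2N\slepw(1+\epsilon) \ge 2N\slepw(1-\epsilon)$), and the second comes from bounding each of the at most $N$ summands by $C_3 e^{-C_4 N}$ via~\eqref{eq:eig2nbplus}. You have simply written out the bookkeeping that the paper leaves implicit.
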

\begin{proof}
This result follows simply from Corollary~\ref{cor:smallk} and from~\eqref{eq:eig2nbplus}.
\end{proof}

\subsection{DPSS bases for sampled bandpass signals}

Using the DPSS vectors, it is possible to construct remarkably efficient bases for representing the discrete-time vectors that arise when collecting finite numbers of samples from most multiband signals (e.g., signals in $\cM(K,\bband)$).
Before presenting our full construction, however, we illustrate the basic concepts using sampled bandpass signals (e.g., signals in $\cM(1,\bband)$).

\subsubsection{A bandpass modulated DPSS basis}
\label{sec:mod}

For the moment, we restrict our attention to vectors of samples acquired from a continuous-time bandpass signal $x(t)$.
We assume that $\cF(x)$, the support of $X(F)$, is restricted to some interval $[F_c-\frac{\bband}{2}, F_c+\frac{\bband}{2}]$, where the center frequency $F_c$ and width $\bband$ are known.
We define $\bx = [x(0) ~ x(\tsamp) ~ \cdots ~ x((N-1)\tsamp)]^T$ as in Section~\ref{sec:background} to be a finite-length vector of $N$ samples of $x(t)$ collected uniformly with a sampling interval $\tsamp \le 1/(2 \max\{|F_c\pm\frac{\bband}{2}|\})$.

As a basis for efficiently representing many such vectors $\bx$, we propose the following.
First, let $\slepw= \frac{\bband\tsamp}{2}$, and as in~\eqref{eq:vdef}, let $\bS_{N,\slepw}$ denote the $N \times N$ matrix containing the $N$ DPSS vectors (constructed with parameters $N$ and $W$) as columns.
Next, define $f_c = F_c \tsamp$ and let $\bE_{f_c}$ denote an $N \times N$ diagonal matrix with entries
\begin{equation} \label{eq:modulatordef}
\bE_{f_c}[m,n] := \left\{ \begin{array}{ll} e^{j 2\pi f_c m}, & m = n, \\ 0, & m \neq n. \end{array} \right.
\end{equation}
Multiplying a vector by $\bE_{f_c}$ simply modulates that vector by a frequency $f_c$.
Finally, consider the $N \times N$ matrix $\bE_{f_c} \bS_{N,\slepw}$, whose columns are given by the DPSS vectors, each modulated by $f_c$.
One can easily check that $\bE_{f_c} \bS_{N,\slepw}$ forms an orthonormal basis for $\complex^N$, since $(\bE_{f_c} \bS_{N,\slepw})^H \bE_{f_c} \bS_{N,\slepw} = (\bS_{N,\slepw})^H (\bE_{f_c})^H \bE_{f_c} \bS_{N,\slepw} = (\bS_{N,\slepw})^H  \bS_{N,\slepw} = \bI$.
For a given integer $k \in \{1,2,\dots,N\}$, we let
$$
\left[ \bE_{f_c} \bS_{N,\slepw} \right]_k
$$
denote the $N \times k$ matrix formed by taking the first $k$ columns of $\bE_{f_c} \bS_{N,\slepw}$.
We will see that taking $\left[ \bE_{f_c} \bS_{N,\slepw} \right]_k$ with  $k \approx 2N\slepw$ forms an efficient basis that, to a high degree of accuracy, captures most sample vectors $\bx$ that can arise from sampling bandpass signals.

\subsubsection{Approximation quality for sampled complex exponentials}
\label{ssec:DPSSsinusoid}

To best illustrate one of our key points regarding the approximation of bandpass signals, let us first restrict our focus to the simplest possible bandpass signals: pure complex exponentials.
Specifically, consider a continuous-time signal of the form $x(t) = e^{j 2\pi F t}$ where the frequency $F$ belongs to the interval $[F_c-\frac{\bband}{2}, F_c+\frac{\bband}{2}]$.
We define $\bx = [x(0) ~ x(\tsamp) ~ \cdots ~ x((N-1)\tsamp)]^T$, $f_c = F_c \tsamp$, and $\slepw= \frac{\bband\tsamp}{2}$ as in Section~\ref{sec:mod}.
Also, defining
$$
\sampe{f} := \left[ \begin{array}{c} e^{j 2\pi f 0} \\ e^{j 2\pi f} \\ \vdots \\ e^{j 2\pi f (N-1)}  \end{array} \right]
$$
for all $f \in [f_c-\slepw,f_c+\slepw]$, we note that the sample vector $\bx$ will equal $\sampe{f}$ for $f = F\tsamp$.
Without knowing the exact carrier frequency $F$ in advance, we ask whether it is possible to define an efficient low-dimensional basis for capturing the energy in any sample vector $\bx$ that could arise in this model.

At first glance, this problem may appear difficult or impossible.
The infinite set of possible sample vectors $\{\sampe{f} \}_{f \in [f_c-\slepw,f_c+\slepw]}$ traverses a $1$-dimensional manifold (parameterized by $f$) within $\complex^N$.
Technically speaking, these vectors collectively span all of $\complex^N$.
What is remarkable, however, is that to a high degree of accuracy, these vectors are approximately concentrated along a very low-dimensional subspace of $\complex^N$.
Moreover, for any $k \in \{1,2,\dots,N\}$, it is possible to analytically find the best $k$-dimensional subspace that minimizes the average squared distance from the vectors $\sampe{f}$ to the subspace, and this subspace is spanned by the first $k$ modulated DPSS vectors.

To see this, we let $\bQ$ denote a subspace of $\complex^N$ and $\bP_\bQ$ denote the orthogonal projection operator onto $\bQ$. We would like to minimize \begin{equation}
\frac{1}{\bband}  \cdot \int_{F_c-\frac{\bband}{2}}^{F_c+\frac{\bband}{2}} \| \sampe{F\tsamp} - \bP_\bQ \sampe{F\tsamp} \|_2^2 \; dF =
\frac{1}{2\slepw} \cdot \int_{f_c-\slepw}^{f_c+\slepw} \| \sampe{f} - \bP_\bQ \sampe{f} \|_2^2 \; df
\label{eq:obj}
\end{equation}
over all subspaces $\bQ$ of some prescribed dimension $k$. As we show below in Theorem~\ref{thm:sampsinusoid}, this minimization problem can be solved by relating it to the Karhunen-Loeve (KL) transform.\footnote{The observation that a connection exists between DPSS's and the KL transform is not a new one (see, for example,~\cite{scharf1991svd,hein1994theoretical,mugler99linear,fancourt2000relationship}).}
For the benefit of the reader, we briefly review the relevant concepts from the KL transform in Appendix~\ref{sec:klreview}.

\begin{thm} \label{thm:sampsinusoid}
For any $k$ with $k \in \{1,2,\dots,N\}$, the $k$-dimensional subspace which minimizes~\eqref{eq:obj} is spanned by the columns of $\bQ = [\bE_{f_c} \bS_{N,\slepw}]_k$, i.e., by the first $k$ DPSS vectors modulated to center frequency $f_c$. Furthermore, with this choice of $\bQ$, we will have
$$
\frac{1}{2\slepw} \cdot \int_{f_c-\slepw}^{f_c+\slepw} \| \sampe{f} - \bP_\bQ \sampe{f} \|_2^2 \; df = \frac{1}{2\slepw} \sum_{\ell = k}^{N-1} \lambda_{N,\slepw}^{(\ell)}.
$$
For a point of comparison, each sampled sinusoid has energy $\| \sampe{f} \|_2^2 = N$.
\end{thm}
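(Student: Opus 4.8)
The plan is to recognize the minimization in~\eqref{eq:obj} as a continuous analogue of principal component analysis and to solve it using the Karhunen--Loeve machinery reviewed in Appendix~\ref{sec:klreview}. First I would introduce the second-moment (correlation) matrix
$$
\bR := \frac{1}{2\slepw} \int_{f_c-\slepw}^{f_c+\slepw} \sampe{f} \, \sampe{f}^H \; df \; \in \complex^{N \times N}.
$$
Since $\bP_\bQ$ is a Hermitian idempotent, we have $\|\sampe{f} - \bP_\bQ \sampe{f}\|_2^2 = \|\sampe{f}\|_2^2 - \sampe{f}^H \bP_\bQ \sampe{f}$, and integrating termwise together with the cyclic property of the trace gives
$$
\frac{1}{2\slepw} \int_{f_c-\slepw}^{f_c+\slepw} \| \sampe{f} - \bP_\bQ \sampe{f} \|_2^2 \; df = \trace(\bR) - \trace(\bP_\bQ \bR).
$$
Thus minimizing the objective over all $k$-dimensional subspaces $\bQ$ is equivalent to maximizing $\trace(\bP_\bQ \bR)$. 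By the Karhunen--Loeve (Ky Fan) optimality principle, this maximum equals the sum of the $k$ largest eigenvalues of $\bR$ and is attained precisely when $\bQ$ is the span of the corresponding top $k$ eigenvectors, so the whole theorem reduces to identifying the eigenvectors and eigenvalues of $\bR$.

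The key computation is to evaluate the entries of $\bR$ explicitly. A direct calculation of the integral $\frac{1}{2\slepw}\int_{f_c-\slepw}^{f_c+\slepw} e^{j2\pi f(m-n)}\,df$ yields $\bR[m,n] = e^{j 2\pi f_c (m-n)} \sinc{2\slepw(m-n)}$, which I then recognize, in view of the definition~\eqref{eq:sdef} of $\bB_{N,\slepw}$ and the modulation matrix~\eqref{eq:modulatordef}, as the factorization $\bR = \frac{1}{2\slepw}\, \bE_{f_c}\, \bB_{N,\slepw}\, \bE_{f_c}^H$. Substituting the eigendecomposition~\eqref{eq:Seigdecomp} then gives $\bR = \frac{1}{2\slepw}(\bE_{f_c}\bS_{N,\slepw})\,\bLambda_{N,\slepw}\,(\bE_{f_c}\bS_{N,\slepw})^H$. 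Since $\bE_{f_c}\bS_{N,\slepw}$ was already shown to be unitary, this is an eigendecomposition of $\bR$: its eigenvectors are the modulated DPSS vectors $\bE_{f_c}\bs_{N,\slepw}^{(\ell)}$ and its eigenvalues are $\frac{1}{2\slepw}\lambda_{N,\slepw}^{(\ell)}$. Because the DPSS eigenvalues are strictly decreasing, the top $k$ eigenvectors are exactly the first $k$ columns of $\bE_{f_c}\bS_{N,\slepw}$, which establishes the first claim that the optimal subspace is $\bQ = [\bE_{f_c}\bS_{N,\slepw}]_k$.

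Finally, I would compute the residual value. The optimal objective equals $\trace(\bR)$ minus the sum of the top $k$ eigenvalues, i.e.\ $\trace(\bR) - \frac{1}{2\slepw}\sum_{\ell=0}^{k-1}\lambda_{N,\slepw}^{(\ell)}$. Since every diagonal entry of $\bR$ equals $1$ we have $\trace(\bR) = N = \frac{1}{2\slepw}\sum_{\ell=0}^{N-1}\lambda_{N,\slepw}^{(\ell)}$ by the trace identity~\eqref{eq:trace}, so the residual telescopes to $\frac{1}{2\slepw}\sum_{\ell=k}^{N-1}\lambda_{N,\slepw}^{(\ell)}$, as claimed; the same identity also yields $\|\sampe{f}\|_2^2 = N$ for the comparison remark. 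I expect the main obstacle to be the explicit evaluation and bookkeeping of the integral defining $\bR[m,n]$ and recognizing it as the modulated sinc kernel $\bB_{N,\slepw}$; once that factorization is in hand, everything else follows from the unitarity of $\bE_{f_c}\bS_{N,\slepw}$ and the standard Karhunen--Loeve optimality result.
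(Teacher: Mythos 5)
Your proof is correct, and at its core it rests on the same computation as the paper's: both arguments come down to identifying the second-moment matrix $\bR = \frac{1}{2\slepw}\,\bE_{f_c}\bB_{N,\slepw}\bE_{f_c}^H$, substituting the eigendecomposition~\eqref{eq:Seigdecomp}, and invoking the optimality of the top-$k$ eigenspace. The difference is in the framing. You work deterministically: you expand the residual via the projection identity, pass the integral through the trace, and apply the Ky Fan maximization principle to $\trace(\bP_\bQ \bR)$, which needs no probabilistic apparatus at all. The paper instead recasts the integral as an expectation over a random vector $\br = e^{j\theta'}\sampe{f'}$, where $f'$ is uniform on $[f_c-\slepw, f_c+\slepw]$ and $\theta'$ is an auxiliary uniform random phase; the phase is precisely the device needed to make $\expval{\br} = \bzero$, so that the zero-mean KL transform result of Appendix~\ref{sec:klreview} applies verbatim, after which the autocorrelation matrix computation produces the same $\bR$. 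Your route is slightly more self-contained (no zero-mean verification, no randomization), and it makes transparent that the result is pure linear algebra. What the paper's probabilistic framing buys is reuse: the identical autocorrelation structure is what drives the proof of Theorem~\ref{thm:rproc}, where $\bx$ genuinely is a random vector sampled from a bandpass process, and that proof becomes a one-line appeal to the same KL analysis; your deterministic argument would not transfer there without being rewrapped in expectation form. One cosmetic remark: the final energy claim $\| \sampe{f} \|_2^2 = N$ does not need the trace identity~\eqref{eq:trace} at all---it is immediate because each of the $N$ entries of $\sampe{f}$ has unit modulus.
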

\begin{proof}See Appendix~\ref{pf:sampsinusoid}.\end{proof}

Recall that the first $\approx 2N\slepw$ DPSS eigenvalues are very close to $1$ and the rest are small, so we are guaranteed a high degree of approximation to the sampled sinusoids if we choose $k \approx 2N\slepw$. In particular, if we choose $k = 2N\slepw(1-\epsilon)$, it follows from Corollary~\ref{cor:smallk} that
$$
\frac{1}{2\slepw} \sum_{\ell = k}^{N-1} \lambda_{N,\slepw}^{(\ell)} \le N (\epsilon + C_1 e^{-C_2 N})
$$
for $N \ge N_0$. Alternatively, if we choose $k = 2N\slepw(1+\epsilon)$, it follows from Corollary~\ref{cor:bigk} that
$$
\frac{1}{2\slepw} \sum_{\ell = k}^{N-1} \lambda_{N,\slepw}^{(\ell)} \le N \min\left(  \epsilon + C_1 e^{-C_2 N}, \; \frac{C_3}{2\slepw} e^{-C_4 N} \right)
$$
for $N \ge \max(N_0,N_1)$. Since each sampled sinusoid has energy $\| \sampe{f} \|_2^2 = N$, for the subspace we have chosen (say with $k = 2N\slepw(1-\epsilon)$), the average relative approximation error across the sampled sinusoids is bounded by a small fraction $\epsilon + C_1 e^{-C_2 N}$ of the energy of a given sinusoid.

It is important to note that, while we are guaranteed a very high degree of approximation accuracy in an MSE sense, we are not guaranteed such accuracy uniformly over {\em all} sampled sinusoids in our band of interest. A relatively small number of sinusoids may have higher values of $\| \sampe{f} - \bP_\bQ \sampe{f} \|_2^2$, and in practice this diminished approximation performance tends to occur for those sinusoids with frequencies near the edge of the band (i.e., for $f$ near $f_c \pm \slepw$).

\begin{figure}
   \centering
   \begin{tabular}{cc}
   \hspace*{-.25in} \includegraphics[width=\imgwidth]{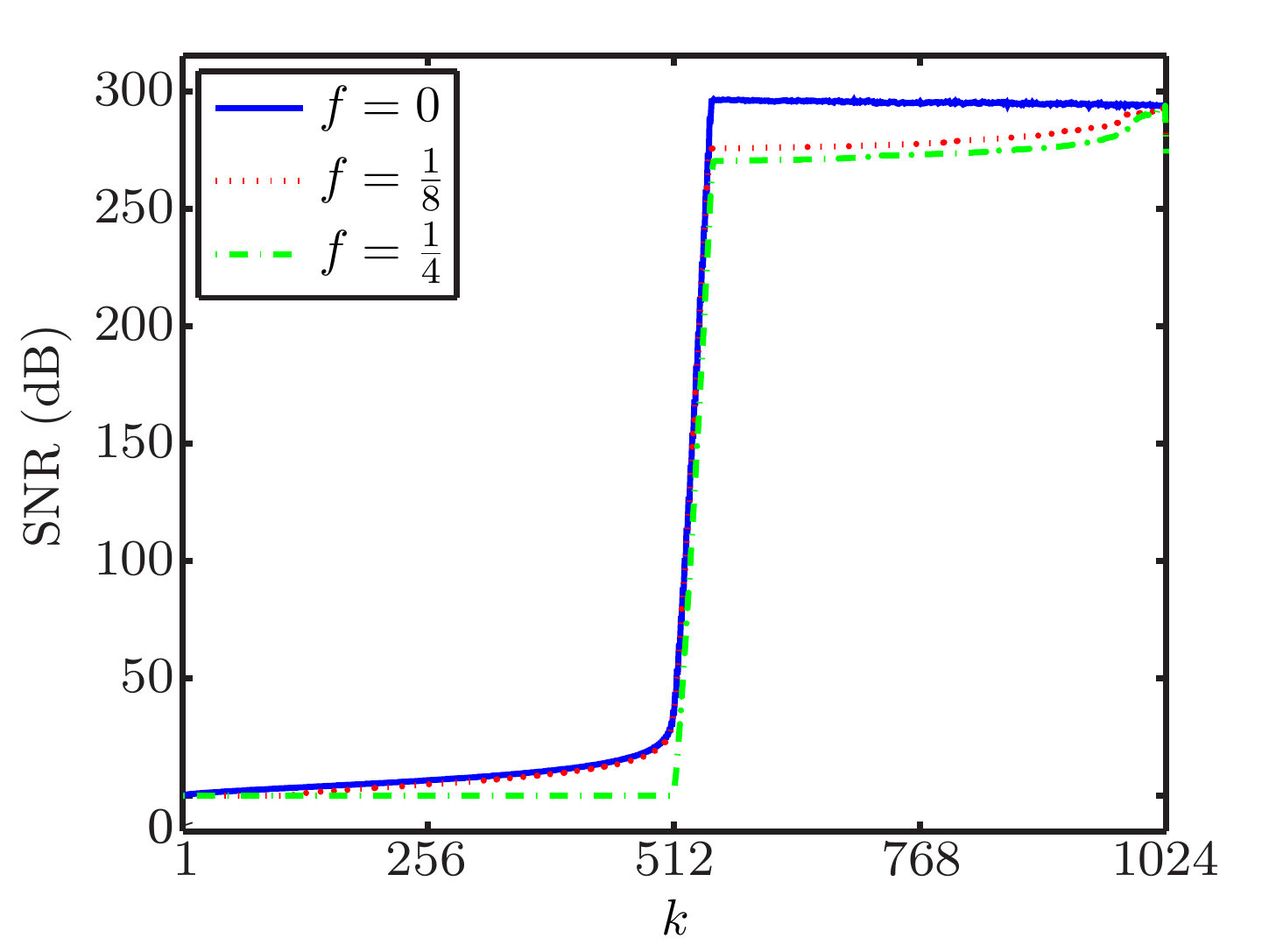} &    \includegraphics[width=\imgwidth]{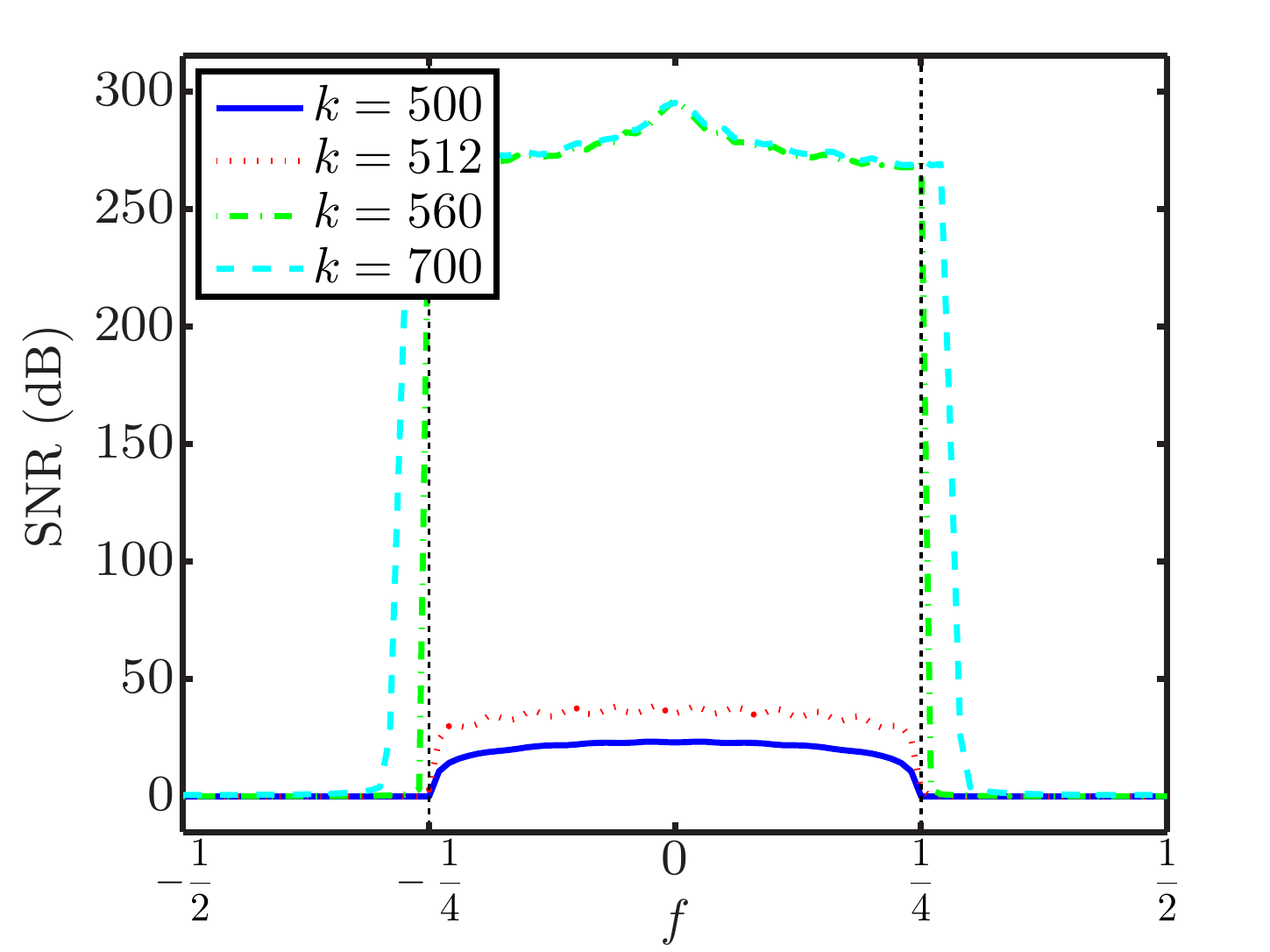} \\
   {\small (a)} & \hspace*{.15in} {\small (b)}
   \end{tabular}
   \caption{\small \sl DPSS bases efficiently capture the energy in sampled complex exponentials. (a) SNR captured from three sampled complex exponentials (with differing frequencies $f$) as a function of the number $k$ of DPSS basis elements.  (b) SNR captured as a function of $f$ for four fixed values of $k$. In both plots, $N=1024$ and $\slepw = \frac{1}{4}$.
   \label{fig:dpss_approx}}
\end{figure}

This behavior is illustrated in Figure~\ref{fig:dpss_approx}.  In Figure~\ref{fig:dpss_approx}(a) we consider three possible frequencies ($f=0$, $\frac{1}{8}$, or $\frac{1}{4}$) and show the ability of the baseband DPSS basis (with $f_c = 0$ and $W = \frac{1}{4}$) to capture the energy in these sinusoids as a function of how many DPSS vectors are added to the basis.  The ability of a basis to capture a given signal is quantified through
$$
\mathrm{SNR} = 20\log_{10} \left( \frac{\norm{\sampe{f}}}{\norm{\sampe{f} - \bP_\bQ \sampe{f}}} \right) \mathrm{dB}.
$$
Overall, we observe broadly similar behavior for each frequency in that by adding slightly more than $2N\slepw$ DPSS vectors to our basis, we capture essentially all of the energy in the original signal.  However, we do observe slightly different behavior for the sinusoid with a frequency exactly at $\slepw$.  In this case we capture very little of the energy in the signal until we have added more than $2N\slepw$ vectors, while for lower frequencies we begin to do well before this point.

To illustrate this phenomenon, Figure~\ref{fig:dpss_approx}(b) considers four different sizes for the DPSS basis and shows the SNR as a function of the frequency of the sinusoid.  In the cases where $k=500 < 2N\slepw$ and $k=2N\slepw$ we see somewhat similar behavior---we are capturing a good fraction, but not all, of any sinusoid whose frequency is not too close to $\slepw$.  We see a dramatic difference when we increase $k$ slightly to $560$, at which point we are capturing virtually all of the energy in any sinusoid within the band of interest.  However, this eventually has a potentially problematic side-effect, as we can see by further increasing $k$ to $700$.  Specifically, as we continue to increase the size of the DPSS basis we begin to capture energy of sinusoids that lie outside the targeted band as well.  This tradeoff will play an important role in the selection of the appropriate $k$ in the algorithms we propose below.

\subsubsection{Approximation quality for sampled bandpass signals}
\label{ssec:DPSSbandpass}

The above analysis shows that sampled sinusoids, on average, are well-approximated by modulated DPSS bases. This is a strong indication that such bases might also be useful for approximating more general sampled bandpass signals, since the vectors $\{\sampe{f} \}_{f\in[f_c-\slepw,f_c+\slepw]} \subset \complex^N$ themselves act as ``building blocks'' for representing sampled, bandpass signals in $\complex^N$.  Formally, for any continuous-time bandpass signal $x(t)$ with frequency content restricted to the interval $[F_c-\frac{\bband}{2}, F_c+\frac{\bband}{2}]$, one can show that for each $n \in \{0,1,\dots,N-1\}$,
$$
x[n] := x(n\tsamp) = \int_{F_c-\frac{\bband}{2}}^{F_c+\frac{\bband}{2}} X(F) e^{j 2\pi F n \tsamp} \; dF = \int_{f_c-\slepw}^{f_c+\slepw} X(f) \sampe{f}[n] \; df,
$$
where we recall that $X(F)$ denotes the CTFT of $x(t)$ and $X(f)$ denotes the DTFT of $x[n]$.
So, informally, $\bx$ can be expressed as a linear combination of (infinitely many) sampled complex exponentials $\sampe{f}$, where $f$ ranges from $f_c-\slepw$ to $f_c+\slepw$.

Our analysis from Section~\ref{ssec:DPSSsinusoid} allows us to show that in a certain sense, most continuous-time bandpass signals, when sampled and time-limited, are well-approximated by the modulated DPSS basis.
In particular, the following result establishes that bandpass random processes, when sampled and time-limited, are in expectation well-approximated.

\begin{thm}
Let $x(t)$ denote a continuous-time, zero-mean, wide sense stationary random process with power spectrum
\begin{equation*}
P_x(F) = \left\{ \begin{array}{ll} \frac{1}{\bband} ,&  F \in [F_c-\frac{\bband}{2}, F_c+\frac{\bband}{2}], \\ 0, & \mathrm{otherwise}, \end{array} \right.
\end{equation*}
and let $\bx = [x(0) ~ x(\tsamp) ~ \cdots ~ x((N-1)\tsamp)]^T \in \complex^N$ denote a finite vector of samples acquired from $x(t)$ with a sampling interval $\tsamp \le 1/(2 \max\{|F_c\pm\frac{\bband}{2}|\})$.
Then over all $k$-dimensional subspaces of $\complex^N$, $\bx$ is best approximated (in an MSE sense) by the subspace spanned by the columns of $\bQ = [\bE_{f_c} \bS_{N,\slepw}]_k$, where $f_c = F_c \tsamp$ and $\slepw= \frac{\bband\tsamp}{2}$.
The corresponding MSE is given by
\begin{equation}
\expval{\| \bx - \bP_\bQ \bx \|_2^2} = \frac{1}{2\slepw} \sum_{\ell = k}^{N-1} \lambda_{N,\slepw}^{(\ell)},
\label{eq:rprocmse}
\end{equation}
while $\expval{\| \bx \|_2^2} = N$.
\label{thm:rproc}
\end{thm}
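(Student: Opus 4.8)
The plan is to invoke the Karhunen–Loève (KL) / principal-component characterization of the optimal low-dimensional subspace for a zero-mean random vector: among all $k$-dimensional subspaces, the one minimizing $\expval{\|\bx - \bP_\bQ\bx\|_2^2}$ is spanned by the $k$ eigenvectors of the covariance matrix $\bR := \expval{\bx\bx^H}$ associated with its $k$ largest eigenvalues, and the resulting MSE equals the sum of the remaining eigenvalues. (This is exactly the KL material reviewed in Appendix~\ref{sec:klreview}.) The whole theorem then reduces to computing $\bR$ and recognizing its eigendecomposition.

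The key computational step is to show that $\bR$ is a modulated prolate (band) matrix. Since $x(t)$ is wide-sense stationary with power spectrum $P_x(F)$, its autocorrelation is $R_x(\tau) = \int P_x(F)e^{j2\pi F\tau}\,dF$, so the entries of $\bR$ are $\bR[m,n] = \expval{x[m]\overline{x[n]}} = R_x((m-n)\tsamp)$. Substituting the flat spectrum $P_x(F) = \frac{1}{\bband}$ on $[F_c-\frac{\bband}{2}, F_c+\frac{\bband}{2}]$ and changing variables to $f = F\tsamp$ (recalling $f_c = F_c\tsamp$, $\slepw = \frac{\bband\tsamp}{2}$, and $\bband\tsamp = 2\slepw$), I would evaluate the resulting integral $\frac{1}{2\slepw}\int_{f_c-\slepw}^{f_c+\slepw} e^{j2\pi f(m-n)}\,df$ and simplify it (using $\int_{-\slepw}^{\slepw}e^{j2\pi g\tau}\,dg = 2\slepw\,\sinc{2\slepw\tau}$) to obtain $\bR[m,n] = e^{j2\pi f_c(m-n)}\sinc{2\slepw(m-n)}$. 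Comparing with~\eqref{eq:sdef} and~\eqref{eq:modulatordef}, this is precisely $\bR = \frac{1}{2\slepw}\bE_{f_c}\bB_{N,\slepw}\bE_{f_c}^H$.

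With $\bR$ in hand, I would plug in the eigendecomposition $\bB_{N,\slepw} = \bS_{N,\slepw}\bLambda_{N,\slepw}\bS_{N,\slepw}^H$ from~\eqref{eq:Seigdecomp} to get
\[
\bR = \frac{1}{2\slepw}(\bE_{f_c}\bS_{N,\slepw})\bLambda_{N,\slepw}(\bE_{f_c}\bS_{N,\slepw})^H,
\]
which (since $\bE_{f_c}\bS_{N,\slepw}$ is unitary, as already noted in Section~\ref{sec:mod}) is exactly the eigendecomposition of $\bR$: its eigenvectors are the modulated DPSS vectors and its eigenvalues are $\frac{1}{2\slepw}\lambda_{N,\slepw}^{(\ell)}$. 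Because the DPSS eigenvalues are strictly decreasing, the top $k$ eigenvalues correspond to the first $k$ columns of $\bE_{f_c}\bS_{N,\slepw}$, so the KL result immediately identifies $\bQ = [\bE_{f_c}\bS_{N,\slepw}]_k$ as the optimal subspace and yields $\expval{\|\bx - \bP_\bQ\bx\|_2^2} = \frac{1}{2\slepw}\sum_{\ell=k}^{N-1}\lambda_{N,\slepw}^{(\ell)}$. Finally, $\expval{\|\bx\|_2^2} = \trace(\bR) = \frac{1}{2\slepw}\sum_{\ell=0}^{N-1}\lambda_{N,\slepw}^{(\ell)} = N$ by~\eqref{eq:trace} (equivalently, each diagonal entry of $\bR$ equals $\sinc{0}=1$).

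The main obstacle is really just the covariance computation of the second paragraph—keeping the $\sinc$ normalization and the change of variables straight so that the factor $\frac{1}{2\slepw}$ lands correctly; everything downstream is a mechanical consequence of~\eqref{eq:Seigdecomp}, the strict ordering of the $\lambda_{N,\slepw}^{(\ell)}$, and the standard KL optimality statement. It is worth remarking that this theorem is the stochastic mirror of Theorem~\ref{thm:sampsinusoid}: the flat-spectrum expectation plays exactly the role of the uniform average over $f\in[f_c-\slepw,f_c+\slepw]$ in~\eqref{eq:obj}, so the two proofs share the same underlying matrix.
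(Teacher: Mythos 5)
Your proposal is correct and takes essentially the same approach as the paper: compute the covariance matrix via the Wiener--Khinchin relation, identify it as the modulated prolate matrix $\bR = \frac{1}{2\slepw}\bE_{f_c}\bB_{N,\slepw}\bE_{f_c}^H$, and conclude via the KL-transform optimality argument together with the eigendecomposition~\eqref{eq:Seigdecomp} and the trace identity~\eqref{eq:trace}. The only cosmetic difference is that the paper obtains the autocorrelation through the discrete-time power spectrum and its inverse DTFT and then reuses the matrix $\bR$ and KL analysis already established in the proof of Theorem~\ref{thm:sampsinusoid}, whereas you sample the continuous-time autocorrelation directly and redo the eigendecomposition step explicitly.
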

\begin{proof}See Appendix~\ref{pf:rproc}.\end{proof}

As in our discussion following Theorem~\ref{thm:sampsinusoid}, we can ensure that the MSE in (\ref{eq:rprocmse}) is small compared to  $\expval{\| \bx \|_2^2}$ by choosing $k \approx 2N\slepw$. This suggests that in a probabilistic sense, most bandpass signals, when sampled and time-limited, will be well-approximated by a small number of modulated DPSS vectors. Again, however, we are not guaranteed such accuracy uniformly over {\em all} sampled bandpass signals in our band of interest.  A relatively small number of bandpass signals $x(t)$ could lead to sample vectors $\bx$ with higher values of $\| \bx - \bP_\bQ \bx \|_2^2$. In particular, recalling that the baseband DPSS's are themselves strictly bandlimited, it follows that there exist strictly bandpass signals that when sampled and time-limited yield the modulated DPSS vectors. If we restrict $\bQ$ to the first $k$ columns of $\bE_{f_c} \bS_{N,\slepw}$, then any bandpass signal producing a sample vector $\bx = \bE_{f_c} \bs_{N,\slepw}^{(\ell)}$ with $\ell \ge k$ will have $\bP_\bQ \bx = \bzero$ and $\| \bx - \bP_\bQ \bx \|_2^2 = \| \bx\|_2^2 $. Fortunately, Theorem~\ref{thm:rproc} confirms that such bandpass signals are relatively uncommon: at the risk of belaboring this important point, most bandpass signals, when sampled and time-limited, produce sample vectors approximately in the span of just the first $k \approx 2N\slepw$ modulated DPSS vectors.

On a related note, signal processing engineers often have a sense for how much ``spectral leakage'' to anticipate when collecting a finite window of samples of a continuous-time signal.
(Frequently, this leakage is reduced via a smooth windowing function~\cite{Lyons_Understanding}.)
Such practitioners can rest assured that, in every case where the spectral leakage is small outside a bandpass range of frequencies, the resulting sample vector can be well-approximated by a small number of modulated DPSS vectors.
\begin{thm} \label{thm:sampbandpass}
Let $x[n] = \cT_N(x[n])$ be a time-limited sequence, and suppose that $x[n]$ is approximately bandlimited to the frequency range $f \in [f_c-\slepw,f_c+\slepw]$ such that for some $\delta$,
$$
\| \cB_{f_c,\slepw} x \|^2_2 \ge (1-\delta) \|x\|_2^2,
$$
where $\cB_{f_c,\slepw}$ denotes an orthogonal projection operator that takes a discrete-time signal, bandlimits its DTFT to the frequency range $f \in [f_c-\slepw,f_c+\slepw]$, and returns the corresponding signal in the time domain.
Let $\bx \in \complex^N$ denote the vector formed by restricting $x[n]$ to the indices $n = 0,1,\dots,N-1$.
Set $k = 2N\slepw(1+\epsilon)$ and let $\bQ = \left[ \bE_{f_c} \bS_{N,\slepw} \right]_k$.
Then for $N \ge N_1$,
\begin{equation}
\| \bx - \bP_\bQ \bx \|_2^2 \le (\delta + N C_3 e^{- C_4 N})\|\bx\|_2^2,
\label{eq:sampbandpassdet}
\end{equation}
where $C_3$, $C_4$, and $N_1$ are as specified in Lemma~\ref{lem:bigk}.
\end{thm}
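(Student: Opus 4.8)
The plan is to work entirely in the orthonormal basis of modulated DPSS vectors and to recognize the approximate-bandlimiting hypothesis as a statement about the DPSS eigenvalues. First I would expand $\bx$ in the orthonormal basis $\{\bE_{f_c}\bs_{N,\slepw}^{(\ell)}\}_{\ell=0}^{N-1}$ (shown to be orthonormal in Section~\ref{sec:mod}), writing $a_\ell = \langle \bE_{f_c}\bs_{N,\slepw}^{(\ell)}, \bx\rangle$. Since $\bQ = [\bE_{f_c}\bS_{N,\slepw}]_k$ is spanned by the first $k$ of these vectors, the orthogonal projection keeps exactly the first $k$ coefficients, so the quantity to be bounded is the tail of the coefficient energy,
\[
\|\bx - \bP_\bQ\bx\|_2^2 = \sum_{\ell=k}^{N-1}|a_\ell|^2,
\]
while $\|\bx\|_2^2 = \sum_{\ell=0}^{N-1}|a_\ell|^2$.

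The key step is to re-express the hypothesis $\|\cB_{f_c,\slepw}x\|_2^2 \ge (1-\delta)\|x\|_2^2$ in terms of the $a_\ell$. Because $\cB_{f_c,\slepw}$ is an orthogonal projection and $x = \cT_N(x)$ is time-limited, I would write $\|\cB_{f_c,\slepw}x\|_2^2 = \langle x, \cB_{f_c,\slepw}x\rangle$ and observe that, when the inner product is restricted to the indices $n = 0,\dots,N-1$ (the only ones on which $x$ is supported), the composite operation ``time-limit, bandlimit to $[f_c-\slepw,f_c+\slepw]$, time-limit'' is implemented exactly by the modulated prolate matrix $\bE_{f_c}\bB_{N,\slepw}\bE_{f_c}^H$, whose $(m,n)$ entry is $e^{j2\pi f_c(m-n)}2\slepw\sinc{2\slepw(m-n)}$ (the bandpass impulse response). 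Hence $\|\cB_{f_c,\slepw}x\|_2^2 = \langle \bx, \bE_{f_c}\bB_{N,\slepw}\bE_{f_c}^H\bx\rangle$. Feeding in the eigendecomposition~\eqref{eq:Seigdecomp}, so that $\bE_{f_c}\bB_{N,\slepw}\bE_{f_c}^H = (\bE_{f_c}\bS_{N,\slepw})\bLambda_{N,\slepw}(\bE_{f_c}\bS_{N,\slepw})^H$ has the modulated DPSS vectors as eigenvectors with eigenvalues $\lambda_{N,\slepw}^{(\ell)}$, this becomes $\|\cB_{f_c,\slepw}x\|_2^2 = \sum_{\ell=0}^{N-1}\lambda_{N,\slepw}^{(\ell)}|a_\ell|^2$. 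The hypothesis is therefore equivalent to
\[
\sum_{\ell=0}^{N-1}\left(1-\lambda_{N,\slepw}^{(\ell)}\right)|a_\ell|^2 \le \delta\|\bx\|_2^2 .
\]

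With this in hand the bound follows by splitting the tail. I would write
\[
\sum_{\ell=k}^{N-1}|a_\ell|^2 = \sum_{\ell=k}^{N-1}\left(1-\lambda_{N,\slepw}^{(\ell)}\right)|a_\ell|^2 + \sum_{\ell=k}^{N-1}\lambda_{N,\slepw}^{(\ell)}|a_\ell|^2 .
\]
The first sum is bounded by the full sum $\sum_{\ell=0}^{N-1}(1-\lambda_{N,\slepw}^{(\ell)})|a_\ell|^2 \le \delta\|\bx\|_2^2$, since each term is nonnegative. For the second sum, since $k = 2N\slepw(1+\epsilon)$, Lemma~\ref{lem:bigk} gives $\lambda_{N,\slepw}^{(\ell)}\le C_3 e^{-C_4 N}$ for every $\ell\ge k$ once $N\ge N_1$; bounding each $|a_\ell|^2$ by $\|\bx\|_2^2$ and noting there are at most $N$ such terms then yields $\sum_{\ell=k}^{N-1}\lambda_{N,\slepw}^{(\ell)}|a_\ell|^2 \le NC_3 e^{-C_4 N}\|\bx\|_2^2$. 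Combining the two pieces gives $\|\bx-\bP_\bQ\bx\|_2^2 \le (\delta + NC_3 e^{-C_4 N})\|\bx\|_2^2$, as claimed.

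The main obstacle is the middle step: correctly identifying the infinite-dimensional bandlimited energy $\|\cB_{f_c,\slepw}x\|_2^2$ with the finite quadratic form $\langle\bx, \bE_{f_c}\bB_{N,\slepw}\bE_{f_c}^H\bx\rangle$. This demands careful bookkeeping of how the projection $\cB_{f_c,\slepw}$ acts on the zero-padded sequence and why only the entries on $\{0,\dots,N-1\}$ survive the inner product with the time-limited $x$. The modulation by $f_c$ merely conjugates the baseband prolate matrix and so leaves the eigenvalues $\lambda_{N,\slepw}^{(\ell)}$ untouched, reducing the bandpass case to the baseband identity. Once this identification is made, everything else is a one-line eigenvalue estimate drawn directly from Lemma~\ref{lem:bigk}.
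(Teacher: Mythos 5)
Your proof is correct and follows essentially the same route as the paper's: both expand in the orthonormal modulated DPSS basis, reduce the hypothesis to the spectral identity $\|\cB_{f_c,\slepw}x\|_2^2 = \sum_{\ell}\lambda_{N,\slepw}^{(\ell)}|a_\ell|^2$, and finish with the tail eigenvalue bound of Lemma~\ref{lem:bigk} applied to the at most $N$ indices $\ell \ge k = 2N\slepw(1+\epsilon)$. The only difference is cosmetic: you derive the identity via the projection property $\|\cB_{f_c,\slepw}x\|_2^2 = \langle x, \cB_{f_c,\slepw}x\rangle$ and the observation that the restriction of $\cB_{f_c,\slepw}$ to indices $\{0,\dots,N-1\}$ is the modulated prolate matrix $\bE_{f_c}\bB_{N,\slepw}\bE_{f_c}^H$ with eigendecomposition~\eqref{eq:Seigdecomp}, whereas the paper expands $\cB_{f_c,\slepw}x$ directly and uses the orthogonality and norms $\sqrt{\lambda_{N,\slepw}^{(\ell)}}$ of the bandlimited, modulated, time-limited DPSS's; likewise your explicit tail splitting into $(1-\lambda)$ and $\lambda$ pieces is algebraically equivalent to the paper's rearrangement.
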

\begin{proof} See Appendix~\ref{pf:sampbandpass}. \end{proof}

Note that, in contrast to Theorem~\ref{thm:rproc}, Theorem~\ref{thm:sampbandpass} requires $k$ to be slightly larger than $2N\slepw$.

\subsection{DPSS dictionaries for sampled multiband signals}
\label{ssec:DPSSmultiband}

\subsubsection{A multiband modulated DPSS dictionary}

In order to construct an efficient dictionary for sampled multiband signals, we propose simply to concatenate an ensemble of modulated DPSS bases, each one modulated to the center of a band in our model. In particular, in light of the multiband signal model discussed in Section~\ref{ssec:multiband}, where the bandwidth $[-\frac{\bnyq}{2},\frac{\bnyq}{2}]$ is partitioned into bands $\Delta_i$ of size $\bband$, let us define the midpoint of $\Delta_i$ as
$$
F_i = -\frac{\bnyq}{2}+ \left(i+\frac{1}{2}\right)\bband, ~ i \in \{0,1,\dots,J-1\},
$$
where $J = \frac{\bnyq}{\bband}$. Let $\slepw= \frac{\bband\tsamp}{2}$ (we assume a sampling interval $\tsamp \le \frac{1}{\bnyq}$), and for each $i$, let $f_i = F_i \tsamp$ and define
\begin{equation}
\bPsi_i = \left[ \bE_{f_i} \bS_{N,\slepw} \right]_k
\label{eq:qidef}
\end{equation}
for some value of $k$ that we can choose as desired.
We construct the multiband modulated DPSS dictionary $\bPsi$ by concatenating all of the $\bPsi_i$:
\begin{equation} \label{eq:Qconcat}
\bPsi = \left[ \bPsi_0 ~ \bPsi_1 ~ \cdots ~ \bPsi_{J-1} \right].
\end{equation}
The matrix $\bPsi$ will have size $N \times k J$ (note that if $k = 2N\slepw$ and $\tsamp = \frac{1}{\bnyq}$, $\bPsi$ will be square).

\subsubsection{Approximation quality for sampled multiband signals}

In a probabilistic sense, most multiband signals, when sampled and time-limited, will be well-approximated by a small number of vectors from the multiband modulated DPSS dictionary. In particular, {\em there exists a block-sparse approximation} for such sample vectors using only the modulated DPSS vectors corresponding to the active signal bands.

\begin{thm}
Let $\cI \subseteq \{0,1,\dots,J-1\}$ with $|\cI| = K$. Suppose that for each $i \in \cI$, $x_i(t)$ is a continuous-time, zero-mean, wide sense stationary random process with power spectrum
\begin{equation*}
P_{x_i}(F) = \left\{ \begin{array}{ll} \frac{1}{K \bband} ,&  F \in \Delta_i, \\
0, & \mathrm{otherwise}, \end{array} \right.
\end{equation*}
and furthermore suppose the $\{x_i\}_{i \in \cI}$ are independent and jointly wide sense stationary. Let $x(t) = \sum_{i \in \cI} x_i(t)$, and let $\bx = [x(0) ~ x(\tsamp) ~ \cdots ~ x((N-1)\tsamp)]^T \in \complex^N$ denote a finite vector of samples acquired from $x(t)$ with a sampling interval of $\tsamp \le \frac{1}{\bnyq}$.
Let $\bPsi_{\cI}$ denote the concatenation of the $\bPsi_i$ over all $i \in \cI$, where the $\bPsi_i$ are as defined in~\eqref{eq:qidef}.
Then
\begin{equation}
\expval{\| \bx - \bP_{\bPsi_{\cI}} \bx \|_2^2} \le \frac{K}{2\slepw} \sum_{\ell = k}^{N-1} \lambda_{N,\slepw}^{(\ell)},
\label{eq:mbmse}
\end{equation}
whereas $\expval{\| \bx \|_2^2} = N$.
\label{thm:rprocmb}
\end{thm}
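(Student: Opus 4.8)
The plan is to reduce the multiband bound to $K$ separate applications of the single-band result, Theorem~\ref{thm:rproc}. First I would decompose the sample vector band by band: writing $\bx_i = [x_i(0) ~ x_i(\tsamp) ~ \cdots ~ x_i((N-1)\tsamp)]^T$ for the samples of the $i$th process, linearity of sampling gives $\bx = \sum_{i \in \cI} \bx_i$. The operator $\bP_{\bPsi_{\cI}}$ is the orthogonal projection onto $\cR(\bPsi_{\cI})$, and since each $\bPsi_i$ is a submatrix of $\bPsi_{\cI}$ we have the containment $\cR(\bPsi_i) \subseteq \cR(\bPsi_{\cI})$ for every $i \in \cI$. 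This containment is the structural fact I would exploit to avoid analyzing the joint projection directly.

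Next I would bound the residual by a sum of per-band residuals. Because $\bI - \bP_{\bPsi_{\cI}}$ is linear, the triangle inequality yields
\begin{equation*}
\| \bx - \bP_{\bPsi_{\cI}} \bx \|_2 = \left\| (\bI - \bP_{\bPsi_{\cI}}) \sum_{i \in \cI} \bx_i \right\|_2 \le \sum_{i \in \cI} \| (\bI - \bP_{\bPsi_{\cI}}) \bx_i \|_2.
\end{equation*}
Since $\bP_{\bPsi_{\cI}} \bx_i$ is the best approximation to $\bx_i$ from $\cR(\bPsi_{\cI})$ and $\cR(\bPsi_i) \subseteq \cR(\bPsi_{\cI})$, projecting onto the larger subspace can only shrink the residual, so $\| (\bI - \bP_{\bPsi_{\cI}}) \bx_i \|_2 \le \| (\bI - \bP_{\bPsi_i}) \bx_i \|_2$, where $\bP_{\bPsi_i}$ denotes the projection onto $\cR(\bPsi_i)$. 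Squaring and applying the Cauchy--Schwarz inequality to the sum of $K$ terms then gives, for each realization, the pointwise bound $\| \bx - \bP_{\bPsi_{\cI}} \bx \|_2^2 \le K \sum_{i \in \cI} \| (\bI - \bP_{\bPsi_i}) \bx_i \|_2^2$, which is the source of the factor $K$ in the statement.

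It remains to control each expected per-band residual. For fixed $i$, the process $x_i$ is exactly a bandpass process of the type treated in Theorem~\ref{thm:rproc}, centered at $F_i$ with width $\bband$, except that its power spectral density is $\frac{1}{K\bband}$ rather than $\frac{1}{\bband}$. Since the covariance of $\bx_i$ scales linearly with the height of the power spectrum, Theorem~\ref{thm:rproc} applies after rescaling by $\frac{1}{K}$: the optimal $k$-dimensional subspace for $\bx_i$ is $\cR(\bPsi_i) = \cR([\bE_{f_i} \bS_{N,\slepw}]_k)$, and
\begin{equation*}
\expval{ \| (\bI - \bP_{\bPsi_i}) \bx_i \|_2^2 } = \frac{1}{K} \cdot \frac{1}{2\slepw} \sum_{\ell = k}^{N-1} \lambda_{N,\slepw}^{(\ell)}.
\end{equation*}
Taking expectations of the pointwise bound, each of the $K$ summands equals this same quantity, so their sum is $\frac{1}{2\slepw} \sum_{\ell=k}^{N-1} \lambda_{N,\slepw}^{(\ell)}$, and the outer factor $K$ from Cauchy--Schwarz produces exactly the claimed bound. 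Separately, independence and the zero-mean assumption annihilate the cross terms in $\expval{\| \bx \|_2^2} = \sum_{i \in \cI} \expval{\| \bx_i \|_2^2} = K \cdot \frac{N}{K} = N$.

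The only genuine obstacle is that the modulated DPSS bases for distinct bands are not mutually orthogonal, so $\bP_{\bPsi_{\cI}}$ is not the sum of the per-band projections and cannot be handled by a naive superposition argument; I sidestep this entirely through the suboptimality of orthogonal projection onto the larger subspace, so $\bP_{\bPsi_{\cI}}$ itself never needs to be computed. I would also remark that independence is used here only to evaluate $\expval{\| \bx \|_2^2}$; exploiting it to cancel the cross terms in the residual as well would remove the outer factor $K$ and give the sharper bound $\frac{1}{2\slepw} \sum_{\ell=k}^{N-1} \lambda_{N,\slepw}^{(\ell)}$, but the stated factor-$K$ form follows from the simpler argument and is already adequate for the subsequent measurement-count estimates.
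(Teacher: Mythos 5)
Your proposal is correct and follows essentially the same route as the paper's proof: decompose $\bx = \sum_{i \in \cI} \bx_i$, apply the triangle inequality, replace the joint projection by the per-band projections via the containment $\cR(\bPsi_i) \subseteq \cR(\bPsi_{\cI})$, pick up the factor $K$ from Cauchy--Schwarz, and invoke Theorem~\ref{thm:rproc} (rescaled by $1/K$ to account for the power spectrum height $\frac{1}{K\bband}$) on each band, with the same cross-term cancellation giving $\expval{\|\bx\|_2^2} = N$. Your closing observation---that independence and zero mean also annihilate the cross terms in the residual, so the outer factor of $K$ could be removed entirely---is valid and sharpens the stated bound; the paper itself concedes this factor ``may be possible to improve upon.''
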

\begin{proof}See Appendix~\ref{pf:rprocmb}.\end{proof}

Theorem~\ref{thm:rprocmb} confirms the existence of a high-quality block-sparse approximation for most sampled multiband signals; in particular, the signal approximation vector specified in~\eqref{eq:mbmse} can be written as $\bP_{\bPsi_{\cI}} \bx = \bPsi \balpha$ for a $K$-block-sparse coefficient vector $\balpha$ given by $\balpha|_{\cI} = \bPsi_{\cI}^\dagger \bx$ and $\balpha|_{\cI^c} = \bzero$.
As in our previous discussions for bandpass signals, we can ensure the MSE in~\eqref{eq:mbmse} is small compared to $\expval{\| \bx \|_2^2}$ by choosing $k \approx 2N\slepw$. Compared to previous analysis, however, the MSE appearing in Theorem~\ref{thm:rprocmb} is larger by a factor of $K$ (though this quantity may still be quite small). Although it may be possible to improve upon this figure, the reader should keep in mind that the multiband modulated DPSS dictionary is not necessarily the {\em optimal} basis for representing sampled multiband signals with a given sparsity pattern $\cI$; it is merely a generic (and easily computable) dictionary that provides highly accurate approximations for most multiband signals having any possible sparsity pattern.

Although we omit the details here, one could also consider generalizing Theorem~\ref{thm:sampbandpass} to multiband signals that have small spectral leakage when windowed in time.

\section{Recovering Sampled Multiband Signals from Random Measurements}
\label{sec:recovery}

In this section, we proceed to develop theoretical guarantees for signal recovery using the multiband modulated DPSS dictionary $\bPsi$ as defined in~\eqref{eq:Qconcat}. Throughout our theoretical discussion and the subsequent experiments, we pay special attention to the role played the number $k$ of DPSS vectors per band. We begin in Section~\ref{sec:dpssrip} with a collection of RIP guarantees, and we extend these to signal recovery guarantees in Section~\ref{sec:recguar}. Throughout this section and the remainder of the paper, we assume that $\tsamp \le \frac{1}{\bnyq}$.

\subsection{Embedding guarantees}
\label{sec:dpssrip}

We can actually immediately establish $\bPsi$-RIP and $\bPsi$-block-RIP guarantees for any value of $k$. The theorem below follows as a direct consequence of Corollaries~\ref{cor:PsiRIP} and~\ref{cor:PsiblockRIP}.
\begin{thm}
Let $k \in \{1,2,\dots,N\}$, set $D = kJ = \frac{k\bnyq}{\bband}$, and let $\bPsi$ be the $N \times D$ multiband modulated DPSS dictionary defined in~\eqref{eq:Qconcat}. The following statements hold:
\begin{enumerate}
\item Fix $\delta, \beta \in (0,1)$, and let $\bA$ be an $M \times N$ subgaussian matrix with $M$ satisfying~\eqref{eq:Psi-RIP-M}. Then with probability exceeding $1 - \beta$, $\bA$ satisfies the $\bPsi$-RIP of order $S$ with constant $\delta$.
\item Fix $\delta, \beta \in (0,1)$, and let $\bA$ be an $M \times N$ subgaussian matrix with $M$ satisfying~\eqref{eq:Psi-blockRIP-M}. Then with probability exceeding $1 - \beta$, $\bA$ satisfies the $\bPsi$-block-RIP of order $K$ with constant $\delta$.
\end{enumerate}
\label{thm:DPSSPsiRIPandPsiblockRIP}
\end{thm}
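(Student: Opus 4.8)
The plan is to recognize that this theorem is an immediate application of Corollaries~\ref{cor:PsiRIP} and~\ref{cor:PsiblockRIP}, since the multiband modulated DPSS dictionary $\bPsi$ of~\eqref{eq:Qconcat} is simply a concrete instance of the ``arbitrary $N \times D$ matrix'' to which those corollaries already apply. First I would verify that the hypotheses of the corollaries are met. The dictionary $\bPsi$ defined in~\eqref{eq:Qconcat} is an $N \times D$ matrix with $D = kJ$, and no structural assumption on $\bPsi$ is required by the corollaries, so it qualifies directly. The sensing matrix $\bA$ is assumed subgaussian; as reviewed in Section~\ref{ssec:matdesign}, any such matrix (with the stated centering and normalization) satisfies the concentration inequality~\eqref{star}, which is precisely the distributional hypothesis demanded by both corollaries.

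For the first statement I would then observe that the measurement bound~\eqref{eq:Psi-RIP-M} assumed here is verbatim the bound appearing in Corollary~\ref{cor:PsiRIP}; applying that corollary with this $\bPsi$ therefore yields the $\bPsi$-RIP of order $S$ with constant $\delta$ with probability exceeding $1-\beta$. For the second statement the only additional point to confirm is that our dictionary carries the block structure underlying the block-sparse framework: by construction $\bPsi = [\bPsi_0 ~ \cdots ~ \bPsi_{J-1}]$ partitions into $J$ blocks, each of size $N \times k = N \times \frac{D}{J}$, matching exactly the decomposition assumed in Corollary~\ref{cor:PsiblockRIP}. Since the bound~\eqref{eq:Psi-blockRIP-M} is again identical to the one in that corollary, the $\bPsi$-block-RIP of order $K$ with constant $\delta$ follows with probability exceeding $1-\beta$.

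There is essentially no obstacle here, because the substantive probabilistic content---the subspace embedding of Lemma~\ref{lem:subspace} together with the union bound over coordinate-aligned subspaces (in the ordinary case) or over unions of block-subspaces (in the block case)---has already been carried out in establishing the two corollaries. The only thing one must take care to check is the bookkeeping that identifies the block partition of $\bPsi$ with the generic block structure, together with the consistent substitution of the parameters $D = kJ$ and block size $\frac{D}{J} = k$. I would write the proof as a one- or two-sentence invocation of the corollaries rather than reproducing any of their internal arguments.
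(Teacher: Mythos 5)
Your proof is correct and matches the paper's exactly: the paper establishes this theorem with the single remark that it ``follows as a direct consequence of Corollaries~\ref{cor:PsiRIP} and~\ref{cor:PsiblockRIP},'' which is precisely your invocation of those corollaries with $D = kJ$ and block size $\frac{D}{J} = k$. Your additional bookkeeping (verifying the subgaussian concentration hypothesis~\eqref{star} and the block partition of $\bPsi$) is sound and simply makes explicit what the paper leaves implicit.
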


In order to establish RIP and block-RIP bounds for $\bA \bPsi$, however, we must restrict our attention to values of $k$ that are not too large (this ensures the matrix $\bPsi$ is not overcomplete). To be specific, we note that for any $k$, the columns of $\bPsi$ have unit norm. In addition, when $k$ is suitably small, the columns of $\bPsi$ are approximately orthogonal.

\begin{lemma}
Let $k = 2N\slepw(1-\epsilon)$, set $D = kJ = N(1-\epsilon)\bnyq\tsamp < N$, and let $\bPsi$ be the $N \times D$ multiband modulated DPSS dictionary defined in~\eqref{eq:Qconcat}. Then for any pair of distinct columns $\bq_1$, $\bq_2$ in $\bPsi$,
\begin{equation}
| \langle \bq_1, \bq_2 \rangle | \le 3 C_1^{1/2} e^{-\frac{C_2 N}{2}}
\label{eq:coherencemain}
\end{equation}
if $N \ge N_0$, where $C_1$, $C_2$, and $N_0$ are as specified in Lemma~\ref{lem:smallk}.
\label{lem:coherence}
\end{lemma}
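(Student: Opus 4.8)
The plan is to split a generic pair of distinct columns $(\bq_1,\bq_2)$ according to whether the two columns come from the same modulated block $\bPsi_i$ or from two different blocks. Writing a generic column as $\bq = \bE_{f_i}\bs_{N,\slepw}^{(\ell)}$ with $\ell \in \{0,1,\dots,k-1\}$, two columns from the same block have the form $\bE_{f_i}\bs_{N,\slepw}^{(\ell)}$ and $\bE_{f_i}\bs_{N,\slepw}^{(\ell')}$ with $\ell \neq \ell'$; since $\bE_{f_i}$ is unitary and the DPSS vectors are orthonormal, $\langle \bE_{f_i}\bs_{N,\slepw}^{(\ell)}, \bE_{f_i}\bs_{N,\slepw}^{(\ell')}\rangle = \langle \bs_{N,\slepw}^{(\ell)}, \bs_{N,\slepw}^{(\ell')}\rangle = 0$, so same-block pairs contribute no coherence at all. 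Thus I may assume $\bq_1 = \bE_{f_i}\bs_{N,\slepw}^{(\ell)}$ and $\bq_2 = \bE_{f_{i'}}\bs_{N,\slepw}^{(\ell')}$ with $i \neq i'$ and $\ell,\ell' \le k-1 < 2N\slepw(1-\epsilon)$.

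For the cross-band case I would introduce the windowed bandpass prolate matrix $\bB_{f_i,\slepw} := \bE_{f_i}\bB_{N,\slepw}\bE_{f_i}^H$. This matrix is Hermitian (since $\bB_{N,\slepw}$ is real symmetric and $\bE_{f_i}$ is unitary), and from the eigenrelation $\bB_{N,\slepw}\bs_{N,\slepw}^{(\ell)} = \lambda_{N,\slepw}^{(\ell)}\bs_{N,\slepw}^{(\ell)}$ in~\eqref{eq:Seigdecomp} it has $\bq_1$ as an eigenvector with eigenvalue $\lambda_{N,\slepw}^{(\ell)}$. Using the eigenrelation to replace $\bq_1$, the self-adjointness of $\bB_{f_i,\slepw}$, and Cauchy--Schwarz together with $\|\bq_1\|_2 = 1$, I obtain
$$|\langle \bq_1,\bq_2\rangle| = \frac{1}{\lambda_{N,\slepw}^{(\ell)}}\left|\langle \bB_{f_i,\slepw}\bq_1, \bq_2\rangle\right| = \frac{1}{\lambda_{N,\slepw}^{(\ell)}}\left|\langle \bq_1, \bB_{f_i,\slepw}\bq_2\rangle\right| \le \frac{1}{\lambda_{N,\slepw}^{(\ell)}}\,\|\bB_{f_i,\slepw}\bq_2\|_2.$$
It then remains to show that $\bq_2$, which lives near band $i'$, retains little energy after being bandpass-filtered onto band $i$.

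The crux, and the step I expect to require the most care, is to prove $\|\bB_{f_i,\slepw}\bq_2\|_2 \le \sqrt{1 - \lambda_{N,\slepw}^{(\ell')}}$. Writing $\Delta f = f_{i'} - f_i$, the band-center geometry gives $\Delta f = (F_{i'}-F_i)\tsamp = (i'-i)\bband\tsamp = 2\slepw(i'-i)$, a nonzero integer multiple of $2\slepw$. Unfolding definitions, $\bB_{f_i,\slepw}\bq_2 = \bE_{f_i}\bB_{N,\slepw}\bE_{\Delta f}\bs_{N,\slepw}^{(\ell')}$, so its norm equals $\|\bB_{N,\slepw}\bE_{\Delta f}\bs_{N,\slepw}^{(\ell')}\|_2$. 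Since applying $\bB_{N,\slepw}$ to a window vector amounts to zero-padding, bandlimiting to $[-\slepw,\slepw]$ via $\cB_\slepw$, and re-restricting to the window, the window-restricted energy is no larger than the full energy $\|\cB_\slepw \tilde\bv\|_2^2$ of the bandlimited signal, where $\tilde\bv$ is the time-limited, modulated DPSS $e^{j2\pi\Delta f n}\cT_N(s_{N,\slepw}^{(\ell')})[n]$. In the DTFT domain this full energy is $\int_{-\slepw}^{\slepw}|\widehat{\cT_N s_{N,\slepw}^{(\ell')}}(f-\Delta f)|^2\,df$, an integral of the DPSS spectrum over an interval of width $2\slepw$ centered (on the frequency circle) at $-\Delta f$. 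Because $\Delta f$ is at circular distance at least $2\slepw$ from $0$, this interval is disjoint (up to a set of measure zero) from $[-\slepw,\slepw]$, so the integral is bounded by the out-of-band energy of the time-limited DPSS, which by the concentration identity~\eqref{eq:apbl} together with the normalization~\eqref{eq:dpssnorm} equals $1 - \lambda_{N,\slepw}^{(\ell')}$. This yields the claimed bound.

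Finally, I would combine the pieces. Since $\ell,\ell' \le k-1 < 2N\slepw(1-\epsilon)$, Lemma~\ref{lem:smallk} gives $\lambda_{N,\slepw}^{(\ell)} \ge 1 - C_1 e^{-C_2 N}$ and $1 - \lambda_{N,\slepw}^{(\ell')} \le C_1 e^{-C_2 N}$ for $N \ge N_0$, so
$$|\langle \bq_1,\bq_2\rangle| \le \frac{\sqrt{1-\lambda_{N,\slepw}^{(\ell')}}}{\lambda_{N,\slepw}^{(\ell)}} \le \frac{C_1^{1/2}e^{-C_2 N/2}}{1 - C_1 e^{-C_2 N}}.$$
Enlarging $N_0$ if necessary so that $C_1 e^{-C_2 N} \le \tfrac{2}{3}$ for $N \ge N_0$ makes the denominator at least $\tfrac{1}{3}$, which gives the stated bound $|\langle \bq_1, \bq_2\rangle| \le 3C_1^{1/2}e^{-C_2 N/2}$. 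The only genuinely delicate point is the frequency-domain leakage estimate of the third paragraph; everything else is bookkeeping with orthonormality, the eigenstructure of $\bB_{N,\slepw}$, and the eigenvalue concentration already established.
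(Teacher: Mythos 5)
Your proof is correct, but it takes a genuinely different route from the paper's. The paper handles the cross-band case by decomposing both modulated, time-limited DPSS's under the three mutually orthogonal projections $\cB_{f_{i_1},\slepw}$, $\cB_{f_{i_2},\slepw}$, and $I - \cB_{f_{i_1},\slepw} - \cB_{f_{i_2},\slepw}$, which splits $\langle \bq_1, \bq_2 \rangle$ into three cross terms; each term is bounded via Cauchy--Schwarz and the concentration identities \eqref{eq:dpssnorm} and \eqref{eq:apbl}, yielding $\sqrt{\lambda_{N,\slepw}^{(\ell)}}\sqrt{1-\lambda_{N,\slepw}^{(\ell')}} + \sqrt{1-\lambda_{N,\slepw}^{(\ell)}}\sqrt{\lambda_{N,\slepw}^{(\ell')}} + \sqrt{1-\lambda_{N,\slepw}^{(\ell)}}\sqrt{1-\lambda_{N,\slepw}^{(\ell')}}$, and the factor $3$ arises from bounding each of the three summands by $C_1^{1/2}e^{-C_2 N/2}$. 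You instead exploit the eigenstructure of the windowed prolate matrix $\bE_{f_i}\bB_{N,\slepw}\bE_{f_i}^H$: the eigenrelation, Hermitian symmetry, and Cauchy--Schwarz collapse everything into the single leakage estimate $\|\bB_{f_i,\slepw}\bq_2\|_2 \le \sqrt{1-\lambda_{N,\slepw}^{(\ell')}}$, which you then prove by the same DTFT band-disjointness fact (circular separation of band centers by at least $2\slepw$) that justifies the paper's third projection. Your route is arguably tidier---one leakage bound instead of three cross terms, and it stays entirely within finite-dimensional linear algebra until the last step---at the price of a $\lambda_{N,\slepw}^{(\ell)}$ in the denominator, which the paper's additive decomposition avoids.

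That denominator is the one loose end: you propose enlarging $N_0$ so that $C_1 e^{-C_2 N} \le \frac{2}{3}$, but the lemma asserts the bound for $N \ge N_0$ with $N_0$ exactly as in Lemma~\ref{lem:smallk}. This is easily repaired without touching $N_0$: since $\|\bq_1\|_2 = \|\bq_2\|_2 = 1$, Cauchy--Schwarz gives $|\langle \bq_1, \bq_2 \rangle| \le 1$ unconditionally, so the claimed bound holds trivially whenever $3 C_1^{1/2} e^{-C_2 N/2} \ge 1$, i.e., whenever $C_1 e^{-C_2 N} \ge \frac{1}{9}$; and when $C_1 e^{-C_2 N} < \frac{1}{9}$ your denominator exceeds $\frac{8}{9}$, so your ratio is at most $\frac{9}{8} C_1^{1/2} e^{-C_2 N/2} < 3 C_1^{1/2} e^{-C_2 N/2}$. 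With that patch, your argument proves the lemma exactly as stated.
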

\begin{proof}See Appendix~\ref{pf:coherence}.\end{proof}

Using this fact, we can ensure that whenever $k = 2N\slepw(1-\epsilon)$, $\bPsi$ must act as an approximate isometry between any coefficient vector $\balpha \in \complex^{D}$ and the corresponding signal vector $\bx = \bPsi \balpha \in \complex^N$.

\begin{lemma}
Let $k = 2N\slepw(1-\epsilon)$, set $D = kJ = N(1-\epsilon)\bnyq\tsamp < N$, and let $\bPsi$ be the $N \times D$ multiband modulated DPSS dictionary defined in~\eqref{eq:Qconcat}. Then
\begin{equation}
\sqrt{1-3 N C_1^{1/2} e^{-\frac{C_2 N}{2}}} \le \frac{\norm{\bPsi \balpha}}{\norm{\balpha}} \le \sqrt{1+3 N C_1^{1/2} e^{-\frac{C_2 N}{2}}}
\label{eq:PsiIsom}
\end{equation}
for all $\balpha \in \complex^{D}$.
\label{lem:PsiIsom}
\end{lemma}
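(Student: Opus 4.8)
The plan is to reduce the claim to a statement about the eigenvalues of the Gram matrix $\boldsymbol{G} := \bPsi^H \bPsi \in \complex^{D \times D}$. Since $\norm{\bPsi \balpha}^2 = \balpha^H \boldsymbol{G} \balpha$, the ratio $\norm{\bPsi\balpha}^2 / \norm{\balpha}^2$ equals the Rayleigh quotient $\balpha^H \boldsymbol{G} \balpha / \balpha^H \balpha$ of $\boldsymbol{G}$, which is pinned between the smallest and largest eigenvalues of $\boldsymbol{G}$. It therefore suffices to show that every eigenvalue of $\boldsymbol{G}$ lies in the interval $[1 - 3NC_1^{1/2}e^{-C_2 N/2}, \, 1 + 3NC_1^{1/2}e^{-C_2 N/2}]$; squaring~\eqref{eq:PsiIsom} shows this is exactly the required statement.

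First I would record the structure of $\boldsymbol{G}$. Because every column of $\bPsi$ has unit norm (as noted immediately before the lemma), the diagonal entries of $\boldsymbol{G}$ all equal $1$, so $\boldsymbol{G} - \bI$ has a zero diagonal. By Lemma~\ref{lem:coherence}, each off-diagonal entry $\iprod{\bq_i}{\bq_j}$ of $\boldsymbol{G}$ satisfies $\absval{\iprod{\bq_i}{\bq_j}} \le \mu$, where I abbreviate $\mu := 3 C_1^{1/2} e^{-C_2 N/2}$. Writing $\boldsymbol{G} = \bI + (\boldsymbol{G}-\bI)$ gives $\balpha^H \boldsymbol{G} \balpha = \norm{\balpha}^2 + \balpha^H (\boldsymbol{G}-\bI)\balpha$, so the entire problem comes down to bounding $\absval{\balpha^H(\boldsymbol{G}-\bI)\balpha}$ by $\mu(D-1)\norm{\balpha}^2$.

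Next I would carry out that estimate directly. By the triangle inequality and the coherence bound,
\[
\absval{\balpha^H(\boldsymbol{G}-\bI)\balpha} \le \sum_{i \neq j} \absval{\balpha[i]}\,\absval{\iprod{\bq_i}{\bq_j}}\,\absval{\balpha[j]} \le \mu \left( \norm[1]{\balpha}^2 - \norm{\balpha}^2 \right),
\]
and since $\norm[1]{\balpha}^2 \le D \norm{\balpha}^2$ by Cauchy--Schwarz, this is at most $\mu(D-1)\norm{\balpha}^2$. (Equivalently, one could bound the spectral norm of $\boldsymbol{G}-\bI$ by $(D-1)\mu$ via Gershgorin's circle theorem, since each row has $D-1$ off-diagonal entries of magnitude at most $\mu$.) Finally I would invoke the hypothesis $D = N(1-\epsilon)\bnyq\tsamp < N$ (which holds because $\bnyq\tsamp \le 1$ and $\epsilon \in (0,1)$) to replace $D-1$ by $N$, so that $\mu(D-1) < N\mu = 3NC_1^{1/2}e^{-C_2 N/2}$. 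Combining these gives $(1 - 3NC_1^{1/2}e^{-C_2N/2})\norm{\balpha}^2 \le \norm{\bPsi\balpha}^2 \le (1 + 3NC_1^{1/2}e^{-C_2N/2})\norm{\balpha}^2$, and taking square roots produces~\eqref{eq:PsiIsom}.

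The argument is short precisely because the real work has already been done upstream: given the column-coherence bound of Lemma~\ref{lem:coherence}, all that remains is a routine Gershgorin/Cauchy--Schwarz perturbation estimate on the Gram matrix, so I do not expect a genuine obstacle here. The one point that actually uses the lemma's hypotheses (rather than holding for arbitrary $k$) is the inequality $D < N$, which is what lets me absorb the factor $D-1$ into the single clean factor $N$ appearing in the stated bound; for larger $k$ the dictionary becomes overcomplete and both this step and the approximate orthogonality of the columns would fail.
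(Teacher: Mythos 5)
Your proposal is correct and follows essentially the same route as the paper: both reduce the claim to the eigenvalues of the Gram matrix $\bPsi^H \bPsi$, use the unit-norm columns for the diagonal and Lemma~\ref{lem:coherence} for the off-diagonal entries, apply a Ger\v{s}gorin-type bound (your direct Cauchy--Schwarz estimate on the quadratic form is equivalent, as you note), and finally invoke $D < N$ to obtain the stated constant.
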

\begin{proof}
The sharpest possible lower and upper bounds in~\eqref{eq:PsiIsom} are given by the smallest and largest singular values of $\bPsi$, respectively. Using standard results from linear algebra, $\sigma_{\mathrm{min}}(\bPsi) =\sqrt{\lambda_{\mathrm{min}}(\bPsi^H \bPsi)}$ and $\sigma_{\mathrm{max}}(\bPsi) = \sqrt{\lambda_{\mathrm{max}}(\bPsi^H \bPsi)}$. The Gram matrix $\bPsi^H \bPsi$ has size $D \times D$. All entries on the main diagonal of $\bPsi^H \bPsi$ are equal to $1$, and all entries off of the main diagonal can be bounded using~\eqref{eq:coherencemain}. From the Ger\v{s}gorin circle theorem~\cite{Gerv_Uber}, it follows that all eigenvalues of $\bPsi^H \bPsi$ must fall in the interval $[1-3 D C_1^{1/2} e^{-\frac{C_2 N}{2}},1+3 D  C_1^{1/2} e^{-\frac{C_2 N}{2}}]$, which for simplicity we note is contained within the interval $[1-3 N C_1^{1/2} e^{-\frac{C_2 N}{2}},1+3 N C_1^{1/2} e^{-\frac{C_2 N}{2}}]$ since by assumption $D < N$. 
\end{proof}

Lemma~\ref{lem:PsiIsom} is the key fact we need to establish RIP and block-RIP bounds for $\bA \bPsi$.

\begin{thm} Let $k = 2N\slepw(1-\epsilon)$, set $D = kJ = N(1-\epsilon)\bnyq\tsamp < N$, and let $\bPsi$ be the $N \times D$ multiband modulated DPSS dictionary defined in~\eqref{eq:Qconcat}. The following statements hold:
\begin{enumerate}
\item If $\bA$ satisfies the $\bPsi$-RIP of order $S$ with constant $\delta$, then $\bA \bPsi$ satisfies the RIP of order $S$ with constant $\delta + 6 N C_1^{1/2} e^{-\frac{C_2 N}{2}}$.
\item If $\bA$ satisfies the $\bPsi$-block-RIP of order $K$ with constant $\delta$, then $\bA \bPsi$ satisfies the block-RIP of order $K$ with constant $\delta + 6 N C_1^{1/2} e^{-\frac{C_2 N}{2}}$.
\end{enumerate}
\label{thm:DPSSRIPandDPSSblockRIP}
\end{thm}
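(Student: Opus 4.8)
The plan is to obtain the RIP (respectively block-RIP) constant for $\bA\bPsi$ by composing the two isometry statements already in hand: the hypothesized $\bPsi$-RIP (respectively $\bPsi$-block-RIP) of $\bA$, which controls $\norm{\bA\bPsi\balpha}$ relative to $\norm{\bPsi\balpha}$, and Lemma~\ref{lem:PsiIsom}, which controls $\norm{\bPsi\balpha}$ relative to $\norm{\balpha}$. Chaining these two factorizations of the ratio $\norm{\bA\bPsi\balpha}/\norm{\balpha}$ is exactly what converts a $\bPsi$-RIP bound (norm preservation measured against the signal $\bPsi\balpha$) into an ordinary RIP bound (norm preservation measured against the coefficient vector $\balpha$).

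First I would abbreviate $\gamma := 3NC_1^{1/2}e^{-C_2 N/2}$ for the perturbation quantity appearing in Lemma~\ref{lem:PsiIsom}, so that $\sqrt{1-\gamma}\,\norm{\balpha}\le\norm{\bPsi\balpha}\le\sqrt{1+\gamma}\,\norm{\balpha}$ for every $\balpha\in\complex^D$. Fixing any $\balpha$ with $\norm[0]{\balpha}\le S$, the $\bPsi$-RIP of order $S$ gives $\sqrt{1-\delta}\,\norm{\bPsi\balpha}\le\norm{\bA\bPsi\balpha}\le\sqrt{1+\delta}\,\norm{\bPsi\balpha}$. Substituting the two sides of Lemma~\ref{lem:PsiIsom} then yields
$$
\sqrt{(1-\delta)(1-\gamma)}\,\norm{\balpha}\le\norm{\bA\bPsi\balpha}\le\sqrt{(1+\delta)(1+\gamma)}\,\norm{\balpha}.
$$

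It remains only to absorb the cross terms into a single constant. Expanding the products, $(1+\delta)(1+\gamma)=1+\delta+\gamma+\delta\gamma\le 1+(\delta+2\gamma)$ since $\delta\le 1$, and $(1-\delta)(1-\gamma)=1-\delta-\gamma+\delta\gamma\ge 1-\delta-\gamma\ge 1-(\delta+2\gamma)$ since $\delta\gamma\ge 0$. Hence the ratio $\norm{\bA\bPsi\balpha}/\norm{\balpha}$ lies in $[\sqrt{1-(\delta+2\gamma)},\sqrt{1+(\delta+2\gamma)}]$, which is precisely the RIP of order $S$ with constant $\delta+2\gamma=\delta+6NC_1^{1/2}e^{-C_2 N/2}$, establishing statement~1. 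Statement~2 follows verbatim: Lemma~\ref{lem:PsiIsom} holds for \emph{all} $\balpha\in\complex^D$, in particular all $K$-block-sparse ones, so restricting the identical chain of inequalities to $K$-block-sparse $\balpha$ and invoking the $\bPsi$-block-RIP in place of the $\bPsi$-RIP produces the block-RIP of order $K$ with the same constant.

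I do not anticipate a genuine obstacle here, since the real content---the near-orthogonality of the columns of $\bPsi$ and the resulting two-sided singular-value bound---is already packaged in Lemma~\ref{lem:PsiIsom} (itself a consequence of the coherence bound in Lemma~\ref{lem:coherence} via the Ger\v{s}gorin circle theorem). The only point requiring minor care is the elementary bookkeeping that turns $\sqrt{(1\pm\delta)(1\pm\gamma)}$ into $\sqrt{1\pm(\delta+2\gamma)}$; one checks that $\delta\le 1$ (guaranteed by $\delta\in(0,1)$) and $\delta\gamma\ge 0$ are all that is needed, and one should note in passing that $\delta+2\gamma$ indeed remains a legitimate constant in $(0,1)$ once $N$ is large enough that the exponentially small $\gamma$ is negligible.
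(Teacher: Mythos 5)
Your proposal is correct and follows essentially the same route as the paper's own proof: chaining the hypothesized $\bPsi$-RIP (or $\bPsi$-block-RIP) with the two-sided isometry bound of Lemma~\ref{lem:PsiIsom}, then expanding the products $(1\pm\delta)(1\pm 3NC_1^{1/2}e^{-C_2N/2})$ and absorbing the cross term via $\delta<1$ to arrive at the constant $\delta+6NC_1^{1/2}e^{-C_2N/2}$. The only cosmetic difference is your abbreviation $\gamma$ and the explicit treatment of the lower bound and of the block case, which the paper dispatches with ``the lower bound follows similarly'' and by stating the argument once for any $\balpha$ satisfying~\eqref{Psi-RIP}.
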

\begin{proof}
For any $\balpha \in \complex^D$ such that~\eqref{Psi-RIP} holds, we can use~\eqref{eq:PsiIsom} to conclude that
$$
\sqrt{1-\delta} \cdot \sqrt{1-3 N C_1^{1/2} e^{-\frac{C_2 N}{2}}} \le \frac{\norm{\bA \bPsi \balpha}}{\norm{\balpha}} \le \sqrt{1+\delta} \cdot \sqrt{1+3 N C_1^{1/2} e^{-\frac{C_2 N}{2}}}.
$$
For the upper bound, note that
\begin{eqnarray*}
\left( 1+\delta \right) \left( 1+3 N C_1^{1/2} e^{-\frac{C_2 N}{2}} \right) & = & 1 + \delta + 3 N C_1^{1/2} e^{-\frac{C_2 N}{2}} + \delta 3 N C_1^{1/2} e^{-\frac{C_2 N}{2}}\\
&\le& 1 + \delta + 6 N C_1^{1/2} e^{-\frac{C_2 N}{2}},
\end{eqnarray*}
where the second line follows from the assumption that $\delta < 1$. The lower bound follows similarly.
\end{proof}

\subsection{Recovery guarantees}
\label{sec:recguar}

In this section we prove that with a sufficient number of measurements and an appropriately constructed multiband modulated DPSS dictionary, most sample vectors of multiband signals can be accurately reconstructed. Our proof of this fact relies on two principles.

\subsubsection{Recovery of exactly block-sparse signals}

The first of these principles is that, as a consequence of our RIP results in Section~\ref{sec:dpssrip}, signal vectors having representations that are exactly $K$-block-sparse in the dictionary $\bPsi$ can be accurately reconstructed from compressive samples. The following two results follow from combining Theorems~\ref{thm:blockiht}, \ref{thm:blockcosamp}, \ref{thm:DPSSPsiRIPandPsiblockRIP}, and~\ref{thm:DPSSRIPandDPSSblockRIP}.

\begin{thm} \label{thm:blockihtdpss}
Let $k \in \{1,2,\dots,N\}$, set $D = kJ = \frac{k\bnyq}{\bband}$, and let $\bPsi$ be the $N \times D$ multiband modulated DPSS dictionary defined in~\eqref{eq:Qconcat}. Fix $\delta \in (0,0.2)$ and $\beta \in (0,1)$, and let $\bA$ be an $M \times N$ subgaussian matrix with
\begin{equation}
M \ge  \frac{ 4 K \left(\frac{D}{J} \log (42/ \delta) + \log (e J/2K)\right) + \log(4/\beta) }{c_1(\delta/\sqrt{2})}.
\label{eq:m4k}
\end{equation}
Then with probability exceeding $1 - \beta$, the following statement holds: For any $\bx' \in \complex^N$ that has a $K$-block-sparse representation in the dictionary $\bPsi$ (i.e., that can be written as $\bx' = \bPsi \balpha'$ for some $K$-block-sparse vector $\balpha' \in \complex^D$), if we use block-based IHT~\eqref{eq:IHTblock} with $\frac{1}{\mu} \in [1+\delta,1.5(1-\delta))$ to recover an estimate $\widehat{\bx}$ of $\bx'$ from the observations $\by = \bA \bx' + \be$, the resulting $\widehat{\bx}$ will satisfy
\begin{equation} \label{eq:bbihtthmdpss}
\norm{\bx' - \widehat{\bx}} \le \kappa_1 \norm{\be},
\end{equation}
where $\kappa_1>1$ is as specified in Theorem~\ref{thm:blockiht}.
\end{thm}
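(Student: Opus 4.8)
The plan is to combine the embedding guarantee from Theorem~\ref{thm:DPSSPsiRIPandPsiblockRIP} with the recovery guarantee from Theorem~\ref{thm:blockiht}, so that this result is essentially a composition of two previously established facts. The essential observation is that Theorem~\ref{thm:blockiht} requires the $\bPsi$-block-RIP of order $2K$ (not $K$), so I would invoke part~2 of Theorem~\ref{thm:DPSSPsiRIPandPsiblockRIP} with the block-sparsity order set to $2K$ rather than $K$, and then check that the measurement hypothesis~\eqref{eq:m4k} is exactly what this substitution demands.

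First, I would set $\delta_{2K} = \delta$ and observe that substituting $2K$ for $K$ into the measurement bound~\eqref{eq:Psi-blockRIP-M} yields precisely~\eqref{eq:m4k}: the leading factor $2K$ becomes $4K$, the argument of the inner logarithm becomes $eJ/2K$, and the terms $\frac{D}{J}\log(42/\delta)$, $\log(4/\beta)$, and the denominator $c_1(\delta/\sqrt{2})$ are unchanged. Consequently, under the hypothesis~\eqref{eq:m4k}, part~2 of Theorem~\ref{thm:DPSSPsiRIPandPsiblockRIP} guarantees that with probability exceeding $1-\beta$ the matrix $\bA$ satisfies the $\bPsi$-block-RIP of order $2K$ with constant $\delta$.

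Next, I would verify on this high-probability event that the hypotheses of Theorem~\ref{thm:blockiht} are met. Since $\delta \in (0,0.2)$ by assumption, the required bound $\delta_{2K} \le 0.2$ holds, and the range $\frac{1}{\mu} \in [1+\delta,1.5(1-\delta))$ stated here coincides with the range $\frac{1}{\mu} \in [1+\delta_{2K},1.5(1-\delta_{2K}))$ demanded by Theorem~\ref{thm:blockiht}. Applying that theorem to the signal $\bx' = \bPsi \balpha'$ with $\balpha'$ being $K$-block-sparse, and to the observations $\by = \bA \bx' + \be$, then yields the bound~\eqref{eq:bbihtthmdpss} with the same constant $\kappa_1$.

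The argument is one of bookkeeping rather than of any substantively new estimate, so I do not anticipate a genuine obstacle; the single point requiring care is the order-doubling, namely invoking the embedding result at order $2K$ so that the resulting $\bPsi$-block-RIP is strong enough to feed into the IHT guarantee, while tracking that~\eqref{eq:m4k} is~\eqref{eq:Psi-blockRIP-M} evaluated at $2K$. It is also worth remarking that the probabilistic event is uniform: the $\bPsi$-block-RIP holds simultaneously for all $2K$-block-sparse coefficient vectors, so a single draw of $\bA$ handles the ``for any $\bx'$'' quantifier in the statement, and no additional union bound over signals is needed.
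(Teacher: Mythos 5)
Your proposal is correct and takes essentially the same route as the paper, which gives no detailed argument but simply states that the result ``follows from combining'' Theorem~\ref{thm:blockiht} with Theorem~\ref{thm:DPSSPsiRIPandPsiblockRIP}---precisely the composition you carry out. Your explicit check that~\eqref{eq:m4k} is~\eqref{eq:Psi-blockRIP-M} evaluated at block-sparsity order $2K$ (so that the embedding guarantee delivers the $\bPsi$-block-RIP of order $2K$ needed by the IHT theorem) is exactly the bookkeeping the paper leaves implicit.
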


\begin{thm} \label{thm:blockcosampdpss}
Let $k = 2N\slepw(1-\epsilon)$, set $D = kJ = N(1-\epsilon)\bnyq\tsamp < N$, and let $\bPsi$ be the $N \times D$ multiband modulated DPSS dictionary defined in~\eqref{eq:Qconcat}. Fix $\delta \in (0,0.1 - 6 N C_1^{1/2} e^{-\frac{C_2 N}{2}})$ and $\beta \in (0,1)$. Let $\bA$ be an $M \times N$ subgaussian matrix with 
\begin{equation}
M \ge  \frac{ 8 K \left(\frac{D}{J} \log (42/ \delta) + \log (e J/4K)\right) + \log(4/\beta) }{c_1(\delta/\sqrt{2})}
\label{eq:m8k}
\end{equation}
Then with probability exceeding $1 - \beta$, the following statement holds: For any $\bx' \in \complex^N$ that has a $K$-block-sparse representation in the dictionary $\bPsi$ (i.e., that can be written as $\bx' = \bPsi \balpha'$ for some $K$-block-sparse vector $\balpha' \in \complex^D$), if we use block-based CoSaMP (Algorithm~\ref{alg:bbcosamp}) with $\widehat{\balpha} = \mathrm{BBCoSaMP}(\bA \bPsi, \bI, \by, K)$ to recover an estimate $\widehat{\balpha}$ of $\balpha'$ from the observations $\by = \bA \bx' + \be$, the resulting $\widehat{\balpha}$ will satisfy
\begin{equation} \label{eq:bbcosampthmdpss}
\norm{\balpha' - \widehat{\balpha}} \le \kappa_2 \norm{\be}
\end{equation}
where $\kappa_2>1$ is as specified in Theorem~\ref{thm:blockcosamp}.
\end{thm}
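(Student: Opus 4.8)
The plan is to chain together the three ingredients already in hand---the $\bPsi$-block-RIP measurement bound of Theorem~\ref{thm:DPSSPsiRIPandPsiblockRIP}, the $\bPsi$-block-RIP to block-RIP conversion of Theorem~\ref{thm:DPSSRIPandDPSSblockRIP}, and the abstract recovery guarantee of Theorem~\ref{thm:blockcosamp} for $\mathrm{BBCoSaMP}(\bA \bPsi, \bI, \by, K)$---while keeping careful track of the RIP \emph{order} and the size of the isometry constant. Since Theorem~\ref{thm:blockcosamp} demands that $\bA \bPsi$ satisfy the block-RIP of order $4K$ (not order $K$) with constant at most $0.1$, every intermediate step must be carried out at order $4K$.

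First I would establish the $\bPsi$-block-RIP of order $4K$ for $\bA$. Applying Theorem~\ref{thm:DPSSPsiRIPandPsiblockRIP} (part~2) with $4K$ in place of $K$, the required measurement bound~\eqref{eq:Psi-blockRIP-M} becomes exactly the hypothesis~\eqref{eq:m8k}: the factor $2 \cdot 4K = 8K$ and the logarithmic term $\log(eJ/4K)$ line up precisely. Hence, with probability exceeding $1-\beta$, the matrix $\bA$ satisfies the $\bPsi$-block-RIP of order $4K$ with constant $\delta$, and all subsequent (deterministic) steps are conditioned on this single high-probability event.

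Next I would promote this into a block-RIP statement for $\bA \bPsi$. Although Theorem~\ref{thm:DPSSRIPandDPSSblockRIP} (part~2) is phrased at order $K$, its proof rests only on Lemma~\ref{lem:PsiIsom}, which gives the near-isometry $\norm{\bPsi \balpha} \approx \norm{\balpha}$ for \emph{all} $\balpha \in \complex^{D}$; the argument therefore applies verbatim to block-sparse vectors of any order, in particular order $4K$. In the regime $k = 2N\slepw(1-\epsilon)$ (so that $D < N$ and Lemma~\ref{lem:PsiIsom} is in force), the conversion yields that $\bA \bPsi$ satisfies the block-RIP of order $4K$ with constant $\delta + 6 N C_1^{1/2} e^{-C_2 N/2}$.

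Finally, the constraint $\delta \in (0, 0.1 - 6 N C_1^{1/2} e^{-C_2 N/2})$ guarantees that this inflated constant is strictly below $0.1$, so the hypothesis $\delta_{4K} \le 0.1$ of Theorem~\ref{thm:blockcosamp} is met. Invoking that theorem with $\bA \bPsi$ in the role of its sensing matrix then delivers the bound~\eqref{eq:bbcosampthmdpss}. The proof is essentially a bookkeeping exercise; the only point requiring care is the alignment of the order-$4K$ requirement with~\eqref{eq:m8k}, together with the verification that the $\delta$-budget leaves just enough room for the $6 N C_1^{1/2} e^{-C_2 N/2}$ inflation incurred by the non-orthogonality of $\bPsi$. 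This inflation term is the sole obstacle, and it is controlled precisely because $k$ is taken small enough ($k = 2N\slepw(1-\epsilon)$) for $\bPsi$ to be undercomplete, which is exactly what makes Lemma~\ref{lem:PsiIsom} available.
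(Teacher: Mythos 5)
Your proposal is correct and follows essentially the same route as the paper, which simply states that this theorem follows by combining Theorems~\ref{thm:blockcosamp}, \ref{thm:DPSSPsiRIPandPsiblockRIP}, and~\ref{thm:DPSSRIPandDPSSblockRIP}; your write-up fleshes out exactly that combination, correctly applying the $\bPsi$-block-RIP bound at order $4K$ (matching~\eqref{eq:m8k}), converting to the block-RIP for $\bA\bPsi$ via Lemma~\ref{lem:PsiIsom}, and verifying that the $\delta$-budget absorbs the $6NC_1^{1/2}e^{-C_2N/2}$ inflation so that $\delta_{4K}\le 0.1$.
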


\subsubsection{Approximating sampled multiband signals with exactly block-sparse signals}

The second of our principles is that most multiband signals with $K$ occupied bands, when sampled and time-limited, have a high-quality $K$-block-sparse representation in the dictionary $\bPsi$. Let $x(t)$ denote a continuous-time multiband signal with nonzero support on blocks indexed by $\cI \subseteq \{0,1,\dots,J-1\}$, where $|\cI| = K$. Let $\bx = [x(0) ~ x(\tsamp) ~ \cdots ~ x((N-1)\tsamp)]^T \in \complex^N$ denote a finite vector of samples acquired from $x(t)$ with a sampling interval of $\tsamp \le \frac{1}{\bnyq}$.
Let $\balpha'$ be a $K$-block-sparse coefficient vector given by $\balpha'|_{\cI} = \bPsi_{\cI}^\dagger \bx$ and $\balpha'|_{\cI^c} = \bzero$.
Defining $\bx' := \bPsi \balpha'$ and $\be_{\bx} := \bx - \bPsi \balpha' = \bx - \bx'$, we can then write
\begin{equation}
\bx = \bx' + \be_{\bx},
\label{eq:xxprime}
\end{equation}
where $\bx'$ has an exactly $K$-block-sparse representation in the dictionary $\bPsi$ and we expect $\be_{\bx}$ to be small.

We can more formally bound the size of $\be_{\bx}$. For example, under the multiband random process model for $x(t)$ described in Theorem~\ref{thm:rprocmb}, we will have $\expval{\| \be_{\bx} \|_2^2} \le \frac{K}{2 \slepw} \sum_{\ell = k}^{N-1} \lambda_{N,\slepw}^{(\ell)}$. By setting the number of columns per band $k$ to be on the order of $2N\slepw$, we can make this error small relative to $\expval{\| \bx \|_2^2} = N$. For example, if we take $k = 2N\slepw(1-\epsilon)$ for some $\epsilon \in (0,1)$, Corollary~\ref{cor:smallk} allows us to conclude that
\begin{equation}
\expval{\| \be_{\bx} \|_2^2} \le \frac{K}{2 \slepw} \sum_{\ell = k}^{N-1} \lambda_{N,\slepw}^{(\ell)} \le K N (\epsilon + C_1 e^{-C_2 N}).
\label{eq:etail1}
\end{equation}
for $N \ge N_0$. The rightmost upper bound in~\eqref{eq:etail1} can be made as small as desired (relative to $\expval{\| \bx \|_2^2} = N$) by choosing $\epsilon$ sufficiently small and $N$ sufficiently large.

For any value of $k$, we can also establish a tail bound on $\| \be_{\bx} \|_2^2$, guaranteeing that it is unlikely to significantly exceed the quantity $\frac{K}{2 \slepw} \sum_{\ell = k}^{N-1} \lambda_{N,\slepw}^{(\ell)}$. Using standard concentration of measure arguments for subgaussian and subexponential random variables (see~\cite{vershynin2010introduction} for a thorough discussion), one can make the following guarantee on the squared norm of a Gaussian random vector.

\begin{lemma}
Let $\bz \in \complex^N$ be a complex-valued Gaussian random vector with mean zero. Then
$$
\prob{ \left| \| \bz \|_2^2 - \expval{ \| \bz \|_2^2} \right| \ge \gamma \expval{ \| \bz \|_2^2} }  \le 2 \; \mathrm{exp} \left\{ - \mathrm{min} \left( \frac{\gamma^2 (\sum_n \lambda_n)^2}{c_2^2 \sum_n \lambda_n^2}, \frac{\gamma \sum_n \lambda_n}{c_2 \max_n \lambda_n} \right) \right\},
$$
where $c_2$ is a universal constant and $\{\lambda_n\}$ denote the eigenvalues of the autocorrelation matrix of the length-$2N$ random vector $[\mathrm{Re}(\bz)^T \; \mathrm{Im}(\bz)^T]^T$.
\label{lem:tail}
\end{lemma}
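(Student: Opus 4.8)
The plan is to reduce the statement to a standard Bernstein-type tail bound for a sum of independent, centered subexponential random variables, applied to a diagonalized Gaussian quadratic form. First I would eliminate the complex structure: writing $\boldsymbol{g} = [\mathrm{Re}(\bz)^T \; \mathrm{Im}(\bz)^T]^T \in \real^{2N}$, we have $\norm{\bz}^2 = \sum_n |\bz[n]|^2 = \norm{\boldsymbol{g}}^2$, and by hypothesis $\boldsymbol{g}$ is a real, mean-zero Gaussian vector whose autocorrelation (covariance) matrix $\bR$ has eigenvalues $\{\lambda_n\}$. This replaces the complex quadratic form with a real one of dimension $2N$ at no cost.

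Next I would diagonalize. Writing the eigendecomposition $\bR = \bU \bLambda \bU^T$ and invoking the rotational invariance of the Gaussian distribution, we may represent $\boldsymbol{g} = \bU \bLambda^{1/2} \bw$ where $\bw \in \real^{2N}$ has i.i.d.\ standard normal entries. Consequently
$$
\norm{\bz}^2 = \norm{\boldsymbol{g}}^2 = \bw^T \bLambda \bw = \sum_n \lambda_n w_n^2,
$$
a weighted sum of independent $\chi^2_1$ random variables. In particular $\expval{\norm{\bz}^2} = \sum_n \lambda_n$, which matches the centering in the claim, and the quantity to be controlled is the centered sum $\sum_n \lambda_n (w_n^2 - 1)$.

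The core step is then to bound this centered sum. Each term $X_n := \lambda_n(w_n^2 - 1)$ is a centered subexponential random variable, and since $w_n^2 - 1$ has a fixed subexponential ($\psi_1$) norm that is independent of $n$, we have $\|X_n\|_{\psi_1} \le c\,\lambda_n$ for an absolute constant $c$. Applying the two-sided Bernstein inequality for independent subexponential summands (see~\cite{vershynin2010introduction}) gives, for any $t \ge 0$,
$$
\prob{ \left| \sum_n X_n \right| \ge t } \le 2\exp\left\{ -\min\left( \frac{t^2}{c_2^2 \sum_n \lambda_n^2}, \; \frac{t}{c_2 \max_n \lambda_n} \right) \right\},
$$
where the Bernstein parameters $\sum_n \|X_n\|_{\psi_1}^2$ and $\max_n \|X_n\|_{\psi_1}$ have been replaced by their upper bounds $c^2\sum_n \lambda_n^2$ and $c\max_n \lambda_n$. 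Setting $t = \gamma \sum_n \lambda_n = \gamma\,\expval{\norm{\bz}^2}$ then yields exactly the stated inequality.

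I would expect the only genuinely delicate point to be the bookkeeping on the constant $c_2$: one must verify that a single universal $c_2$ can be chosen so that it is independent of $N$ and of the particular eigenvalues $\{\lambda_n\}$, and so that it simultaneously absorbs the Bernstein constant in both the quadratic regime (the $t^2$ term) and the linear regime (the $t$ term). This is possible because $\lambda_n$ enters $X_n$ only as an explicit scalar multiplier, so the $\psi_1$-norms scale exactly with $\lambda_n$; choosing $c_2$ large enough to dominate both regimes then gives a valid bound by comparing the two terms of the minimum termwise. Everything else---the reduction to a real quadratic form, the diagonalization via rotational invariance, and the substitution $t = \gamma \sum_n \lambda_n$---is routine.
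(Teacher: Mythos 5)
Your proposal is correct and follows essentially the same route as the paper, which gives no detailed proof but simply invokes the standard concentration-of-measure machinery for subexponential random variables from~\cite{vershynin2010introduction}; your diagonalization of the real $2N$-dimensional Gaussian quadratic form into $\sum_n \lambda_n w_n^2$ followed by Bernstein's inequality is precisely the argument being cited. The substitution $t = \gamma \sum_n \lambda_n$ and the absorption of the $\psi_1$-norm constants into a single universal $c_2$ are handled correctly.
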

% \begin{proof}See Appendix~\ref{pf:tail}. \oldnote{We have cut out this proof to save space.} \end{proof}

If we assume in our multiband random process model that $x(t)$ is a Gaussian random process, this will imply that $\bx$ is a Gaussian random vector, and since $\be_{\bx}$ is a linear transformation of $\bx$, $\be_{\bx}$ will be Gaussian as well. This allows us to apply Lemma~\ref{lem:tail} to the quantity $\| \be_{\bx} \|_2^2$. Using the fact that $\norm[1]{\blambda} \ge \norm[2]{\blambda} \ge \norm[\infty]{\blambda}$ for any vector $\blambda$, we can pessimistically simplify this bound to read:
\begin{equation}
\prob{ \| \be_{\bx} \|_2^2 \ge (1 + \gamma) \frac{K}{2 \slepw} \sum_{\ell = k}^{N-1} \lambda_{N,\slepw}^{(\ell)} } \le 2 \; \mathrm{exp} \left\{ - \mathrm{min} \left( \frac{\gamma^2}{c_2^2}, \frac{\gamma}{c_2} \right) \right\}.
\label{eq:etail}
\end{equation}

\subsubsection{Combined guarantees}
\label{sec:cguar}

To put these two principles together, we note that when taking noise-free compressive measurements $\bA \bx$ of a signal vector $\bx$ obeying~\eqref{eq:xxprime}, we will have $\bA \bx = \bA \bx' + \bA \be_{\bx}$.
This allows us to invoke Theorems~\ref{thm:blockihtdpss} and~\ref{thm:blockcosampdpss} with $\be := \bA \be_{\bx}$,\footnote{One could easily incorporate actual measurement noise into $\be$ as well (and thus extend Theorems~\ref{thm:blockihtdpssfinal} and~\ref{thm:blockcosampdpssfinal} to the case of noisy measurements). However, for the sake of clarity we simply set $\be := \bA \be_{\bx}$ in order to highlight the impact of modeling error in our main results.} and when the number of columns per band $k$ is chosen so that we expect $\be_{\bx}$ to be small, the concentration of measure phenomenon tells us that $\be$ should be small as well.
In particular, note that for fixed $\be_{\bx}$ and random subgaussian $\bA$, \eqref{star} guarantees that
\begin{equation}
\prob{ \left| \norm[2]{\be}^2- \norm[2]{\be_{\bx}}^2\right| \ge \eta \norm[2]{\be_{\bx}}^2} \le  4 e^{-c_1(\eta) M}.
\label{eq:eex}
\end{equation}
Then, for example, if we let let $\widehat{\bx}$ denote the estimated signal vector recovered via block-based IHT, ~\eqref{eq:eex} allows us to write
$$
\| \bx  - \widehat{\bx} \|_2 \le \| \bx - \bx' \|_2 + \| \bx' - \widehat{\bx} \|_2 \le \| \be_{\bx} \|_2 + \kappa_1 \norm{\be} \le (1 + \kappa_1 (1+\eta)^{1/2}) \| \be_{\bx} \|_2,
$$
where the second inequality follows from Theorem~\ref{thm:blockihtdpss} and the third inequality holds with probability at least $1 - 4 e^{-c_1(\eta) M}$. Combining this fact with the tail bound~\eqref{eq:etail}, we can establish the following guarantee.

\begin{thm} \label{thm:blockihtdpssfinal}
Let $k \in \{1,2,\dots,N\}$, set $D = kJ = \frac{k\bnyq}{\bband}$, and let $\bPsi$ be the $N \times D$ multiband modulated DPSS dictionary defined in~\eqref{eq:Qconcat}. Fix $\delta \in (0,0.2)$ and $\beta \in (0,1)$, and let $\bA$ be an $M \times N$ subgaussian matrix with $M$ satisfying~\eqref{eq:m4k}.
If $x(t)$ is a Gaussian random process obeying the $K$-band model described in Theorem~\ref{thm:rprocmb}, $\bx \in \complex^N$ is generated by sampling $x(t)$ as in Theorem~\ref{thm:rprocmb}, and we use block-based IHT~\eqref{eq:IHTblock} with $\frac{1}{\mu} \in [1+\delta,1.5(1-\delta))$ to recover an estimate of $\bx$ from the observations $\by = \bA \bx$, then the resulting estimate $\widehat{\bx}$ will satisfy
\begin{equation} \label{eq:bbihtthmdpssfinal}
\norm{\bx - \widehat{\bx}}^2 \le (1 + \kappa_1 (1+\eta)^{1/2})^2 (1 + \gamma) \frac{K}{2 \slepw} \sum_{\ell = k}^{N-1} \lambda_{N,\slepw}^{(\ell)}
\end{equation}
with probability at least $1 - \beta - 4 e^{-c_1(\eta) M} - 2 e^{-\mathrm{min}(\gamma^2/c_2^2, \gamma/c_2)}$, where $\slepw= \frac{\bband\tsamp}{2}$ as specified in the dictionary construction~\eqref{eq:Qconcat}, $\kappa_1>1$ is a constant as specified in Theorem~\ref{thm:blockiht}, $c_1$ is a constant as specified in~\eqref{star}, and $c_2$ is a universal constant.
\end{thm}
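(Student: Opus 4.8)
The plan is to assemble the result entirely from ingredients already established, by regarding the modeling error $\be_{\bx}$ as effective measurement noise and then tracking the distinct sources of randomness through a single union bound. First I would invoke the decomposition~\eqref{eq:xxprime}, writing $\bx = \bx' + \be_{\bx}$ where $\bx' = \bPsi\balpha'$ is exactly $K$-block-sparse (with $\balpha'|_{\cI} = \bPsi_{\cI}^\dagger \bx$ and $\balpha'|_{\cI^c} = \bzero$) and $\be_{\bx} = \bx - \bx'$. The crucial observation is that the noise-free data splits as $\bA\bx = \bA\bx' + \bA\be_{\bx}$, so running block-based IHT on $\by = \bA\bx$ is identical to running it on $\bA\bx' + \be$ with $\be := \bA\be_{\bx}$; from the vantage point of recovering the block-sparse $\bx'$, this $\be$ is exactly the measurement-noise term appearing in Theorem~\ref{thm:blockihtdpss}.

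Next I would apply Theorem~\ref{thm:blockihtdpss} as a black box: since $M$ satisfies~\eqref{eq:m4k} and $\delta\in(0,0.2)$, with probability at least $1-\beta$ over the draw of $\bA$ the estimate obeys $\norm{\bx'-\widehat{\bx}} \le \kappa_1\norm{\be}$. A triangle inequality then gives
\[
\norm{\bx-\widehat{\bx}} \le \norm{\be_{\bx}} + \kappa_1\norm{\be}.
\]
To replace $\norm{\be}$ by a multiple of $\norm{\be_{\bx}}$, I would invoke the concentration bound~\eqref{eq:eex}, which for a \emph{fixed} $\be_{\bx}$ and random $\bA$ yields $\norm{\be}^2 \le (1+\eta)\norm{\be_{\bx}}^2$ with probability at least $1 - 4e^{-c_1(\eta)M}$, hence $\norm{\bx-\widehat{\bx}} \le (1 + \kappa_1(1+\eta)^{1/2})\norm{\be_{\bx}}$. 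Finally, because $x(t)$ is Gaussian, $\bx$ and therefore $\be_{\bx}$ are Gaussian, so the tail bound~\eqref{eq:etail} applies and gives $\norm{\be_{\bx}}^2 \le (1+\gamma)\frac{K}{2\slepw}\sum_{\ell=k}^{N-1}\lambda_{N,\slepw}^{(\ell)}$ with probability at least $1 - 2e^{-\min(\gamma^2/c_2^2,\,\gamma/c_2)}$. Squaring the preceding display and substituting delivers~\eqref{eq:bbihtthmdpssfinal}.

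The main obstacle is bookkeeping rather than genuine difficulty: the three failure events have different origins and must be combined in the right order. The block-RIP event underlying Theorem~\ref{thm:blockihtdpss} and the concentration event~\eqref{eq:eex} both depend on $\bA$, so I would condition on the signal (fixing $\be_{\bx}$), then union-bound these two $\bA$-events to a failure probability $\beta + 4e^{-c_1(\eta)M}$; the tail bound~\eqref{eq:etail} is a statement about the signal alone and contributes an independent $2e^{-\min(\gamma^2/c_2^2,\,\gamma/c_2)}$. Summing the three failure probabilities produces the stated success probability $1 - \beta - 4e^{-c_1(\eta)M} - 2e^{-\min(\gamma^2/c_2^2,\,\gamma/c_2)}$. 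The one point I would verify carefully is that~\eqref{eq:eex} is genuinely uniform over fixed vectors $\be_{\bx}$ (it is, being stated for arbitrary fixed $\be_{\bx}$), which is precisely what legitimizes conditioning on the signal before invoking the $\bA$-randomness.
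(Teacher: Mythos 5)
Your proposal is correct and follows essentially the same route as the paper's own argument: the decomposition $\bx = \bx' + \be_{\bx}$ from~\eqref{eq:xxprime}, invoking Theorem~\ref{thm:blockihtdpss} with $\be := \bA\be_{\bx}$, the triangle inequality, the concentration bound~\eqref{eq:eex}, the Gaussian tail bound~\eqref{eq:etail}, and a union bound over the three failure events. Your explicit remark about conditioning on the signal before invoking the $\bA$-randomness is a slightly more careful articulation of the bookkeeping the paper leaves implicit, but it is the same proof.
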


Although the above result holds for any $k \in \{1,2,\dots,N\}$, it is perhaps most interesting when the number of columns per band $k$ is chosen to be on the order of $2N\slepw$.
For example, if we take $k = 2N\slepw(1-\epsilon)$,~\eqref{eq:bbihtthmdpssfinal} will read
\begin{equation} \label{eq:bbihtthmdpssfinaloneminus}
\norm{\bx - \widehat{\bx}}^2 \le (1 + \kappa_1 (1+\eta)^{1/2})^2 (1 + \gamma) (\epsilon + C_1 e^{-C_2 N}) K N.
\end{equation}
Supposing that $K$, $\kappa_1$, $\eta$, and $\gamma$ are fixed, one can make the upper bound appearing in~\eqref{eq:bbihtthmdpssfinaloneminus} as small as desired (relative to $\expval{\| \bx \|_2^2} = N$) by choosing $\epsilon$ sufficiently small and $N$ sufficiently large.\footnote{One could also use  Lemma~\ref{lem:tail} to guarantee that, with high probability, $\|\bx\|_2^2$ will not be too small compared to $N$. We omit these details in the interest of conciseness.}
Specifically, the term $\epsilon + C_1 e^{-C_2 N}$ can be made arbitrarily close to zero by choosing $\epsilon$ sufficiently small and $N$ sufficiently large.
Of course, as one increases the length $N$ of the signal vector, the requisite number $M$ of measurements will increase proportionally (this is captured in~\eqref{eq:m4k} via the dependence on $D$).
What is important about the measurement bound is that $\frac{M}{N}$, the permitted undersampling ratio relative to the Nyquist sampling rate (supposing $\tsamp = \frac{1}{\bnyq}$), will scale like
$$
\frac{M}{N} \sim \frac{K \bband}{\bnyq}.
$$
In other words, we need only collect compressive samples at a rate proportional to $K\bband$, the total amount of occupied bandwidth (i.e., the so-called Landau rate~\cite{Landa_Necessary}).

For sufficiently small $k$, we can establish a similar guarantee for block-based CoSaMP. Let $\widehat{\balpha}$ be the recovered coefficient vector, and define $\widehat{\bx} := \bPsi \widehat{\balpha}$. Then we can write
\begin{eqnarray*}
\| \bx  - \widehat{\bx} \|_2 &\le& 
\| \bx - \bx' \|_2 + \| \bPsi \balpha' - \bPsi \widehat{\balpha} \|_2 \\
&\le& \| \be_{\bx} \|_2 + \sqrt{1+3 N C_1^{1/2} e^{-\frac{C_2 N}{2}}} \| \balpha' - \widehat{\balpha} \|_2 \\
&\le& \| \be_{\bx} \|_2 + \kappa_2 \sqrt{1+3 N C_1^{1/2} e^{-\frac{C_2 N}{2}}} \| \be \|_2 \\
&\le&  (1 + \kappa_2 (1+3 N C_1^{1/2} e^{-\frac{C_2 N}{2}})^{1/2} (1+\eta)^{1/2}) \| \be_{\bx} \|_2,
\end{eqnarray*}
where the second line follows from Lemma~\ref{lem:PsiIsom}, the third line follows from Theorem~\ref{thm:blockcosampdpss}, and the last line holds with probability at least $1 - 4 e^{-c_1(\eta) M}$. Combining this fact with Corollary~\ref{cor:smallk} and the tail bound~\eqref{eq:etail}, we can establish the following guarantee.

\begin{thm} \label{thm:blockcosampdpssfinal}
Let $k = 2N\slepw(1-\epsilon)$, set $D = kJ = N(1-\epsilon)\bnyq\tsamp < N$, and let $\bPsi$ be the $N \times D$ multiband modulated DPSS dictionary defined in~\eqref{eq:Qconcat}. Fix $\delta \in (0,0.1 - 6 N C_1^{1/2} e^{-\frac{C_2 N}{2}})$ and $\beta \in (0,1)$, and let $\bA$ be an $M \times N$ subgaussian matrix with $M$ satisfying~\eqref{eq:m8k}.
If $x(t)$ is a Gaussian random process obeying the $K$-band model described in Theorem~\ref{thm:rprocmb}, $\bx \in \complex^N$ is generated by sampling $x(t)$ as in Theorem~\ref{thm:rprocmb}, we use block-based CoSaMP (Algorithm~\ref{alg:bbcosamp}) with $\widehat{\balpha} = \mathrm{BBCoSaMP}(\bA \bPsi, \bI, \by, K)$ to recover an estimate of $\balpha$ from the observations $\by = \bA \bx$, and we set $\widehat{\bx} := \bPsi \widehat{\balpha}$, then the resulting estimate will satisfy
\begin{equation} \label{eq:bbcosampthmdpssfinal}
\norm{\bx - \widehat{\bx}}^2 \le (1 + \kappa_2 (1+3 N C_1^{1/2} e^{-\frac{C_2 N}{2}})^{1/2} (1+\eta)^{1/2})^2 (1 + \gamma) (\epsilon + C_1 e^{-C_2 N}) K N
\end{equation}
with probability at least $1 - \beta- 4 e^{-c_1(\eta) M} - 2 e^{-\mathrm{min}(\gamma^2/c_2^2, \gamma/c_2)}$, where $\kappa_2>1$ is a constant as specified in Theorem~\ref{thm:blockcosamp}, $C_1$ and $C_2$ are constants as specified in Lemma~\ref{lem:smallk}, $c_1$ is a constant as specified in~\eqref{star}, and $c_2$ is a universal constant.
\end{thm}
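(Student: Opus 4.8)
The plan is to assemble the estimate from three ingredients that are already in place: the four-line norm chain displayed immediately above the theorem statement, the tail bound~\eqref{eq:etail} on $\norm{\be_{\bx}}^2$, and the eigenvalue-sum bound of Corollary~\ref{cor:smallk}. The overall scaffolding is the decomposition $\bx = \bx' + \be_{\bx}$ of~\eqref{eq:xxprime}, in which $\bx'$ has an exactly $K$-block-sparse representation $\balpha'$ and $\be_{\bx}$ is the (small) modeling error, after which everything reduces to a union bound over the independent sources of randomness.

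First I would set $\be := \bA\be_{\bx}$, so that the noise-free measurements satisfy $\by = \bA\bx = \bA\bx' + \be$, placing us exactly in the hypotheses of Theorem~\ref{thm:blockcosampdpss}. That theorem—whose probability-$\beta$ event is precisely the draw of $\bA$ furnishing the required $\bPsi$-block-RIP, and is independent of the signal—gives $\norm{\balpha' - \widehat{\balpha}} \le \kappa_2\norm{\be}$. Then, writing $\widehat{\bx} = \bPsi\widehat{\balpha}$ and $\bx' = \bPsi\balpha'$, I would run the triangle inequality $\norm{\bx - \widehat{\bx}} \le \norm{\bx - \bx'} + \norm{\bPsi\balpha' - \bPsi\widehat{\balpha}}$, bound the second term by $\sqrt{1 + 3NC_1^{1/2}e^{-C_2N/2}}\,\norm{\balpha' - \widehat{\balpha}}$ via Lemma~\ref{lem:PsiIsom}, substitute the CoSaMP estimate, and finally control $\norm{\be} = \norm{\bA\be_{\bx}}$ by $(1+\eta)^{1/2}\norm{\be_{\bx}}$ using the concentration statement~\eqref{eq:eex}. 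This reproduces the displayed chain, yielding $\norm{\bx - \widehat{\bx}} \le (1 + \kappa_2(1+3NC_1^{1/2}e^{-C_2N/2})^{1/2}(1+\eta)^{1/2})\norm{\be_{\bx}}$.

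Squaring this bound, I would then invoke the tail bound~\eqref{eq:etail}—valid because the Gaussian-process hypothesis makes $\be_{\bx}$ a zero-mean Gaussian vector, so that Lemma~\ref{lem:tail} applies—to replace $\norm{\be_{\bx}}^2$ by $(1+\gamma)\frac{K}{2\slepw}\sum_{\ell=k}^{N-1}\lambda_{N,\slepw}^{(\ell)}$, and finally apply Corollary~\ref{cor:smallk} in the form~\eqref{eq:etail1} to bound the eigenvalue sum by $KN(\epsilon + C_1e^{-C_2N})$. Chaining these substitutions produces exactly the claimed bound~\eqref{eq:bbcosampthmdpssfinal}.

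The only step requiring genuine care—and hence the main obstacle—is the bookkeeping of the three failure events, since the randomness arises from two distinct and independent sources. The event that $\bA$ delivers the CoSaMP guarantee (failure $\le \beta$) and the concentration event~\eqref{eq:eex} for $\norm{\bA\be_{\bx}}$ (failure $\le 4e^{-c_1(\eta)M}$) are both over the draw of $\bA$, whereas the tail event~\eqref{eq:etail} for $\norm{\be_{\bx}}^2$ is over the draw of the signal $x(t)$. I would resolve this by conditioning on the signal realization: for each fixed $\be_{\bx}$ the two $\bA$-side events fail with total probability at most $\beta + 4e^{-c_1(\eta)M}$ (here $\bA \perp \bx$ is used so that the CoSaMP-failure probability is unaffected by the conditioning), and because this conditional estimate is uniform in $\be_{\bx}$, taking expectation over the signal leaves it unchanged; adding the signal-side failure $2e^{-\min(\gamma^2/c_2^2,\gamma/c_2)}$ then yields the stated success probability $1 - \beta - 4e^{-c_1(\eta)M} - 2e^{-\min(\gamma^2/c_2^2,\gamma/c_2)}$. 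One should also record in passing that the admissible range for $\delta$ guarantees $\delta + 6NC_1^{1/2}e^{-C_2N/2} \le 0.1$, which is exactly what makes Theorem~\ref{thm:blockcosampdpss} applicable, but this is inherited directly from the hypotheses rather than being an obstacle.
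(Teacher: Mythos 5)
Your proposal is correct and follows essentially the same route as the paper's own argument: the decomposition $\bx = \bx' + \be_{\bx}$ with $\be := \bA \be_{\bx}$, the norm chain combining Theorem~\ref{thm:blockcosampdpss}, Lemma~\ref{lem:PsiIsom}, and the concentration bound~\eqref{eq:eex}, followed by the tail bound~\eqref{eq:etail} and Corollary~\ref{cor:smallk} to control $\norm{\be_{\bx}}^2$. Your explicit conditioning argument for the union bound over the two independent sources of randomness (the draw of $\bA$ versus the draw of the signal) is simply a more careful write-up of the probability bookkeeping the paper leaves implicit, not a different approach.
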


Once again, supposing that $K$, $\kappa_2$, $\eta$, and $\gamma$ are fixed, one can make the upper bound appearing in~\eqref{eq:bbcosampthmdpssfinal} as small as desired (relative to $\expval{\| \bx \|_2^2} = N$) by choosing $\epsilon$ sufficiently small and $N$ sufficiently large.
Specifically, the terms $3 N C_1^{1/2} e^{-\frac{C_2 N}{2}}$ and $\epsilon + C_1 e^{-C_2 N}$ can be made arbitrarily close to zero by choosing $\epsilon$ sufficiently small and $N$ sufficiently large.
Thus, we have again guaranteed that most finite-length sample vectors arising from multiband analog signals can---to a very high degree of approximation---be recovered from a number of compressive measurements that is proportional to the underlying information level.

\section{Simulations}
\label{sec:sims}

In this section we present the results of a suite of simulations that demonstrate the effectiveness of our proposed approaches to multiband signal recovery from compressive measurements.
In doing so, we address various practical considerations that arise within our framework.
All of our simulations were performed via a MATLAB software package that we have made available for download at \url{http://www.mines.edu/~mwakin/software/}.
This software package contains all of the code and results necessary to reproduce the experiments and figures described below, but should additionally serve as a platform upon which other researchers can test and develop their own extensions of these ideas.

\subsection{Implementation and experimental setup}

We begin with a brief discussion regarding our implementation and experimental setup.

\subsubsection{Computing $\cP(\bx,K)$}

We first recall that a key step in solving either block-based IHT (specifically, the variation in~\eqref{eq:IHTblock}) or block-based CoSaMP (specifically, the variation $\widehat{\bx} = \mathrm{BBCoSaMP}(\bA,\bPsi,\by,K)$) is the computation of $\cP(\bx,K)$, i.e., the projection of $\bx$ onto the set of $K$-block sparse signals in the dictionary $\bPsi$ (as defined in~\eqref{eq:blockproj}).
If $\bPsi$ were an orthonormal basis, this projection could be computed simply by taking $\bPsi^H \bx$ and setting to zero all but the $K$ blocks of this vector with the highest energy.
Unfortunately, the columns of the multiband modulated DPSS dictionary $\bPsi$ (as defined in~\eqref{eq:Qconcat}) are not orthogonal, and so this thresholding approach is not guaranteed to even approximate the correct solution.
In fact, when the number of columns per band $k$ exceeds $2N\slepw(1-\epsilon)$, $\bPsi$ will fail to act as an isometry (recall Lemma~\ref{lem:PsiIsom}), and when $k$ exceeds $2N\slepw$, $\bPsi$ will actually be overcomplete.
As we will see in our experiments, it can often be desirable to choose $k$ to be larger than $2N\slepw$, but this lack of isometry and this overcompleteness prevent us from applying any of the standard arguments from sparse approximation to solving the projection problem.

It is important to note that in~\eqref{eq:blockproj} we are seeking a vector $\bz$ that minimizes $\norm{\bx - \bz}$ and has a block-sparse representation in $\bPsi$.
Although $\bz$ will be unique, the corresponding representation in $\bPsi$ need not be unique if $\bPsi$ fails to satisfy the block-RIP.
Unfortunately, there seems to be relatively little known about solving this type of sparse approximation problem.
Nevertheless, since we are ultimately interested in $\bz$ (not its representation with respect to $\bPsi$) this should not necessarily stop us from proceeding under the hope that standard sparse approximation algorithms can still succeed even when $\bPsi$ is moderately overcomplete.
Thus, in our simulations, we use block-OMP to obtain an approximation to $\cP(\bx,K)$.
Block-OMP is a straightforward generalization of the classical OMP algorithm.
It proceeds by: {\em (i)}~initializing the residual vector $\br = \bx$ and the set $\cI = \emptyset$, {\em (ii)}~computing the proxy vector $\bh = \bPsi^H \br$, {\em (iii)}~identifying the block in $\bh$ that has the largest energy and adding this block to the set $\cI$, and {\em (iv)}~orthogonalizing $\br$ against the columns in $\bPsi_\cI$.
This last step is equivalent to the ``update'' step in CoSaMP in Algorithms~\ref{alg:cosamp} and~\ref{alg:bbcosamp}.
Steps {\em (ii)}~through {\em (iv)}~are repeated until termination.
The final output can be computed from $\bx - \br^{K}$, where $\br^{K}$ is the residual after $K$ iterations.

We close by noting that the lack of a provable technique for computing $\cP(\bx,K)$ represents an important gap between some of our theory and practice.
Specifically, one of our main results (Theorem~\ref{thm:blockihtdpssfinal}) pertains to block-based IHT~\eqref{eq:IHTblock}, but this is an algorithm that we can only implement approximately.
As noted at the end of Section~\ref{sec:mbguarantees}, we conjecture that one could also establish theoretical guarantees for the $\mathrm{BBCoSaMP}(\bA, \bPsi, \by, K)$ version of block-based CoSaMP, but this too relies on being able to compute $\cP(\bx,K)$ exactly.
In light of this, we point out the following.
First, we are able to implement the $\mathrm{BBCoSaMP}(\bA \bPsi, \bI, \by, K)$ version of block-based CoSaMP exactly because this problem can be solved with simple block thresholding of the vector $\balpha$, and one of our main results (Theorem~\ref{thm:blockcosampdpssfinal}) does pertain specifically to $\mathrm{BBCoSaMP}(\bA \bPsi, \bI, \by, K)$.
Second, our simulations in Sections~\ref{sec:bbcosamp12} and beyond, which rely on block-OMP for approximating $\cP(\bx,K)$, indicate that we are clearly finding high quality solutions to this problem despite the lack of provable guarantees.
What is interesting is that our experimental results seem most favorable when $k$ exceeds $2N\slepw$ by a nontrivial amount, and that is a regime where the dictionary $\bPsi$ is overcomplete.
We hope to further address the implications of overcomplete $\bPsi$ in future work.

\subsubsection{Regularized least-squares}
\label{ssec:regu}

In our simulations below, we focus exclusively on testing the two versions of block-based CoSaMP.
One could conceivably expect similar experimental results using a properly-tuned version of block-based IHT, but since we lack theory concerning $\cP(\bx,K)$ we find it more worthwhile to devote our space to exploring the various practical issues surrounding block-based CoSaMP.
Specifically, let us note that when solving block-based CoSaMP, the key steps in this algorithm consist of solving the least-squares problem in the ``update'' step and of computing $\cP(\bx,K)$ (twice), which also involves solving (potentially multiple) least-squares problems.
Thus, it is crucial that we solve these least-squares problems efficiently and accurately.

While there has been much work in the CS community on solving these problems efficiently (for example, see~\cite{NeedeT_CoSaMP}), we leave aside the issue of speed for the moment and focus instead on the issue of stability.  In our context, we must take some care in solving these problems, since when the number of columns $k$ per band becomes becomes much larger than $2N\slepw$, the matrices involved can become potentially rank deficient and thus standard approaches can be numerically unstable.  Fortunately, this instability can be easily addressed using relatively simple techniques.

We first consider the least-squares problem that must be solved in the computation of $\cP(\bx,K)$.
In this case we will not focus specifically on the block-OMP algorithm that we implement, but simply assume that we have some technique for obtaining an estimate $\cI$ of $\cS(\bx,K)$.
At that point, solving~\eqref{eq:blockproj} is relatively straightforward.
One could solve this via $\cP(\bx,K) = \bPsi_{\cI} \bPsi_{\cI}^{\dag} \bx$, but when $\cI$ contains indices corresponding to adjacent bands, $\bPsi_{\cI}$ can be nearly rank deficient.
In this case, a better approach is to construct a reduced basis $\bU$ for $\cR(\bPsi_{\cI})$.  One can then compute the projection simply via $\cP(\bx,K) = \bU \bU^H \bx$.
In our context, we use this reduced basis approach to perform the orthogonalization in each step of the block-OMP approach to computing $\cP(\bx,K)$.

We now turn to the main least-squares problem in the ``update'' step of CoSaMP.  In this step we are given a set of blocks $\cI$ and wish to solve a problem of the form
\begin{equation} \label{eq:ls_not}
\widetilde{\bx} = \argmin_{\bz}  \norm{\by - \bA \bz} \quad \mathrm{s.t.} \quad \bz \in \cR(\bPsi_{\cI}).
\end{equation}
Note that in this case we require more than simply the projection of $\by$ onto $\cR(\bA \bPsi_{\cI})$---we wish to actually calculate the vector $\bz$ corresponding to this projection.  This requires a different approach than the simple reduced basis approach described above (although constructing a reduced basis for $\cR(\bPsi_{\cI})$ can still be useful in this context).  In our simulations we regularize~\eqref{eq:ls_not} via {\em Tikhonov regularization}~\cite{Phill_technique,Tikho_Solution,TikhoA_Solutions}.  The key idea is to replace~\eqref{eq:ls_not} with
\begin{equation} \label{eq:ls_tikh}
\widetilde{\bx} = \argmin_{\bz}  \norm{\by - \bA \bz} \quad \mathrm{s.t.} \quad \bz \in \cR(\bPsi_{\cI}), \ \norm{\bz} \le \gamma.
\end{equation}
The constraint $\norm{\bz} \le \gamma$ significantly improves the conditioning of this problem when $\bA \bPsi_{\cI}$ is ill-conditioned.  In our simulations below, we use the toolbox of~\cite{Hanse_Regularization} to efficiently solve~\eqref{eq:ls_tikh}.  Note also that in our setting, the parameter $\gamma$ can be easily set using the norm of the original signal $\bx$ that we are acquiring.  In all simulations below (using either variation of block-based CoSaMP), we assume that an upper bound on $\norm{\bx}$ is known. In practical settings such an upper bound will usually be available, but if necessary one could also estimate this upper bound from $\norm{\by}$.

\subsubsection{Experimental setup}

In all of the experiments below, we assume that $\tsamp = \frac{1}{\bnyq}$ and that there are $J = \frac{\bnyq}{\bband} = 256$ possible bands.
For each value of $k$ that we consider, we set $D = kJ$ and let $\bPsi$ be the $N \times D$ multiband modulated DPSS dictionary defined in~\eqref{eq:Qconcat}.
The digital half-bandwidth parameter $\slepw$ is set to be $\slepw = \frac{\bband\tsamp}{2} = \frac{1}{512}$, and we consider sample vectors with length $N = 4096$, so that $2N\slepw = 16$.

In our experiments we generate our sampled multiband signal vectors $\bx$ by selecting the positions of the $K$ occupied bands uniformly at random from the $J$ possibilities, and then within each band adding together 50 complex exponentials with frequencies selected uniformly at random from within the frequency band (not aligned with the ``Nyquist grid'').
Each complex exponential in this summation is given a random amplitude and phase via multiplication by a complex Gaussian random variable.
There are a variety of other possibilities for generating test signals which we considered and for which we have observed essentially the same results as presented below.

We will typically report our results in terms of the SNR of the recovery, which in this context is defined as
$$
\mathrm{SNR} = 20\log_{10} \left( \frac{\norm{\bx}}{\norm{\bx  - \widehat{\bx}}} \right) \mathrm{dB},
$$
where $\bx$ is the original input to the measurement matrix and $\widehat{\bx}$ is the recovered estimate of $\bx$ provided by CoSaMP.
In all cases where we plot the SNR, what we actually show is the result of 50 independent trials (with a new signal for each trial).
Rather than plotting the mean SNR, we plot the contour for which 95\% of trials result in an SNR {\em at least} as large as the level indicated (so that only 5\% of trials result in a worse performance than indicated).

\subsection{BBCoSaMP$(\bA,\bPsi,\by,K)$ versus BBCoSaMP$(\bA\bPsi,\bI,\by,K)$}
\label{sec:bbcosamp12}

We begin with a comparison between the two versions of block-based CoSaMP discussed in Section~\ref{subsec:rec}---specifically, $\mathrm{BBCoSaMP}(\bA,\bPsi,\by,K)$ and $\mathrm{BBCoSaMP}(\bA\bPsi,\bI,\by,K)$.  Recall that the key difference between these approaches is that the former essentially tries to recover $\bx$ directly, while the latter instead attempts to first recover $\balpha$ and then estimates $\widehat{\bx}$ as $\bPsi \widehat{\balpha}$.  Since less is known about the theoretical properties of $\mathrm{BBCoSaMP}(\bA,\bPsi,\by,K)$ (both in and of itself and with respect to the computation of $\cP(\bx,K)$), we have focused more on $\mathrm{BBCoSaMP}(\bA\bPsi,\bI,\by,K)$ in our theoretical development.  However, all of our theorems regarding $\mathrm{BBCoSaMP}(\bA\bPsi,\bI,\by,K)$ (e.g., Theorems~\ref{thm:blockcosampdpss} and~\ref{thm:blockcosampdpssfinal}) are limited to the case where the number $k$ of DPSS vectors per band used to construct the multiband modulated DPSS dictionary $\bPsi$ satisfies $k < 2N\slepw$, and as we will see shortly, in practice we obtain significant improvements in performance by considering $k > 2N\slepw$.

In our experiments we evaluate both versions of block-based CoSaMP as a function of $k$.  For the purposes of this experiment, we assume that there are $K = 5$ active bands, we set the number of measurements $M = 512$, and we construct the measurement matrix $\bA$ to be $M \times N$ with i.i.d.\ Gaussian entries with mean zero and variance $\frac{1}{M}$. In this experiment, we take $\by = \bA \bx$ and do not add noise to the measurements. In Figure~\ref{fig:exp1}(a) we see that when $k$ is small the two approaches perform similarly, but when $k$ exceeds $2N\slepw$ by more than a small amount, the performance of $\mathrm{BBCoSaMP}(\bA,\bPsi,\by,K)$ is far superior to that of $\mathrm{BBCoSaMP}(\bA\bPsi,\bI,\by,K)$.  In fact, when $k$ becomes large we observe that for at least 5\% of trials $\mathrm{BBCoSaMP}(\bA\bPsi,\bI,\by,K)$ results in an SNR of almost 0~dB (which can be achieved simply via the trivial estimate $\widehat{\bx} = \bzero$.)  This gap is further illustrated in Figure~\ref{fig:exp1}(b), which shows the probability that the performance of $\mathrm{BBCoSaMP}(\bA\bPsi,\bI,\by,K)$ is within 3~dB of $\mathrm{BBCoSaMP}(\bA,\bPsi,\by,K)$ as a function of $k$.  This also begins to rapidly decay when $k$ exceeds $2N\slepw$.  For a point of reference, in Figure~\ref{fig:exp1}(c) we plot the eigenvalue $\lambda_{N,\slepw}^{(k)}$ corresponding to the first DPSS basis vector which is not used in constructing our dictionary $\bPsi$.  This value is the dominant term in the approximation error that we can expect when using $\bPsi$ to represent multiband signals.  We see that $\mathrm{BBCoSaMP}(\bA,\bPsi,\by,K)$ continues to improve with increasing $k$, roughly until $\lambda_{N,\slepw}^{(k)}$ approaches the level of machine precision.

\begin{figure*}
   \centering
   \begin{tabular*}{\linewidth}{@{\extracolsep{\fill}} ccc}
   \hspace*{-.3in} \includegraphics[width=2.35in]{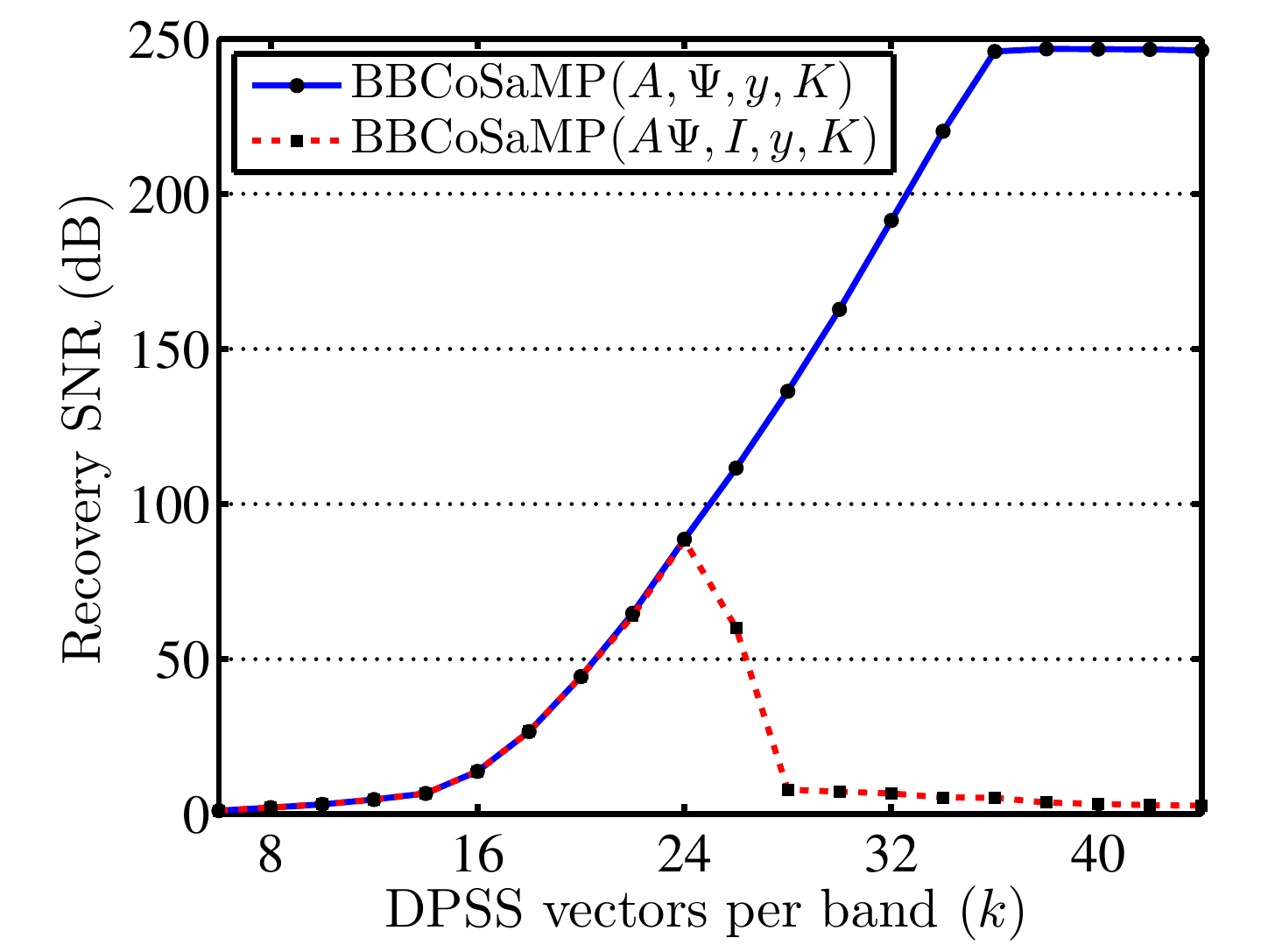} & \hspace*{-.35in} \includegraphics[width=2.35in]{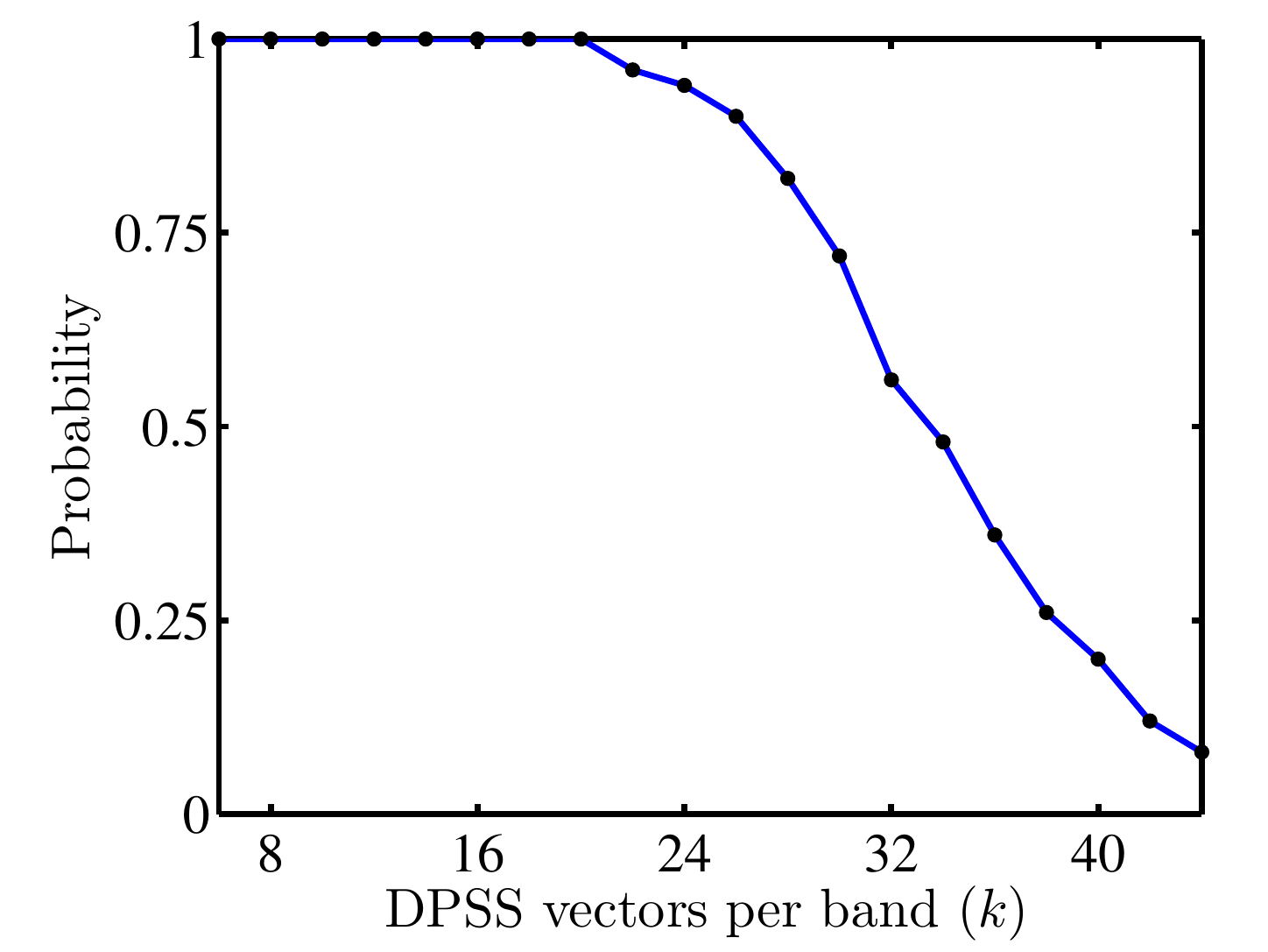} & \hspace*{-.3in} \includegraphics[width=2.35in]{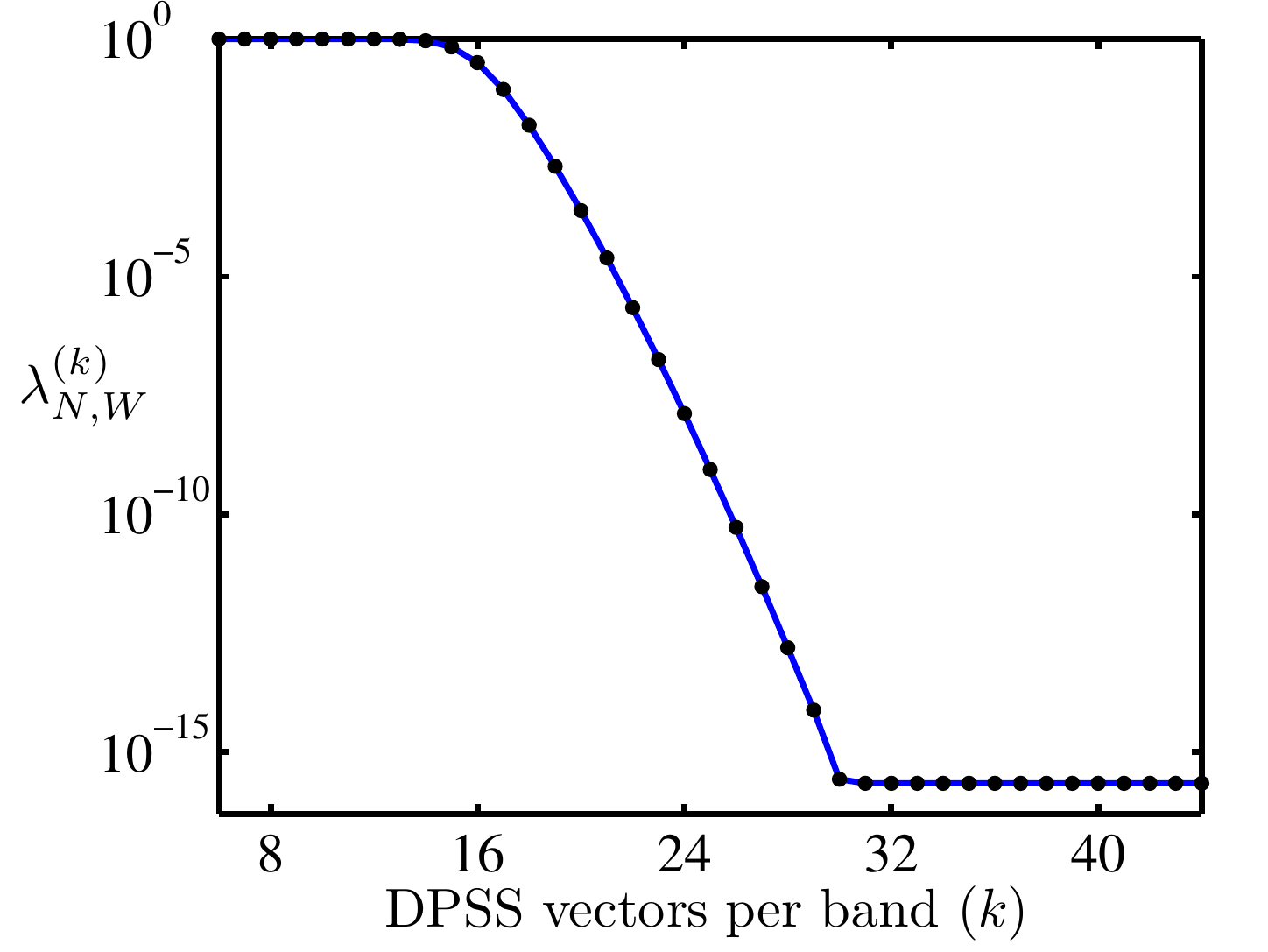} \\
   \hspace*{.01in} {\small (a)} & {\small (b)} & \hspace*{.01in} {\small (c)}
   \end{tabular*}
   \caption{\small \sl Comparison of $\mathrm{BBCoSaMP}(\bA,\bPsi,\by,K)$ to $\mathrm{BBCoSaMP}(\bA\bPsi,\bI,\by,K)$. (a)~The recovery SNR of $\mathrm{BBCoSaMP}(\bA,\bPsi,\by,K)$ and $\mathrm{BBCoSaMP}(\bA\bPsi,\bI,\by,K)$ as a function of the number $k$ of DPSS vectors used per band in constructing the multiband modulated DPSS dictionary $\bPsi$.  (b)~The probability that the performance of $\mathrm{BBCoSaMP}(\bA\bPsi,\bI,\by,K)$ is within 3~dB of $\mathrm{BBCoSaMP}(\bA,\bPsi,\by,K)$ as a function of $k$. (c)~The eigenvalue corresponding to the first DPSS basis vector that is not used.
   \label{fig:exp1}}
\end{figure*}

Our experimental results suggest that in our analysis of $\mathrm{BBCoSaMP}(\bA\bPsi,\bI,\by,K)$ we are correct in requiring that $k$ be relatively small, as the algorithm does indeed break down when $k$ becomes too large. However, before this breakdown the performance can be quite favorable, with recovery SNR exceeding 88~dB, and Theorem~\ref{thm:blockcosampdpssfinal} does guarantee even better performance were $N$ to increase.

We speculate that the breakdown in the performance of $\mathrm{BBCoSaMP}(\bA\bPsi,\bI,\by,K)$ as $k$ grows is likely due to the fact that, for large enough $k$, the dictionary $\bPsi$ begins to contain highly coherent columns, so that any method that attempts to recover $\balpha$ itself is likely to encounter significant problems.  However, the strong performance of $\mathrm{BBCoSaMP}(\bA,\bPsi,\by,K)$ (with no clear limitation on the size of $k$) seems to suggest that this latter approach likely satisfies the kinds of guarantees provided for block-based IHT in Theorems~\ref{thm:blockihtdpss} and~\ref{thm:blockihtdpssfinal}, which provide for arbitrarily large $k$ and hence arbitrarily accurate recovery.  In light of these results, for the remainder of our experiments we focus exclusively on $\mathrm{BBCoSaMP}(\bA,\bPsi,\by,K)$.

\subsection{Impact of measurement noise}

In most realistic scenarios, the compressive measurements $\by$ will be subject to various sources of noise (including noise in the signal itself, noise within the sensing hardware, quantization, etc.)  As noted in Section~\ref{sec:cguar}, our approach to signal recovery inherits the same robustness to noise that is exhibited by traditional CoSaMP.  To illustrate this, we consider the case where the measurements $\by$ are corrupted by additive white Gaussian noise,\footnote{Note that the case where the signal is corrupted with white noise as opposed to the measurements can be reduced to the case where the signal is noise-free and the measurements are noisy, and thus we restrict our attention to noisy measurements.  See~\cite{CandeD_how,DavenLTB_pros} for further discussion of this equivalence and the challenges posed by signal noise.} i.e., $\by = \bA \bx + \be$ where each $\be \sim \cN(\bzero,\sigma^2 \bI)$. In our experiments we consider three different noise levels, quantified by the ``measurement SNR'' (MSNR), which is defined as
$$
\mathrm{MSNR} = 20\log_{10} \left( \frac{\norm{\bA \bx}}{\norm{\be}} \right) \mathrm{dB},
$$
In general, the theoretical guarantees for this scenario suggest that the SNR of the recovery should be roughly comparable to the MSNR.

\begin{figure}
   \centering
   \hspace*{-.45in} \includegraphics[width=\imgwidth]{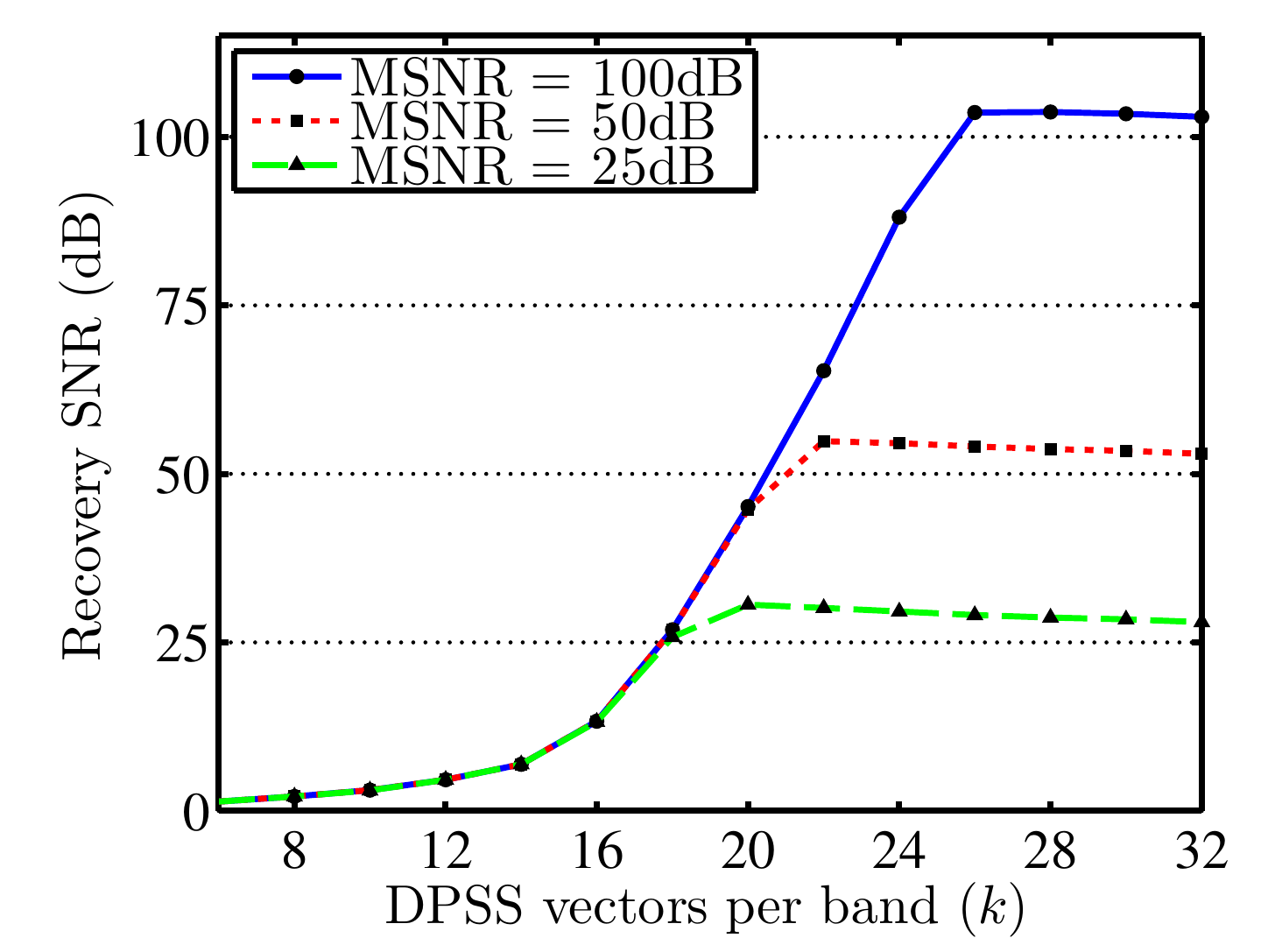}
   \caption{\small \sl Impact of measurement noise on the signal recovery SNR as a function of $k$.
   \label{fig:exp2}}
\end{figure}

The results of our experiment are illustrated in Figure~\ref{fig:exp2}.  In this case we are still assuming that there are $K = 5$ active bands, we use an i.i.d.\ Gaussian matrix $\bA$ with $M = 512$ rows, and we plot the performance as a function of $k$.  We observe that the results are essentially the same as in Figure~\ref{fig:exp1}(a), but that as we increase $k$ the recovery SNR will hit a plateau dictated by the best possible SNR achievable for a given MSNR.  Roughly speaking, when $k$ is small, performance is being limited by ``modeling error'', but as $k$ increases we eventually reach a regime where measurement noise surpasses modeling error as the limiting factor, and no further gains are possible by increasing $k$.  Thus, in practice it will typically be the noise level that dictates the optimal choice in $k$.  For the remainder of our experiments we wish to avoid any assumptions about the noise level, and so we restrict our attention to noise-free measurements.  However, the results all translate to the noisy setting roughly as one would expect based on these results.

\subsection{Required measurement rate}

We now study the performance of our approach as a function of the number of measurements $M$.  Specifically, we consider the cases where $K=5$, $10$, and $15$ bands are active, and for each value of $K$ we let $M$ vary from $2N\slepw K$ (the Landau rate) up to $14N\slepw K$ (oversampling the Landau rate by a factor of 7).  The results are shown in Figures~\ref{fig:exp4}(a) and (b), which plot the results in terms of $\frac{M}{2N\slepw K}$ and $M$, respectively.  We observe that when the measurement rate is only 3 or 4 times that of the Landau rate, we are already doing extremely well, and we obtain near-perfect recovery at 6 times the Landau rate.  We observe very similar behavior for all values of $K$.  Thus, for the sake of simplicity, we restrict the remainder of our experiments to the case where there are only $K=5$ active bands.

\begin{figure*}
   \centering
   \begin{tabular}{cc}
   \hspace*{-.45in} \includegraphics[width=\imgwidth]{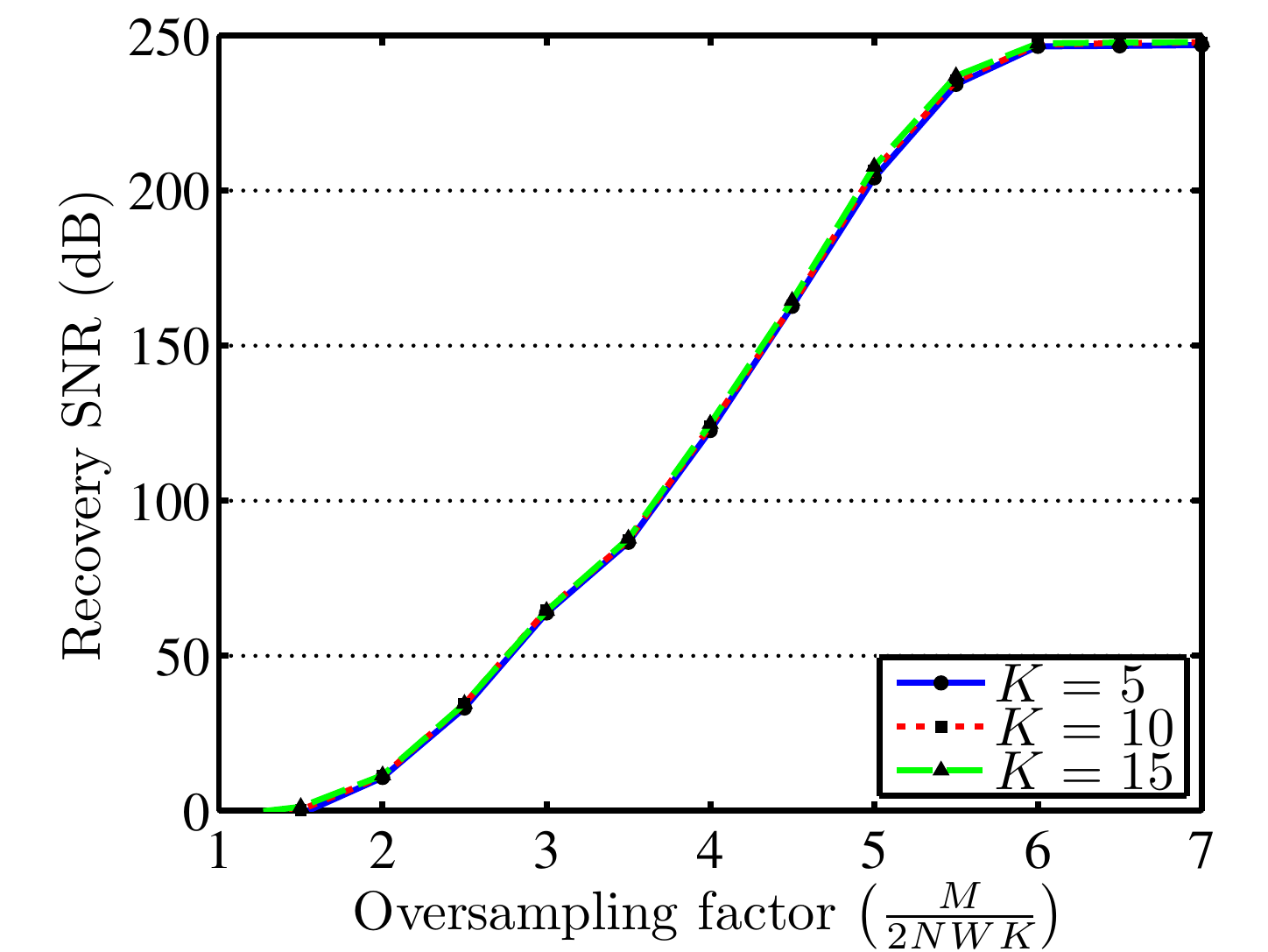} & \includegraphics[width=\imgwidth]{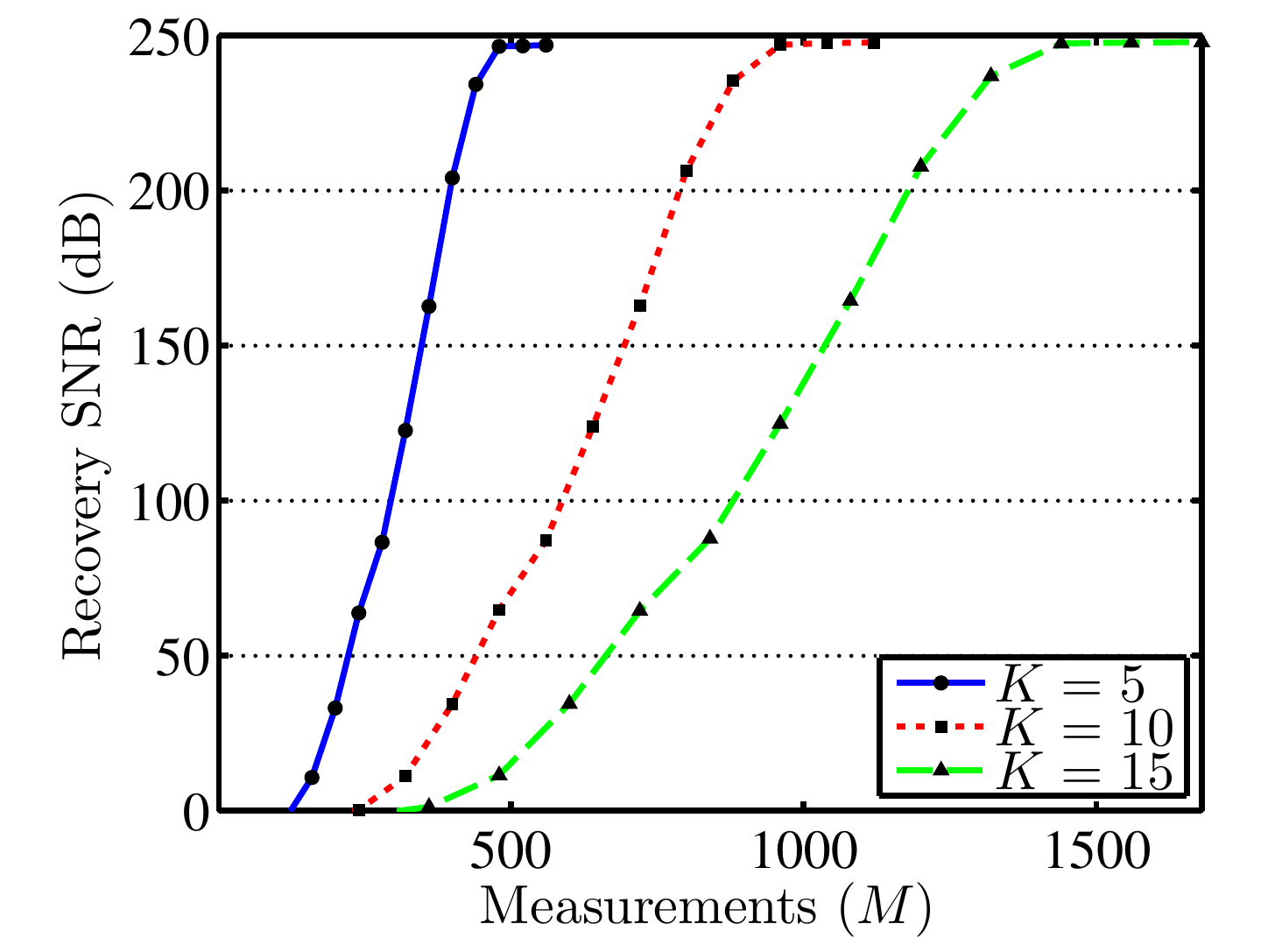}  \\
   {\small (a)} & \hspace*{.35in} {\small (b)}
   \end{tabular}
   \caption{\small \sl Recovery performance as a function of the number of measurements $M$ for $K=5$, $10$, and $15$ active bands.  (a)~Performance in terms of $\frac{M}{2N\slepw K}$, which measures oversampling with respect to the Landau rate.  (b)~Performance in terms of the actual number of measurements $M$.
   \label{fig:exp4}}
\end{figure*}

Before moving on, we note that an important caveat to these results is that as we vary $M$ and $K$, it is sometimes necessary to adjust the value of $k$ used in the construction of $\bPsi$.  This can easily be seen by considering the regime where the measurement rate is very close to the Landau rate.  Suppose $M = 2N\slepw K$ and that we knew {\em a priori} which bands were occupied.  In this case we could perform recovery using the appropriate submatrix of $\bPsi$, which would have $k K$ columns.  Thus, if $k > 2N\slepw$, then we would have only $M = 2N\slepw K$ measurements and would need to estimate $kK > M$ values---a situation we would clearly like to avoid.  Moreover, if we must also estimate the support from the data as well, then it is clearly best to avoid setting $k$ to be too large when $M$ is small. Thus, in the experiments for Figure~\ref{fig:exp4} (and for those below) we set $k$ based on a rule of thumb that we determined based on empirical performance.  Specifically, we set $k = 2N\slepw$ when $\frac{M}{2N\slepw K} \le 2$.  For $\frac{M}{2N\slepw K} \in [2,6]$, we linearly increase $k$ from $2N\slepw = 16$ to, in our case $k=38$ (which represents the rough point at which the first omitted eigenvalue $\lambda_{N,\slepw}^{(k)}$ reaches the level of machine precision).  For larger values of $M$ there is no performance gain by considering $k$ larger than this level.

\subsection{Alternative measurement architectures}

As promised in Section~\ref{ssec:matdesign}, we now evaluate the performance of our approach using more practical measurement schemes.
We compare the performance achieved using an i.i.d.\ Gaussian matrix to that achieved using an $M \times N$ random demodulator matrix~\cite{TroppLDRB_Beyond} and to a random sampling approach where $M$ samples are taken uniformly at random on the length-$N$ Nyquist grid.
The results are shown in Figure~\ref{fig:exp5}.  We observe very similar performance among the three approaches, with the Gaussian matrix performing best, then the random demodulator, followed by random sampling.  Note that while there is certainly a gap between these three approaches, it is also most pronounced in a regime which is likely irrelevant in practice, since it is rare to find applications where a recovery SNR of 200~dB or more is feasible.

\begin{figure}
   \centering
   \hspace*{-.4in} \includegraphics[width=\imgwidth]{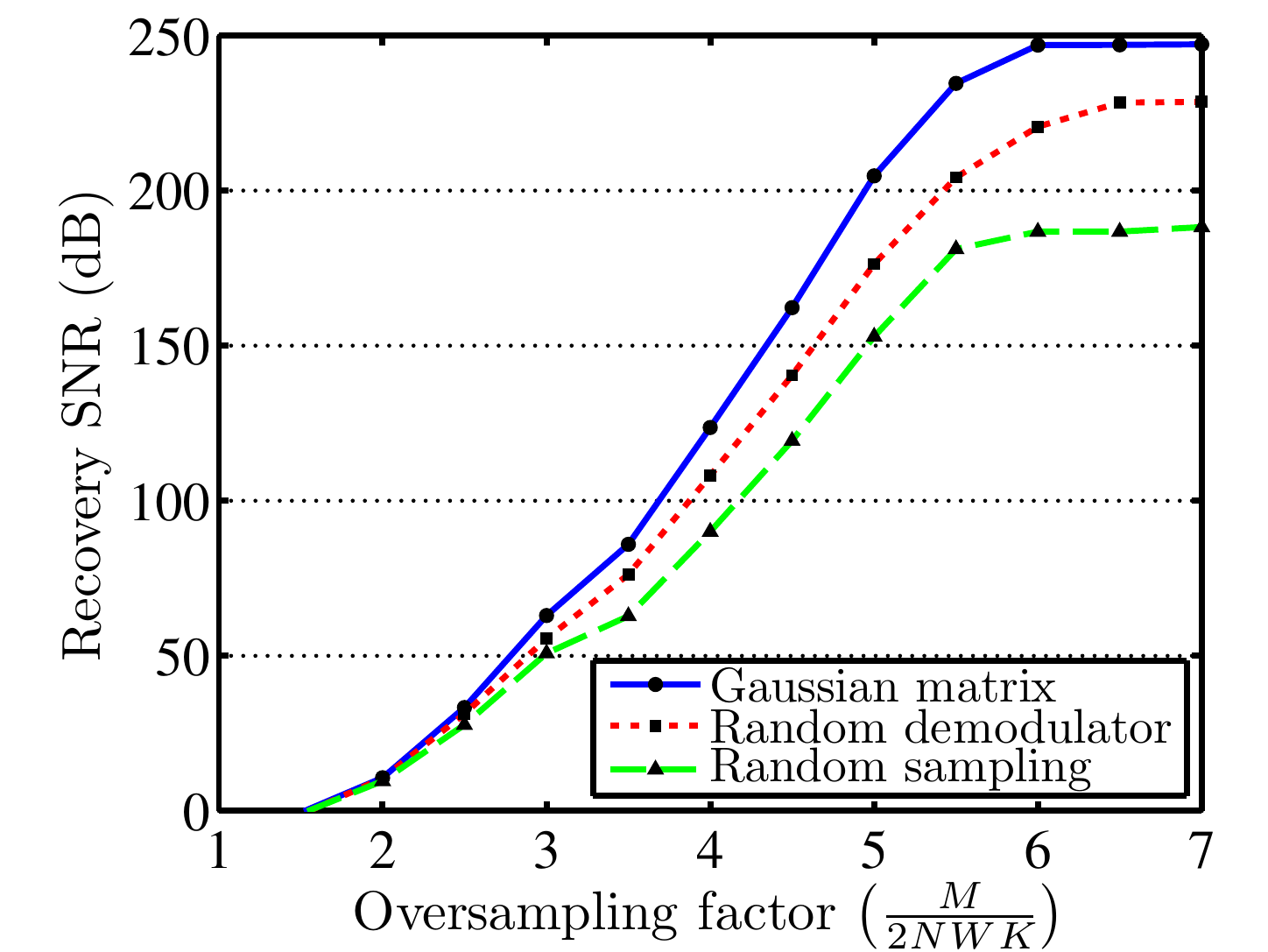}
   \caption{\small \sl Comparison of recovery SNR using i.i.d.\ Gaussian measurements, random demodulator measurements, and random samples.
   \label{fig:exp5}}
\end{figure}

\subsection{Comparison with DFT-based approach}

Finally, we close with a comparison between the performance achievable using our proposed multiband modulated DPSS dictionary $\bPsi$ and the performance achievable using the $N \times N$ DFT as a sparsifying basis.
We test both dictionaries using an $M \times N$ random demodulator measurement matrix~\cite{TroppLDRB_Beyond}, which was originally designed with frequency-sparse signals in mind (but under a multitone signal model, not a multiband model).
Due to DFT leakage effects, we believe that it would be inappropriate to break the DFT dictionary into bands and use a block-based recovery algorithm; thus when $\bPsi$ is the DFT basis, we use $\widehat{\bx} = \mathrm{CoSaMP}(\bA,\bPsi,\by,S)$ as our recovered signal estimate.
In order to give the DFT approach the best possible chance for success, for each value of $M$ we consider a range of possible values for the sparsity parameter $S$, selecting the value of $S$ that achieves the best possible performance according to an oracle.

The performance gap between these two approaches---illustrated in Figure~\ref{fig:exp6}(a)---is monumental.
Using the DFT dictionary, we never achieve a recovery SNR better than 20~dB over this range of measurements, whereas using the multiband modulated DPSS dictionary with block-based CoSaMP, the recovery SNR can exceed 200~dB.

To illustrate the typical difference between these two approaches, in Figure~\ref{fig:exp6}(b) we show in blue (solid line) the DTFT of a representative signal from one trial in this experiment.
Plotted in red (dashed line) is the DTFT of the signal vector recovered using the DPSS dictionary with $\frac{M}{2N\slepw K} = 4$; the recovery SNR is 109~dB and the recovered signal is visually indistinguishable from the original.
Plotted in green (dots) are the DFT coefficients of the signal vector recovered using the DFT dictionary with the same measurements; the recovery SNR is now only 13.4~dB.
While the DFT-based estimate does successfully capture the main peaks of each band, it naturally misses all of the sidelobes of each band (and despite the multiband nature of $x(t)$, these sidelobes are important for accurately representing the window $\bx$).
Moreover, due to modeling error, the DFT-based approach also results in a number of spurious artifacts in regions where there is no significant frequency content in the original signal.

We note that for the DFT-based approach in this experiment, the best estimate produced using $\mathrm{CoSaMP}(\bA,\bPsi,\by,S)$ was observed when setting $S = 85$.
On the other hand, the DPSS approach (according to our rule of thumb) selects $k = 27$, so that in this approach there are a total of $kK = 135$ free parameters.
It is not surprising that the DPSS approach results in superior performance since the model has so many extra dimensions.
What {\em is} potentially surprising is that if we set $S = 135$, the DFT approach exhibits a complete breakdown and is unable to make any practical use of these extra degrees of freedom.

\begin{figure}
   \centering
   \begin{tabular}{cc}
    \hspace*{-.4in} \includegraphics[width=\imgwidth]{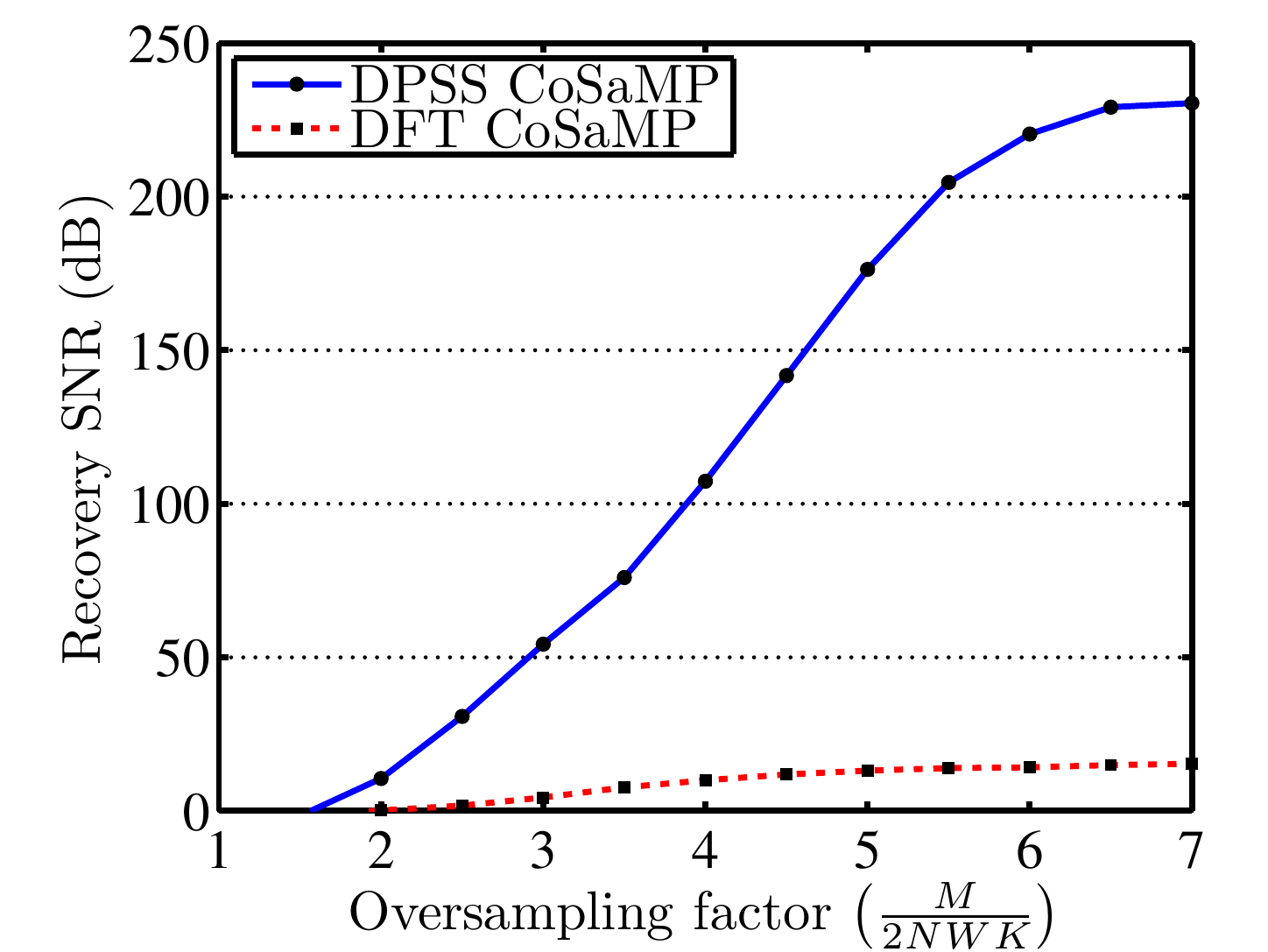} & \includegraphics[width=\imgwidth]{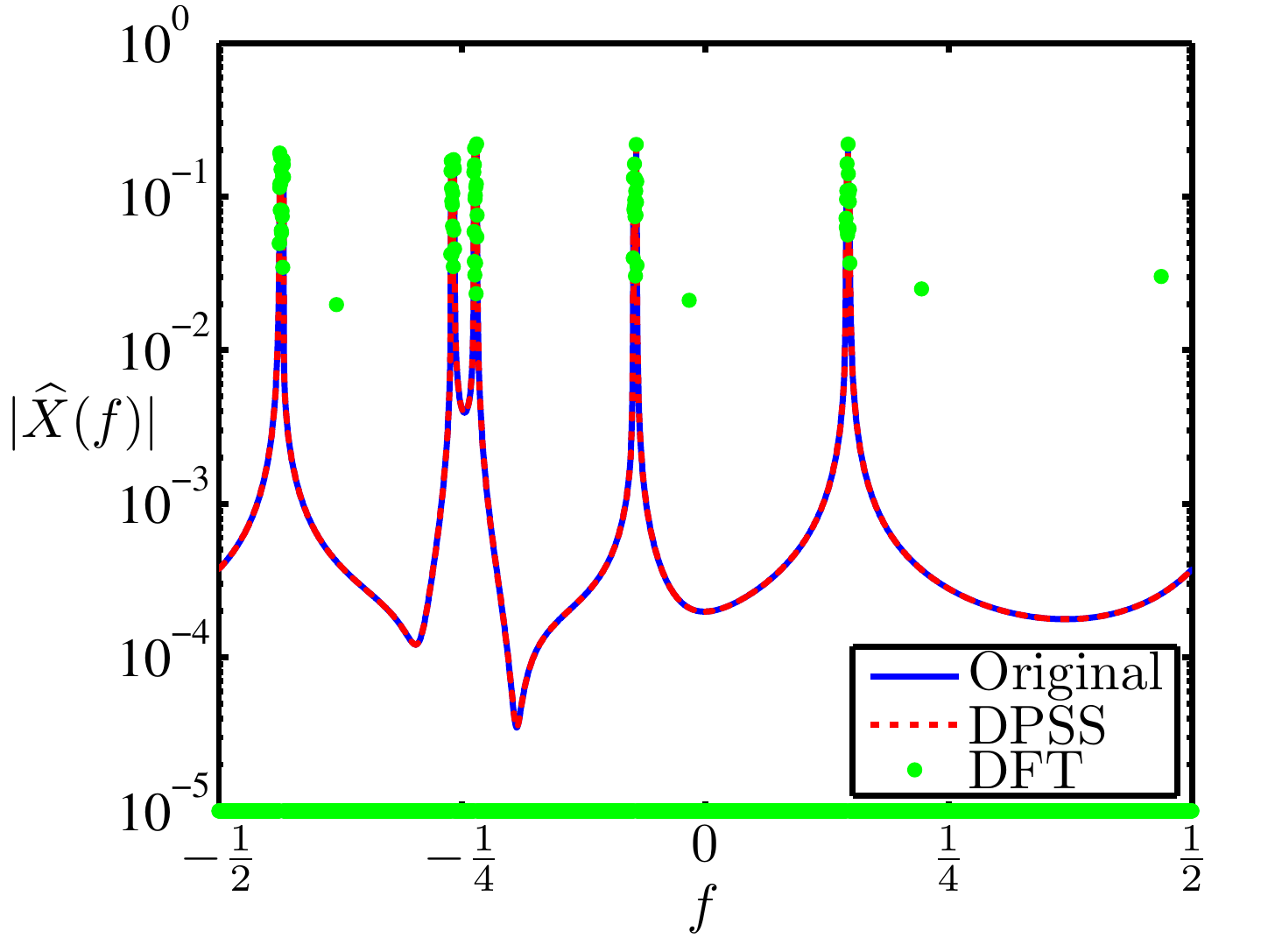}  \\
   \hspace*{-.05in} {\small (a)} & \hspace*{.3in} {\small (b)}
   \end{tabular}
   \caption{\small \sl  Comparison of recovery performance between block-based CoSaMP with the multiband modulated DPSS dictionary and standard CoSaMP using the DFT basis. (a) Performance comparison as a function of $\frac{M}{2N\slepw K}$.  (b) The DTFT of a representative signal, and the recovered estimate using each approach.
   \label{fig:exp6}}
\end{figure} 

\section{Conclusions}
\label{sec:conc}

There are likely to be many ways that one could bridge the gap between the discrete, finite framework of CS and the problem of acquiring a continuous-time signal.
In this paper, using a dictionary constructed from a sequence of modulated DPSS basis elements, we have argued that when dealing with finite-length windows of Nyquist-rate samples, it is possible to map the multiband analog signal model---in a very natural way---into a finite-dimensional sparse model.
This allows us to apply many of the standard theoretical and algorithmic tools from CS to the problem of recovering such a sample vector from compressive measurements.
Moreover, the sparse signals that we encounter in this model actually have a structured sparsity pattern (namely, block-sparsity), which we can exploit through the use of specialized ``model-based'' CS recovery algorithms.
Although our recovery bounds are qualified---we have showed that {\em most} finite-length sample vectors arising from multiband analog signals can be {\em highly accurately} recovered from a number of measurements that is proportional to the Landau rate---these qualifiers are ultimately necessary.
Moreover, we have demonstrated through a series of careful experiments that the multiband modulated DPSS dictionary can provide a far superior alternative to the DFT for sparse recovery.
Our experiments also confirm that certain practical measurement architectures such as the random demodulator, which was previously viewed only as a mechanism for capturing multitone signals~\cite{lexa2011reconciling}, can indeed be used to efficiently capture multiband signals.

As the reader will note from our discussions regarding the selection of the number of columns per band $k$, the computation of $\cP(\bx,K)$, etc., there are certain subtle challenges in choosing the proper dictionary design and implementing an effective recovery algorithm.
However, our experimental results do confirm that it is possible to navigate these waters.
We have also aimed to give some practical insight into the proper choice of design parameters, and we have made all of the software for our simulations available for download at \url{http://www.mines.edu/~mwakin/software/}.
Ultimately, in addition to the open algorithmic questions discussed in Section~\ref{sec:sims}, there are many open questions concerning the most effective way to construct a multiband DPSS dictionary; one could imagine possible advantages to considering multiscale dictionaries (see also~\cite{sejdic2008channel}), adaptive band sizes, overlapping bands, etc.
We leave the consideration of these questions to future work.

It is worth emphasizing that the remarkable efficiency of our DPSS dictionary for sparsely approximating finite-length multiband sample vectors is owed entirely to the eigenvalue concentration behavior described in Section~\ref{sec:eigconc}.
It would be interesting to explore other possible problem domains (beyond multiband signal models) where similar eigenvalue concentration behavior holds.
Again, we leave the consideration of such questions to future work.

We conclude by noting that applications of sparse representations abound, and so there are many possible settings outside of CS where the multiband modulated DPSS dictionary could be useful for processing finite-length sample vectors arising from multiband signals.
For example, there are several possible applications in {\em compressive signal processing}~\cite{DavenBWB_Signal}, where one can attempt to answer certain questions about a signal vector directly from compressive measurements of that vector without having to actually perform signal recovery.
One specific problem in this area could involve cancellation of an interfering signal that has corrupted the measurements.
Suppose $s(t)$ is an analog signal of interest (not necessarily obeying a multiband model), and $i(t)$ is a multiband (or even just single-band) interferer supported on bands indexed by the set $\cI$.
Let $\bs$ and $\bi$ denote the sample vectors arising from these two signals, and suppose we collect compressive measurements $\by = \bA (\bs + \bi)$. %
Then because $\bi$ is concentrated in $\cR(\bPsi_{\cI})$, it follows that $\bA \bi$ is concentrated in $\cR(\widetilde{\bA}_{\cI})$, where $\widetilde{\bA} = \bA \bPsi$.
One can therefore define an orthogonal projection matrix $\bP := \bI - \widetilde{\bA}_{\cI} \widetilde{\bA}_{\cI}^\dag$.
By applying $\bP$ to the measurement vector $\by$, one can remove virtually all of the influence of the interferer, since $\bP \bA \bi \approx \bzero$ even for very strong interferers, and therefore $\bP \by = \bP \bA \bs  + \bP \bA \bi \approx \bP \bA \bs$.
From these processed measurements $\bP \by$ one can attempt to recover $\bs$ or answer various compressive-domain questions that do not require signal recovery.
It can even be possible to derive RIP bounds for $\bP \bA$ and thus provide guarantees on the performance of these techniques.
We refer the reader to~\cite{DavenBB_Compressive,DavenSSBWB_wideband} for additional discussion of the interference cancellation problem and an example using a preliminary version of the modulated DPSS dictionary.

\section*{Acknowledgements}

We are very grateful to Richard Baraniuk, Emmanuel Cand\`{e}s, and Justin Romberg for helpful discussions during the formative stages of this work and to Marco Duarte, Armin Eftekhari, Mark Embree, Chinmay Hegde, Deanna Needell, Yaniv Plan, and Borhan Sanandaji for their input on certain technical points. A preliminary version of this work was first presented at the 2011 Workshop on Signal Processing with Adaptive Sparse Structured Representations (SPARS'11)~\cite{DavenW_Reconstruction}.

\appendix

\section{Review of the Karhunen-Loeve (KL) Transform}
\label{sec:klreview}

We briefly review the basic ideas behind the KL transform~\cite{stark1986probability}.
Suppose that $\br \in \complex^N$ is a random vector with mean zero and a known autocorrelation matrix denoted by $\bR$ with entries defined as
$$
\bR[m,n] = \expval{\overline{\br[n]} \br[m]},
$$
where the overline denotes complex conjugation. Next suppose that we would like to find, for some fixed $k \in \{1,2,\dots,N\}$, a $k$-dimensional subspace $\bQ$ of $\complex^N$ that best captures the energy of $\br$ in expectation. That is, we wish to find the $\bQ$ that minimizes
$$
\expval{\| \br - \bP_{\bQ} \br\|_2^2}
$$
over all $k$-dimensional subspaces of $\complex^N$. The optimal solution to this minimization problem can be found by computing an eigendecomposition of $\bR$. In particular, let
$$
\bR = \bU \bLambda_R \bU^H
$$
denote the eigendecomposition of $\bR$, with the eigenvalues $\lambda_0(\bR) \ge \lambda_1(\bR) \ge \cdots \ge \lambda_{N-1}(\bR) \ge 0$ along the main diagonal of $\bLambda_R$. Let $\bU_k$ denote the $N \times k$ matrix that results from taking the first $k$ columns of $\bU$. The columns of $\bU_k$ provide an orthonormal basis for the optimal $\bQ$, and thus we will have $\bP_{\bQ} = \bU_k \bU^H_k$. For this optimal choice of $\bQ$, we will also have
$$
\expval{\| \br - \bP_{\bQ} \br\|_2^2} = \sum_{\ell = k}^{N-1} \lambda_\ell(\bR).
$$

\section{Proof of Theorem~\ref{thm:sampsinusoid}}
\label{pf:sampsinusoid}
\begin{proof}
Let $f'$ denote a random variable with uniform distribution on $[f_c-\slepw,f_c+\slepw]$, let $\theta'$ denote a random variable with uniform distribution on $[0,2\pi)$, and let $\br = \br(f',\theta') := e^{j \theta'} \sampe{f'}$ denote a random vector in $\complex^N$ defined in terms of $f'$ and $\theta'$. Then we can write
\begin{align*}
\frac{1}{2\slepw} \cdot \int_{f_c-\slepw}^{f_c+\slepw} \| \sampe{f} - \bP_\bQ \sampe{f} \|_2^2 \; df
%&=  \int_{f_c-\slepw}^{f_c+\slepw} \| \sampe{f'} - \bP_\bQ \sampe{f'} \|_2^2 \frac{1}{2\slepw} df' \\
&=  \int_{f_c-\slepw}^{f_c+\slepw} \| \sampe{f'} - \bP_\bQ \sampe{f'} \|_2^2 p_{f'}(f') df' \\
&=  \int_0^{2\pi} \int_{f_c-\slepw}^{f_c+\slepw} \| e^{j \theta'}( \sampe{f'} - \bP_\bQ \sampe{f'} ) \|_2^2 p_{f'}(f') \frac{1}{2\pi} df' \; d\theta' \\
&=  \int_0^{2\pi} \int_{f_c-\slepw}^{f_c+\slepw} \| e^{j \theta'} \sampe{f'} - \bP_\bQ (e^{j \theta'} \sampe{f'} ) \|_2^2 p_{f'}(f') p_{\theta'}(\theta') df' \; d\theta' \\
&=  \expval{\| \br - \bP_\bQ \br \|_2^2}.
\end{align*}
We next verify that the random vector $\br$ has mean zero, i.e., for each $n \in \{ 0,1,\dots,N-1\}$ we have
\begin{align*}
\expval{\br[n]} &= \int_0^{2\pi} \int_{f_c-\slepw}^{f_c+\slepw} e^{j \theta'} \sampe{f'}[n] p_{f'}(f') p_{\theta'}(\theta') \; df'\; d\theta'  \\
&= \int_0^{2\pi} \int_{f_c-\slepw}^{f_c+\slepw} e^{j \theta'} e^{j 2\pi f' n} \frac{1}{4\slepw\pi} \; df'\; d\theta'\\
&= \frac{1}{4\slepw\pi} \cdot \int_{f_c-\slepw}^{f_c+\slepw} e^{j 2\pi f' n} \int_0^{2\pi} e^{j \theta'}  \; d\theta' \; df' \\
%&= \frac{1}{4\slepw\pi}  \int_{f_c-\slepw}^{f_c+\slepw} e^{j 2\pi f' n} \cdot 0 \; df' \\
&= 0.
\end{align*}
Next, observe that the random vector $\br$ has autocorrelation matrix $\bR$ with entries:
\begin{align*}
\bR[m,n] &= \expval{ \overline{\br[n]} \br[m]} \\
&= \expval{ \overline{\sampe{f'}[n]} e^{-j \theta'} e^{j \theta'} \sampe{f'}[m]} \\
%&= \expval{ \overline{\sampe{f'}[n]} \sampe{f'}[m] } \\
&= \frac{1}{2\slepw} \cdot \int_{f_c-\slepw}^{f_c+\slepw}  e^{-j 2\pi f' n} e^{j 2\pi f' m} df' \\
%&= \frac{1}{2\slepw} \int_{f_c-\slepw}^{f_c+\slepw}  e^{j 2\pi f' (m-n)} df' \\ % change of vars: let f'' = f' - f
%&= \frac{1}{2\slepw} \int_{-\slepw}^{\slepw}  e^{j 2\pi (f''+f) (m-n)} df'' \\
%&= \frac{e^{j 2\pi f (m-n)}}{2\slepw} \int_{-\slepw}^{\slepw}  e^{j 2\pi f'' (m-n)} df'' \\
%&= \frac{e^{j 2\pi f (m-n)}}{j4\slepw\pi (m-n)} \left( e^{j 2\pi \slepw (m-n) } - e^{-j 2\pi \slepw  (m-n) } \right) \\
%&= \frac{e^{j 2\pi f (m-n)}}{j4\slepw\pi (m-n)}  \left( 2j \sin(2\pi \slepw (m-n) ) \right) \\
&= \frac{1}{2\slepw} \cdot e^{j 2\pi f_c m} \cdot 2\slepw \sinc{ 2\slepw(m-n) } \cdot e^{-j 2\pi f_c n}.
\end{align*}
%In the sixth line above we have made the change of variables $f'' := f' - f$.
% where $\mathrm{sinc}(x) := \frac{\sin(x)}{x}$.
Therefore, we can write
\begin{equation}
\bR = \frac{1}{2\slepw} \cdot \bE_{f_c} \bB_{N,\slepw} \bE^H_{f_c},
\label{eq:R}
\end{equation}
where $\bE_{f_c}$ is as defined in~\eqref{eq:modulatordef} and $\bB_{N,\slepw}$ is as defined in (\ref{eq:sdef}). Plugging the eigendecomposition for $\bB_{N,\slepw}$ from~\eqref{eq:Seigdecomp} into~\eqref{eq:R}, we obtain
$$
\bR = \frac{1}{2\slepw} \cdot \bE_{f_c} \bS_{N,\slepw} \bLambda_{N,\slepw} \bS^H_{N,\slepw} \bE^H_{f_c}.
$$
From this eigendecomposition and standard results concerning the KL transform (see Appendix~\ref{sec:klreview}), it follows that the optimal $k$-dimensional subspace $\bQ$ will be spanned by the first $k$ columns of $\bE_{f_c} \bS_{N,\slepw}$. Furthermore, this choice of $\bQ$ yields
$$
\frac{1}{2\slepw} \cdot \int_{f_c-\slepw}^{f_c+\slepw} \| \sampe{f} - \bP_\bQ \sampe{f} \|_2^2 \; df = \expval{\| \br - \bP_\bQ \br \|_2^2} = \sum_{\ell = k}^{N-1} \lambda_\ell(\bR) = \frac{1}{2\slepw} \sum_{\ell = k}^{N-1} \lambda_{N,\slepw}^{(\ell)},
$$
as desired.
\end{proof}

\section{Proof of Theorem~\ref{thm:rproc}}
\label{pf:rproc}
\begin{proof}

The sequence $x[n] := x(n \tsamp), n \in \integers$ is a discrete-time, zero-mean, wide sense stationary random process with power spectrum equal to
\begin{equation*}
P_x(f) = \left\{ \begin{array}{ll} \frac{1}{2\slepw},&  f_c - \slepw \le f \le f_c + \slepw, \\ 0, & \mathrm{otherwise}, \end{array} \right.
\end{equation*}
and the vector $\bx \in \complex^N$ equals the restriction of $x[n]$ to the indices $n = 0,1,\dots,N-1$. The inverse DTFT of the power spectrum $P_x(f)$ gives the autocorrelation function for $x[n]$:
$$
r_x[n] = e^{j 2 \pi n f_c} \cdot \sinc{2\slepw n}.
$$
It follows that the random vector $\bx$ has mean zero and an autocorrelation matrix $\bR$ given by~(\ref{eq:R}). Thus, $\bx$ has exactly the same autocorrelation structure as the random vector $\br$ we considered in Appendix~\ref{pf:sampsinusoid}, and so the same KL transform analysis applies. Finally, we can also compute
$$
\expval{\| \bx \|_2^2} = \sum_{n = 0}^{N-1} \expval{|\bx[n]|^2} = N r_x[0] = N.
$$
\end{proof}

\section{Proof of Theorem~\ref{thm:sampbandpass}}
\label{pf:sampbandpass}
\begin{proof}

First, let us define $\cE_{f_c}$ to be an operator that modulates a discrete-time signal by $f_c$; more specifically,
$$
\cE_{f_c}(s)[n] = e^{j 2\pi f_c n} s[n]
$$
for all $n \in \integers$. Since the modulated DPSS vectors form an orthonormal basis for $\complex^N$, and since the discrete-time signal $x[n]$ is time-limited, we can write $x[n]$ as
$$
x = \sum_{\ell=0}^{N-1} \balpha[\ell] \cE_{f_c} \cT_N(s_{N,\slepw}^{(\ell)}),
$$
where the coefficients $\balpha$ are given by $\balpha[\ell] = \langle x, \cE_{f_c} \cT_N(s_{N,\slepw}^{(\ell)}) \rangle = \langle \bx, \bE_{f_c} \bs_{N,\slepw}^{(\ell)} \rangle$ for $\ell = 0,1,\dots, N-1$ and satisfy
$\|\balpha\|_2^2 = \|x\|_2^2 = \|\bx\|_2^2$. By linearity, we can also write
$$
\cB_{f_c,\slepw} x = \sum_{\ell=0}^{N-1} \balpha[\ell] \cB_{f_c,\slepw} \cE_{f_c} \cT_N(s_{N,\slepw}^{(\ell)}),
$$
where the functions $\cB_{f_c,\slepw} \cE_{f_c} \cT_N(s_{N,\slepw}^{(\ell)})$ are modulated DPSS's and therefore remain orthogonal. Therefore,
$$
(1-\delta) \|x\|_2^2 \le  \| \cB_{f_c,\slepw} x \|^2_2 = \sum_{\ell=0}^{N-1} |\balpha[\ell]|^2 \|\cB_{f_c,\slepw} \cE_{f_c}  \cT_N(s_{N,\slepw}^{(\ell)})\|_2^2 = \sum_{\ell=0}^{N-1} |\balpha[\ell]|^2 \lambda_{N,\slepw}^{(\ell)}.
$$
Now note that if $k = 2N\slepw(1+\epsilon)$, then
\begin{align*}
(1-\delta) \|\bx\|_2^2 &\le \sum_{\ell=0}^{k-1} |\balpha[\ell]|^2 \lambda_{N,\slepw}^{(\ell)} + \sum_{\ell=k}^{N-1} |\balpha[\ell]|^2 \lambda_{N,\slepw}^{(\ell)} \\
 &\le \sum_{\ell=0}^{k-1} |\balpha[\ell]|^2 + \sum_{\ell=k}^{N-1} \|\bx\|_2^2 \lambda_{N,\slepw}^{(\ell)} \\
 &\le \| P_{\bQ} \bx \|_2^2 +  \|\bx\|_2^2 N C_3 e^{- C_4 N} \\
 &= (\|\bx\|_2^2 - \| \bx - P_{\bQ} \bx \|_2^2) + \|\bx\|_2^2 N C_3 e^{- C_4 N}
\end{align*}
for $N \ge N_1$. Rearranging terms in the above inequality yields~(\ref{eq:sampbandpassdet}).
\end{proof}

\section{Proof of Theorem~\ref{thm:rprocmb}}
\label{pf:rprocmb}
\begin{proof}

For each $i \in \cI$, the sequence $x_i[n] := x_i(n \tsamp), n \in \integers$ is a discrete-time, zero-mean, wide sense stationary random process with power spectrum equal to
\begin{equation*}
P_{x_i}(f) = \left\{ \begin{array}{ll} \frac{1}{2\slepw K},&  f \in \left[-\frac{\bnyq\tsamp}{2} + i 2\slepw, -\frac{\bnyq\tsamp}{2}+ (i+1)2\slepw \right], \\ 0, & \mathrm{otherwise}. \end{array} \right.
\end{equation*}
These discrete-time sequences will also be independent and jointly wide sense stationary.
%
% Note: the assumptions of independence and wide sense stationarity guarantee that the spectrum of x is just a sum of all the individual spectra.
%
Defining $\bx_i \in \complex^N$ to be the restriction of $x_i[n]$ to the indices $n = 0,1,\dots,N-1$, we have $\bx = \sum_{i \in \cI} \bx_i$ and $P_{x}(f) = \sum_{i \in \cI} P_{x_i}(f)$. We can compute
$$
\expval{\| \bx \|_2^2} = \sum_{n = 0}^{N-1} \expval{|\bx[n]|^2} = \sum_{n = 0}^{N-1} \sum_{i,\ell \in \cI} \expval{ \bx_i[n] \overline{\bx_\ell[n]}} =
\sum_{n = 0}^{N-1} \sum_{i \in \cI} \expval{ |\bx_i[n]|^2} = N \cdot K \cdot \frac{1}{K} = N.
$$
%We also note that for any vector $\bv \in \complex^N$ and any $i \in \cI$,
%$$
%\| \bv - \bP_\bPsi \bv \|_2^2 \le \| \bv - \bP_\bPsi_i \bv \|_2^2
%$$
%because the columns of $Q_i$ belong to $Q$ as well.
Also, we have
\begin{align*}
\expval{\| \bx - \bP_{\bPsi_{\cI}} \bx \|_2^2}
&= \expval{\left\| \left(\sum_{i \in \cI} \bx_i\right) - \bP_{\bPsi_{\cI}} \left(\sum_{i \in \cI} \bx_i\right) \right\|_2^2} \\
&= \expval{\left\| \sum_{i \in \cI} (\bx_i - \bP_{\bPsi_{\cI}} \bx_i) \right\|_2^2} \\
&\le \expval{\left(\sum_{i \in \cI}\| \bx_i - \bP_{\bPsi_{\cI}} \bx_i \|_2\right)^2} \\
&\le \expval{\left(\sum_{i \in \cI}\| \bx_i -  \bP_{\bPsi_i} \bx_i \|_2\right)^2} \\
&\le \expval{K \sum_{i \in \cI}\| \bx_i -  \bP_{\bPsi_i} \bx_i \|_2^2} \\
&= K \sum_{i \in \cI} \expval{\| \bx_i -  \bP_{\bPsi_i} \bx_i \|_2^2} \\
&= K \cdot K \cdot \frac{1}{2\slepw K} \sum_{\ell = k}^{N-1} \lambda_{N,\slepw}^{(\ell)},
\end{align*}
where the third line follows from the triangle inequality, the fourth line follows because the columns of each $\bPsi_i$ belong to $\bPsi_{\cI}$ as well, the fifth line follows because $\|\bh\|^2_1 \le K \|\bh\|^2_2$ for any $K$-dimensional vector $\bh$, and the final line follows from Theorem~\ref{thm:rproc}.
\end{proof}

\section{Proof of Lemma~\ref{lem:coherence}}
\label{pf:coherence}

\begin{proof}
Let $\bPsi_{i_1}$ and $\bPsi_{i_2}$ denote the blocks from which $\bq_1$ and $\bq_2$, respectively, are drawn. If $i_1 = i_2$, we have already established at the beginning of Section~\ref{sec:mod} that $\langle \bq_1, \bq_2 \rangle = 0$. Thus, we assume henceforth that $i_1 \neq i_2$.

Let $\cE_{f_i}$ denote an operator that modulates a discrete-time signal by $f_i$; more specifically, $\cE_{f_i}(s)[n] = e^{j 2\pi f_i n} s[n]$ for all $n \in \integers$.
Let $\cB_{f_i,\slepw}$ denote an orthogonal projection operator that takes a discrete-time signal, bandlimits its DTFT to the frequency range $f \in [f_i-\slepw, f_i+\slepw]$, and returns the corresponding signal in the time domain.
Finally, define $\cB^c_{f_{i_1},f_{i_2},\slepw} := I - \cB_{f_{i_1},\slepw} - \cB_{f_{i_2},\slepw}$, which is also an orthogonal projection for finite-energy signals because $|f_{i_1} - f_{i_2}| \ge 2\slepw$.
%
%\footnote{Technically, this is only an orthoprojection if we apply it to finite-energy signals; we could run into trouble applying $\cB^c_{f_1,f_2,\slepw}$ to a pure sinusoid at frequency $f_1+\slepw$ if $f_1+\slepw = f_2 - \slepw$, for example. However, we are only applying those operators to DPSS functions in this section, and those are finite energy.}
%
For some $\ell ,\ell' \in \{0,1,\dots,2N\slepw(1-\epsilon)-1\}$, we can write
\begin{align*}
\left\langle \bq_1, \bq_2 \right\rangle &= \left\langle \bE_{f_{i_1}} s_{N,\slepw}^{(\ell)}, \bE_{f_{i_2}} s_{N,\slepw}^{(\ell')} \right\rangle \\ &= \left\langle \cE_{f_{i_1}}\cT_N s_{N,\slepw}^{(\ell)}, \cE_{f_{i_2}}\cT_N s_{N,\slepw}^{(\ell')} \right\rangle  \\
&=
 \left\langle \cB_{f_{i_1},\slepw} \cE_{f_{i_1}}\cT_N s_{N,\slepw}^{(\ell)} + \cB_{f_{i_2},\slepw}  \cE_{f_{i_1}}\cT_N s_{N,\slepw}^{(\ell)} +  \cB^c_{f_{i_1},f_{i_2},\slepw} \cE_{f_{i_1}}\cT_N s_{N,\slepw}^{(\ell)}, \right. \\
& ~~~ \left. \cB_{f_{i_1},\slepw}\cE_{f_{i_2}}\cT_N s_{N,\slepw}^{(\ell')}  + \cB_{f_{i_2},\slepw} \cE_{f_{i_2}}\cT_N s_{N,\slepw}^{(\ell')}  + \cB^c_{f_{i_1},f_{i_2},\slepw}\cE_{f_{i_2}}\cT_N s_{N,\slepw}^{(\ell')} \right\rangle \\
&= \left\langle \cB_{f_{i_1},\slepw} \cE_{f_{i_1}}\cT_N s_{N,\slepw}^{(\ell)}, \cB_{f_{i_1},\slepw}\cE_{f_{i_2}}\cT_N s_{N,\slepw}^{(\ell')} \right\rangle + \left\langle \cB_{f_{i_2},\slepw}  \cE_{f_{i_1}}\cT_N s_{N,\slepw}^{(\ell)}, \right. \\
& ~~~ \left. \cB_{f_{i_2},\slepw} \cE_{f_{i_2}}\cT_N s_{N,\slepw}^{(\ell')}  \right\rangle + \left\langle  \cB^c_{f_{i_1},f_{i_2},\slepw} \cE_{f_{i_1}}\cT_N s_{N,\slepw}^{(\ell)}, \cB^c_{f_{i_1},f_{i_2},\slepw}\cE_{f_{i_2}}\cT_N s_{N,\slepw}^{(\ell')} \right\rangle.
\end{align*}
From (\ref{eq:dpssnorm}), (\ref{eq:apbl}), and the fact that $|f_{i_2} - f_{i_1}| \ge 2\slepw$, it follows that
$$
| \left\langle \bq_1, \bq_2 \right\rangle | \le \sqrt{\lambda_{N,\slepw}^{(\ell)}} \cdot \sqrt{1-\lambda_{N,\slepw}^{(\ell')}} + \sqrt{1-\lambda_{N,\slepw}^{(\ell)}} \cdot \sqrt{\lambda_{N,\slepw}^{(\ell')}} + \sqrt{1-\lambda_{N,\slepw}^{(\ell)}} \cdot \sqrt{1-\lambda_{N,\slepw}^{(\ell')}}.
$$
Finally,~\eqref{eq:coherencemain} follows by bounding the $\sqrt{\lambda}$ terms in this expression by $1$ and by bounding the $\sqrt{1-\lambda}$ terms in this expression using (\ref{eq:eig2nbminus}).
\end{proof}

\bibliographystyle{plain}
\footnotesize
\bibliography{preamble,mainbib,supplementalbib}

\end{document}